\newcommand{\lca}{{\rm lca}}
\newcommand{\diam}{{\rm diam}}
\newcommand{\C}{\mathcal{C}}
\newcommand{\M}{\mathcal{M}}
\newcommand{\N}{\mathcal{N}}
\newcommand{\R}{\mathbb{R}}
\newcommand{\e}{\varepsilon}
\newcommand{\um}{{\rm UM}}
\newcommand{\comp}{{\rm comp}}
\newcommand{\XX}{\psi} 
\newcommand{\xx}{\psi} 
\newcommand{\E}{{\mathbb{E}}}
\newcommand{\mommit}[1]{}
\theoremstyle{plain}
\newtheorem{theorem}{Theorem}
\newtheorem{lemma}{Lemma}
\newtheorem{corollary}[lemma]{Corollary}
\newtheorem{proposition}[lemma]{Proposition}
\theoremstyle{definition}
\newtheorem{definition}{Definition}
\newcommand{\bR}{{\bar{R}}}
\newcommand{\bP}{{\bar{P}}}
\newcommand{\bQ}{{\bar{Q}}}
\newcommand{\bsize}{{w^{*}}}
\newcommand{\dist}{{\rm dist}}
\newcommand{\ddiam}{{\Delta}}
\newcommand{\DDelta}{{\hat{\Delta}}}
\newcounter{tmp}
\begin{document}

\title{Advances in Metric Ramsey Theory and its Applications \\
\footnote{This is paper is still in stages of preparation, this version is not intended for distribution. A preliminary version of this article was written by the author in 2006, and was presented in the 2007 ICMS Workshop on Geometry and Algorithms \cite{B07-icms}. The basic result on constructive metric Ramsey decomposition and metric Ramsey theorem has also appeared in the author's lectures notes, e.g. \cite{B11-course}. } }

\date{}

\author{Yair Bartal\thanks{School of Engineering and Computer Science, Hebrew
University, Israel.
        Email: yair@cs.huji.ac.il.
    Supported in part by a grant from the Israeli Science Foundation
(1817/17).
    }
    }

\maketitle

\begin{abstract}
  Metric Ramsey theory is concerned with finding large well-structured subsets of more complex metric spaces. For finite metric spaces this problem was first studies by Bourgain, Figiel and Milman \cite{bfm}, and studied further in depth by Bartal et. al \cite{BLMN03}. In this paper we provide deterministic constructions for this problem via a novel notion of \emph{metric Ramsey decomposition}. This method yields several more applications, reflecting on some basic results in metric embedding theory.

 The applications include various results in metric Ramsey theory including
the first
\emph{deterministic}
construction yielding Ramsey theorems with tight bounds, 
a well as stronger theorems and properties, implying
appropriate distance oracle applications.

In addition, this decomposition provides the first deterministic Bourgain-type embedding of finite metric spaces into Euclidean space, and an optimal multi-embedding into ultrametrics, 
thus improving its applications in approximation and online algorithms.

The decomposition presented here, the techniques and its consequences have already been used in recent research in the field of metric embedding for various applications.
\end{abstract}

\thispagestyle{empty}
\newpage
\setcounter{page}{1}

\section{Introduction}\label{introduction}

Metric embedding has played an important role in recent decades in the development of algorithms
and efficient data structures, within Computer Science, and has also contributed 
significantly to the development of fundamental mathematical tools (see \cite{indyk,linial,indyk_dist} for surveys).

 Metric Ramsey Theory is asking whether complex metric spaces contain 
large well-structured subspaces, which embed with low distortion into 
a class of special metric spaces, e.g. Euclidean space. A basic theorem 
in this field \cite{BLMN03,MN06} states that this is indeed true when the 
target class is that of ultrametric spaces (which are in particular Euclidean).

In this paper we provide the first deterministic tight version of this theorem.
Our method is based on a novel construct called \emph{metric Ramsey decomposition}  
which we apply to obtain novel versions of this theorem its applications, as well as reflecting on other basic results in metric embedding theory.

	\begin{itemize}
\item {\bf Metric Ramsey Decompositions:} Our main contribution is a novel tool which can viewed as a deterministic counterpart to
padded probabilistic partitions, a fundamental tool in many metric embedding results (e.g.,
\cite{B96,rao,FRT03,B98,KLMN04,ABN06}).
The method of our the
construction is very simple and combines ideas from
\cite{B04,BBM06,BLMN03,MN06} yielding a construction that is
\emph{elementary} and yet very powerful.
Indeed, we show that this basic notion has many applications related to difference aspects of metric embedding theory.
Essentially, they can replace
probabilistic partitions in several fundamental metric embedding
applications yielding first explicit deterministic and efficient
constructions of embeddings into $L_p$ and ultrametrics. In
particular, we can obtain explicit constructions of embeddings of
\cite{MN06,ABN15,ABN08,ABN07-stoc}. Moreover, we obtain
several new and improved embedding results.

\item {\bf Metric Ramsey Theorems and Distance Oracles:} The decomposition naturally arises within the context of the {\sl metric Ramsey
problem}:  Given an arbitrary metric space,
the goal is to find a large subspace that is
highly structured, that is a subspace which embeds with low
distortion into some natural class of highly structured metric
spaces. Of particular interest is the class of ultrametrics (in
particular, they embed isometrically in Euclidean space). 
This problem has been first addressed by Bourgain, Figiel and Milman
\cite{bfm}, motivated by its relation to Dvoretzky's theorem. 
The problem has been further studied
in a sequence of papers partially motivated by
applications in computer science
\cite{krr,bkrs,BBM06}.
Bartal, Linial, Mendel, and Naor \cite{BLMN03} 
obtained nearly tight bounds 
and Mendel and Naor \cite{MN06} gave a
randomized construction which obtained the asymptotically tight
bound for large distortions. The theorem states that every $n$
point metric space contains a subspace of size $n^{1-1/t}$ which
embeds in an ultrametric with distortion $O(t)$, $t>1$.

 In this paper we provide the first \emph{deterministic}
construction with tight bounds improving on the previously best known bounds of \cite{BLMN03}. Our construction provides distortion of $8t$ (for integer values), which is nearly the best known bound via a deterministic construction. Subsequent to our work a similar construction was shown to obtain distortion $8t-2$ \cite{ACEFN18}. The best probabilistic construction (also achieved following this work) obtains distortion $2e t$ \cite{NT12}.


\item {\bf Strong metric Ramsey theorems:}

Our constructions possesses 
additional stronger properties not provided by past constructions.
\begin{itemize}
\item {\sl Linear subspaces with constant $\ell_q$-distortion:}
One of our main contributions are Ramsey theorems with small
average distortion and $\ell_q$-distortion. In \cite{ABN06} it was
proved that every finite metric space embeds in Euclidean space
with constant $\ell_q$-distortion, for all $q<\infty$ (and the
worst case is $O(\log n)$). In \cite{ABN-soda} is was shown that
every finite metric space embeds into an ultrametric with constant
average distortion and $O(\sqrt{\log n})$ $\ell_2$-distortion (and
the worst case is $O(n)$).

Here we prove that every finite metric space contains a
\emph{linear} size subspace which embeds in an \emph{ultrametric}
with \emph{constant} $\ell_q$-distortion, for all $q<\infty$ (and
the worst case is $O(\log n)$).

\item {\sl Linear subspaces with local distortion:}

In \cite{ABN15-local} local embeddings of metric spaces where
introduced. It is shown there how to achieve local Ramsey
theorems. We give here explicit constructions of these theorems
and prove that they work for arbitrary metric spaces.\footnote{The
randomized constructions that were presented in \cite{ABN15-local}
needed a growth rate assumption on the space.}

\item {\sl Doubling and planar metrics:} We give new Ramsey theorems for decomposable metric spaces into
$L_p$. This is the first example of a metric Ramsey theorem for a
non-trivial family of metric spaces which obtains embeddings into
$L_p$ that beat the best possible bounds given by embedding into
ultrametrics.
\end{itemize}

\item {\bf Proximity Data Structures:}
The metric Ramsey problem is closely related to the construction
of proximity data structures including approximate \emph{distance
oracles} \cite{TZ05}. These are space efficient data structures
that enable satisfying fast approximate distance queries. Our
results yield the {\em first} deterministic construction of
approximate distance oracles with \emph{asymptotically optimal}
space-distortion tradeoff and constant query time.

Our stronger metric Ramsey theorems also translate to new
approximate distance oracles. In particular we provide distance
oracles of \emph{linear} size with \emph{constant}
$\ell_q$-distortion, for all $q<\infty$ (and the worst case is
$O(\log n)$). Similarly, we obtain distance oracles of
\emph{linear} size with local stretch properties. In particular,
we obtain the first construction of such data structure with
$O(n)$ storage and $O(\vartheta(\log k))$ $k$-local stretch (i.e.
this bounds the stretch for the $k$th nearest neighbor of a
point).

 We note that while constructing distance oracles using the basic
 metric Ramsey theorem can be done using the approach of
 \cite{MN06} their method \emph{does not} apply to the case that
 the distortion varies for different pairs of points as is the
 case with our strong metric Ramsey theorems. We therefore
 show directly how to apply our Ramsey decomposition to obtain \emph{Ramsey
embeddings}. This is a strengthening of the Ramsey type results
which we define and have immediate application to proximity data
structures.

\item {\bf Deterministic Embedding:} We provide a unified
framework for \emph{deterministic} Bourgain-type embedding of arbitrary
metrics into $L_p$. This is achieved by applying our deterministic decomposition 
in place of the standard probabilistic padded decompositions. Moreover, builing
ont the embedding of \cite{ABN06} this method implies a deterministic
embedding in $O(\log n)$ dimension and distortion and
\emph{constant} $q$ moments, for all fixed $q<\infty$. This further provides optimal prioritized embeddings \cite{EFN15,BFN16}.

\item {\bf Multi-Embedding:} In a multi-embedding \cite{BM04} a metric space is embedded
into a \emph{larger} metric space so that the distortion of paths is
preserved. These type of embeddings have applications in the
context of approximation and online problems (in particular, the
extensively studied metrical task systems and the group Steiner
tree problems). Using the Ramsey decomposition we obtain \emph{optimal}
multi-embedding into ultrametrics, thereby improving
bounds in the applications.
\end{itemize}

\subsection{Further Related Work and Applications}
The decomposition lemma, Ramsey theorems and methods provided in this paper have been basis for some further results in several papers. Bartal, Fandina and Neiman \cite{BFN19-covers} study tree covers, bounding the number of trees necessary so that every pairwise distance is maintain in one of the trees within a given distortion bound. Among other results they show that Theorem~\ref{thm:ramsey-ultrametrics} implies near tight bounds on Ramsey tree covers. 
Recently, Filtser and Le \cite{FL21} make use of our decomposition and techniques to establish a distributional version of the multi-embedding theorem and thereby obtain a bound on the expected duplicity of points, which they then use to obtain certain ``clan embedding" properties, which they show useful for compact routing. 
In the work of Abraham et. al \cite{ACEFN18} similar though somewhat more involved techniques are applied to obtain Ramsey metric theorems for spanning trees in graphs.

\subsection{Preliminaries}

\begin{definition}
 Let $X,Y$ be metric spaces. An embedding of $X$ into $Y$ is a function $f:X\to Y$.
 The distortion of $f$ for the pair $\{u,v\} \in {X\choose 2}$
is $\dist_f(u,v) = \frac{d_Y(f(u),f(v))}{d_X(u,v)}$. The
distortion of $f$ is given by $\frac{\max_{\{u,v\} \in {X\choose
2}} \dist_f(u,v)}{\min_{\{u,v\} \in {X\choose 2}} \dist_f(u,v)}$.
\end{definition}

An embedding $f$ is \emph{non-contractive} if for any $u, v \in
X$: $d_Y(f(u),f(v)) \geq d_X(u,v)$ and \emph{non-expansive} if for
any $u, v \in X$: $d_Y(f(u),f(v)) \leq d_X(u,v)$.

For a vertex $v$ and $r \ge 0$, the ball at radius $r$ around $v$
is defined as $B(v,r) = \{u \in V | d(u,v) \leq r \}$.

\subsection{Metric Ramsey Theorems}

\subsubsection{Definitions}
We recall some definitions and notation from \cite{BLMN03} that is
useful in the context of the metric Ramsey problem.

\begin{definition}[Metric Ramsey functions]
Let $\M$ be some class of metric spaces. For a metric space $X$,
and $\alpha \geq 1$, $R_\M(X;\alpha)$ denotes the largest size of
a subspace $Y$ of $X$ such that $Y$ embeds in a metric space in
$\M$ with distortion $\alpha$.

Denote by $R_{\M}(\alpha,n)$ the largest integer $m$ such that any
$n$-point metric space has a subspace of size $m$ that
$\alpha$-embeds into a member of $\M$. In other words, it is the
infimum over $X$, $|X|=n$, of $R_\M(X;\alpha)$.


In the most general form, let $\N$ be a class of metric spaces and
denote by $R_\M(\N;\alpha,n)$ the largest integer $m$ such that
any $n$-point metric space in $\N$ has a subspace of size $m$ that
$\alpha$-embeds into a member of $\M$. In other words, it is the
infimum over $X \in \N$, $|X|=n$, of $R_\M(X;\alpha)$.
\end{definition}

It is useful to generalize the metric Ramsey problem to weighted
metric spaces. Although the weighted Ramsey problem defined above
is not necessary to obtain our basic results they are useful for
extending them, e.g. by using results from \cite{BLMN03}.

Let a weighted metric space be a pair $(X,w)$, where $X$ is a
metric space and $w:X\to \R^{+}$ is a weight function. For a
subspace $Z \subseteq X$ let $w(Z) = \sum_{x \in Z} w(x)$. For $0
\leq \xx$, let $w^\xx$ denote the weight function defined by
$w^\xx(x) = w(x)^\xx$, for every $x \in X$.

Note that for weight function $w(x) \equiv 1$: $w(Z) = w^\xx(Z) =
|Z|$. We note that although we will use the weighted metric
notation throughout, in most cases the uniform weight can be
assumed.

\begin{definition}[Weighted Ramsey Function]\label{def:XX}
Let $\M,\N$ be classes of metric spaces. Denote by $\XX_\M(\N,
\alpha)$ the largest $0 \le \xx \le 1$ such that for every metric
space $X \in \N$ and any weight function $w:X\to \R^{+}$, there is
a subspace $Y$ of $X$ that $\alpha$-embeds in $\M$ and satisfies:
$w^\xx(Y) \ge w(X)^\xx$.
When $\N$ is the class of all metric spaces, it is omitted from
the notation.
\end{definition}

The following is an immediate consequence of
Definition~\ref{def:XX}.
\begin{proposition}\label{prop:chi-R}
\[ R_\M(\N;\alpha,n) \ge n^{\XX_\M(\N,\alpha)} . \]
In particular,
\[ R_\M(\alpha,n) \ge n^{\XX_\M(\alpha)} . \]
\end{proposition}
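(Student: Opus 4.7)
The plan is to apply Definition~\ref{def:XX} with the uniform weight function and observe that the weighted Ramsey guarantee collapses to a cardinality guarantee in exactly the form asserted. The argument is essentially a definitional unfolding.

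First, let $\xx = \XX_\M(\N,\alpha)$ and fix any $n$-point metric space $X \in \N$. Choose the constant weight $w(x) \equiv 1$. Then $w(X) = \sum_{x \in X} 1 = n$ and, for any subspace $Y \subseteq X$, $w^\xx(Y) = \sum_{x \in Y} 1^\xx = |Y|$. By Definition~\ref{def:XX}, there exists a subspace $Y \subseteq X$ that $\alpha$-embeds into a member of $\M$ and satisfies $w^\xx(Y) \ge w(X)^\xx$, which with this weight reads $|Y| \ge n^\xx$.

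Second, since $X \in \N$ with $|X| = n$ was arbitrary, every such $X$ admits a subspace of size at least $n^\xx$ that $\alpha$-embeds into $\M$. By the definition of $R_\M(\N;\alpha,n)$ as the largest integer $m$ with this property, we conclude
\[
R_\M(\N;\alpha,n) \;\ge\; n^{\XX_\M(\N,\alpha)}.
\]
Specializing to $\N$ being the class of all metric spaces yields $R_\M(\alpha,n) \ge n^{\XX_\M(\alpha)}$.

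There is no real obstacle here; the only subtlety worth flagging is that $R_\M$ is integer-valued while $n^\xx$ need not be, but this causes no issue since the inequality $|Y| \ge n^\xx$ together with integrality of $|Y|$ implies $|Y| \ge \lceil n^\xx \rceil \ge n^\xx$, so the stated bound follows directly.
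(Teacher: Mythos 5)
Your proof is correct and is exactly the argument the paper has in mind: the paper itself labels this as "an immediate consequence of Definition~\ref{def:XX}" and earlier notes that $w \equiv 1$ gives $w(Z) = w^\psi(Z) = |Z|$, which is precisely the specialization you perform. Your integrality remark is harmless but unnecessary.
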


Let $\um$ denote the class of ultrametrics.
  It is well-known (c.f. \cite{lemin})
that ultrametrics embed isometrically in $\ell_2$. Therefore
$\XX_{L_2}(\alpha) \ge \XX_{\um}(\alpha)$.

\newcounter{ThmMetricRamseyTheorem}
\setcounter{ThmMetricRamseyTheorem}{\thetheorem}
\def\ThmMetricRamseyTheorem{
For any integer $t \geq 2$,
\[
\XX_{\um}(8t) \ge 1-1/t.
\]
In particular, any $n$-point metric
space contains a subspace of size $n^{1-1/t}$ which embeds in an
ultrametric with distortion $8/t$. 
}

\begin{theorem}\label{thm:ramsey-ultrametrics}
\ThmMetricRamseyTheorem
\end{theorem}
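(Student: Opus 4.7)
The plan is to induct on $|X|$ and prove the stronger weighted claim: for every finite weighted metric space $(X,w)$, there is a subspace $Y\subseteq X$ embedding into an ultrametric with distortion at most $8t$ and satisfying $w^\psi(Y)\ge w(X)^\psi$, where $\psi=1-1/t$. The unweighted ``in particular'' assertion then follows by specializing $w\equiv 1$ and applying Proposition~\ref{prop:chi-R}.

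The inductive step rests on a \emph{metric Ramsey decomposition} lemma — the central technical tool of the paper — which I would invoke to produce, for any $(X,w)$ with $\Delta=\diam(X)$, a center $v\in X$ and a radius $\rho$ with $\rho+\Delta/(4t)\le\Delta/2$ such that the core $A=B(v,\rho)$ and padded ball $A^+=B(v,\rho+\Delta/(4t))$ satisfy
\[
w(A^+)\ \le\ w(A)^{\psi}\cdot w(X)^{1-\psi}.
\]
Writing $B=X\setminus A^+$, both $A$ and $B$ are strictly smaller than $X$ — the latter by the eccentricity bound $\max_u d(v,u)\ge\Delta/2>\rho+\Delta/(4t)$ — so the inductive hypothesis yields subsets $Y_A\subseteq A$, $Y_B\subseteq B$ with ultrametric embeddings of distortion at most $8t$, satisfying $w^\psi(Y_A)\ge w(A)^\psi$ and $w^\psi(Y_B)\ge w(B)^\psi$. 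I assemble the global embedding by joining these two inductive trees under a fresh root labelled $2\Delta$.

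For the distortion analysis, a cross pair $x\in Y_A$, $y\in Y_B$ satisfies $d_X(x,y)\ge d(v,y)-d(v,x)>\Delta/(4t)$ and $d_T(x,y)=2\Delta$, giving ratio at most $8t$; non-contractiveness is immediate from $2\Delta\ge\Delta\ge d_X(x,y)$; within-cluster pairs inherit distortion and non-contraction from the inductive embeddings; and ultrametric monotonicity is preserved since the subroot labels $2\diam(A)$ and $2\diam(B)$ are both at most $2\Delta$. For the size bound, set $u=w(A)/w(X)\in(0,1]$; the decomposition inequality gives $w(B)\ge w(X)(1-u^\psi)$, whence
\[
w^\psi(Y)\;\ge\;w(A)^\psi+w(B)^\psi\;\ge\;w(X)^\psi\bigl[u^\psi+(1-u^\psi)^\psi\bigr]\;\ge\;w(X)^\psi,
\]
where the last step uses the elementary inequality $(1-z)^\psi\ge 1-z$ valid at $z=u^\psi\in[0,1]$.

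The principal obstacle is the decomposition lemma itself. A naive pigeonhole over the ball-weight profile $f_v(r)=w(B(v,r))$ across $t$ concentric radii of step $\Delta/(4t)$ only controls the minimum growth ratio $f_v(r_i)/f_v(r_{i-1})$ by $(w(X)/w(v))^{1/t}$, which is strictly weaker than the multiplicative form $f_v(r_i)\le f_v(r_{i-1})^{\psi}\,w(X)^{1-\psi}$ demanded above — the two coincide only in the trivial case $A=\{v\}$. I would therefore close the gap by choosing $v$ of near-maximum weight and, when no admissible $(v,\rho)$ exists for any choice of center, extracting from the failure mode the structural fact that some ball $B(v,\Delta/4)$ captures an unexpectedly large fraction of $w(X)$, which is then exploited through a recursive step at a strictly smaller diameter that, together with the weighted inductive hypothesis, yields the required conclusion.
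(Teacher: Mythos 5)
Your distortion analysis and the elementary size inequality $u^{\psi}+(1-u^{\psi})^{\psi}\geq 1$ are both correct, and you are right that everything hinges on the decomposition lemma. The trouble is that the lemma you posit --- the existence of $v,\rho$ with $w(A^+)\leq w(A)^{\psi}w(X)^{1-\psi}$ --- is genuinely stronger than what Lemma~\ref{lem:decomp} provides, and as you yourself observe, it does not follow from a pigeonhole on the ball profile $w(B(v,\cdot))$: averaging over $t$ annuli controls $w(A^+)/w(A)$ only by $\bigl(w(B(v,\DDelta/2))/w(B(v,\DDelta/4))\bigr)^{1/t}$, which has the wrong denominator. Your suggested repair (pick $v$ of near-maximum weight, and if no admissible $(v,\rho)$ exists recurse at a strictly smaller diameter) is a sketch of a direction, not an argument: you have not said what the failure of the pigeonhole yields structurally, nor how the rescaled recursion terminates while still delivering the global size bound $w(X)^{\psi}$.

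The paper sidesteps this difficulty rather than overcoming it. Lemma~\ref{lem:decomp} selects the center $v$ to \emph{minimize} the local growth ratio $w(B(v,\DDelta/2))/w(B(v,\DDelta/4))$ (not to maximize weight), applies pigeonhole precisely to that ratio, and accepts a conclusion of the weaker form $w(P)\geq w(Q)\bigl(\bsize(X)/\bsize(Q)\bigr)^{-1/t}$, stated in terms of the spherical weight $\bsize(Z)=\max_{z\in Z}w(B(z,\diam(Z)/4))$. Correspondingly, the induction hypothesis in the paper's proof is not $w^{\psi}(S(Z))\geq w(Z)^{\psi}$ --- that would indeed require your stronger lemma --- but the amortized invariant
\[
w^{\psi}(S(Z))\;\geq\; w(Z)\cdot \bsize(Z)^{-1/t}.
\]
The monotonicities $\bsize(P)\leq\bsize(Q)$ and $\bsize(\bQ)\leq\bsize(Z)$ are exactly what let this invariant propagate under Lemma~\ref{lem:decomp}, and the target inequality $w^{\psi}(S(X))\geq w(X)^{\psi}$ is recovered only at the very top, from $\bsize(X)\leq w(X)$. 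This amortization via the spherical weight is the key idea your proposal is missing; without it, or an equally careful substitute potential, the recursion has no way to charge the weight discarded in the padded annulus against future savings, and the induction as you have set it up does not close.
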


We note that the constant in the theorem above can in fact be
further improved at a the price of proof elegance.

\subsection{Stronger Ramsey Theorems: Subspaces of Linear Size}

Recall the following definitions from \cite{ABN06}:

\begin{definition}[Partial/Scaling Embedding]
For $\epsilon>0$, a $(1-\epsilon)$-partial embedding $f$ has
distortion $\alpha(\epsilon)$ if it is non-contractive and there
exists a set $G_\epsilon \subset {X\choose 2}$ of size at least
$(1-\epsilon) {n\choose 2}$ such that for every $u,v \in
G_\epsilon$, $\dist_f(u,v) \leq \alpha(\epsilon)$.\footnote{A
special type of partial embedding is called {\em coarse} where
$G_\epsilon$ is composed of pairs $u,v$ where $v$ is not in the
ball containing $\epsilon n/2$ points around $u$.} An embedding
has scaling distortion $\alpha(\epsilon)$ if it is $(1-\epsilon)$-
partial for every $\epsilon>0$.
\end{definition}

\begin{definition}[$\ell_q$-distortion]
 For $1 \leq q \leq \infty$,
define the \emph{$\ell_q$-distortion} of an embedding $f$ as:
\begin{equation*}
\dist_q(f) = \| \dist_f(u,v) \|_q^{({\cal U})} = \E[
\dist_f(u,v)^q ]^{1/q} ,
\end{equation*}
where the expectation is taken according to the uniform
distribution ${\cal U}$ over ${X\choose 2}$. The classic notion of
distortion is expressed by the $\ell_\infty$-distortion and the
average distortion is expressed by the $\ell_1$-distortion.
\end{definition}

In \cite{ABN06} the notion of scaling embedding is shown to be
closely related to the $\ell_q$-distortion of the embedding. They
give partial and scaling distortion results for embedding into
$L_p$. In \cite{ABN-soda} it was shown that every finite metric
space embeds into an ultrametric with $O(\sqrt{1/\epsilon})$
scaling distortion (and this bound is tight). Here we prove the
following strengthened Ramsey theorems:

\newcounter{ThmPartialMetricRamseyTheorem}
\setcounter{ThmPartialMetricRamseyTheorem}{\thetheorem}
\def\ThmPartialMetricRamseyTheorem{
For every $\delta >0$ and $\epsilon >0$, any $n$-point metric
space contains a subspace $Y$ of size $\geq \delta n$ such that
$Y$ has a $(1-\epsilon)$-partial embedding into an ultrametric
with distortion $O(\lceil \log_{1/\delta} 1/\epsilon \rceil )$. }

\begin{theorem}[Partial Metric Ramsey Theorem]\label{thm:ramsey-partial}
\ThmPartialMetricRamseyTheorem
\end{theorem}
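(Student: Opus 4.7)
The plan is to apply the metric Ramsey decomposition underlying Theorem~\ref{thm:ramsey-ultrametrics} across all distance scales simultaneously, with its parameters tuned so that the retained subspace is of linear size while the resulting ultrametric preserves the vast majority of pairwise distances within a factor $O(\lceil \log_{1/\delta}(1/\epsilon)\rceil)$. In spirit this mirrors how scaling-distortion results for $L_p$ (e.g.\ \cite{ABN06}) extend a single-scale padded partition into a scale-sensitive embedding.

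Setup. Set $k := \lceil \log_{1/\delta}(1/\epsilon) \rceil$, so that $\delta^{k} \le \epsilon$, and target worst-case distortion $D = \Theta(k)$ on good pairs. Build a laminar cluster tree $T$ on $X$ whose level-$j$ clusters have diameter at most $\Delta_j = 2^{-j}\diam(X)$. At each internal node $C$ of diameter $\Delta$, invoke the same Ramsey decomposition primitive that proves Theorem~\ref{thm:ramsey-ultrametrics} to split $C$ into sub-clusters of diameter $\le \Delta/2$ and simultaneously extract a core $C^{\star} \subseteq C$ whose members have their $(\Delta/D)$-neighborhoods contained in a single child sub-cluster. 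Define $Y \subseteq X$ to consist of the points that remain in the core at every level of $T$.

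Size and distortion. Calibrate the Ramsey parameter so that the per-level shrinkage in core size is at most $\delta^{1/h}$, where $h$ is the number of nontrivial scales, which is just the per-scale form of the trade-off already present in the proof of Theorem~\ref{thm:ramsey-ultrametrics}; composing over the $h$ levels yields $|Y| \ge \delta n$. The ultrametric inherited from $T$ assigns to $(u,v) \in \binom{Y}{2}$ the diameter of the smallest cluster containing both, so the distortion is this diameter divided by $d(u,v)$. A pair is \emph{bad} precisely when this diameter exceeds $D\cdot d(u,v)$; by the core property of the decomposition, a bad pair forces $v$ to lie in an annulus $B(u,\Delta)\setminus B(u,\Delta/D)$ at its ``separation scale'' $\Delta$ while $u,v$ are still co-resident one level coarser. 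A double counting that charges each bad pair to its separation scale, together with the core/shrinkage guarantee, bounds the number of bad pairs by $\epsilon \binom{|Y|}{2}$ provided $D = \Omega(k)$, which is exactly the distortion $O(\lceil \log_{1/\delta}(1/\epsilon)\rceil)$ promised by the statement.

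Main obstacle. The crux is the simultaneous balance between the two conflicting objectives: tighter pruning sharpens the distortion but reduces $|Y|$ below $\delta n$, while looser pruning maintains $|Y|$ at the cost of inflated bad-pair counts. The delicate point is choosing the Ramsey parameter so as to tie the per-scale core-shrinkage factor and the per-scale bad-pair quota together through $k$, so that after aggregating over the $h$ scales the product shrinkage stays at least $\delta$ while the cumulative bad-pair fraction stays below $\epsilon$. This is where the logarithmic dependence of $D$ on $1/\epsilon$, tempered by $\log(1/\delta)$ from the per-scale budget, enters the bound; establishing these quantitative inequalities precisely is the bulk of the work, but the qualitative mechanism is a direct strengthening of the decomposition machinery that already proves Theorem~\ref{thm:ramsey-ultrametrics}.
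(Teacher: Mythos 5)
Your construction is missing the key device that makes the partial Ramsey theorem possible: the \emph{stopping rule}. In the paper's proof, once the current cluster $Z$ satisfies $w(Z) \le \epsilon\cdot w(X)$, the recursion halts and the \emph{entire} cluster $Z$ is placed in the retained subspace $S(Z)=Z$ (as a star in the ultrametric), with no attempt to control distortion inside it. That choice is precisely the source of both phenomena your proof needs: (i) the bad pairs are exactly the pairs falling inside stopped clusters, whose total count telescopes to at most $\epsilon\binom{n}{2}$ by the simple estimate $\binom{|Z|}{2}\le |Z|\,\epsilon(n-1)/2$; and (ii) the linear size is won because every point in a stopped cluster is retained for free, rather than being whittled down to a Ramsey core. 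Without a stopping rule, your $Y$ (points in the core at every level) is just the Ramsey subspace of Theorem~\ref{thm:ramsey-ultrametrics}: by your own padding condition, if $u,v\in Y$ separate at scale $\Delta$ then $d(u,v)\ge \Delta/D$, so \emph{no} pair is bad — but then you would have proved a linear-size subspace with $O(\lceil\log_{1/\delta}1/\epsilon\rceil)$ distortion for \emph{all} pairs, which is false. Your paragraph that ``charges each bad pair to its separation scale'' is therefore vacuous in your setup, and the $\epsilon\binom{|Y|}{2}$ count has no mechanism behind it.

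The size bound is also unsound as stated. Calibrating a per-level core shrinkage of $\delta^{1/h}$, where $h$ is the number of distance scales, does not yield $|Y|\ge\delta n$ with the claimed distortion: $h$ is not bounded in terms of $n$, $\delta$, $\epsilon$ — it depends on the aspect ratio of $X$ — so the per-scale budget $\delta^{1/h}$ forces the padding parameter, and hence the distortion, to grow with $h$. The paper avoids paying per scale by running a weighted induction via the quantity $w^\psi(S(Z)) \ge w(Z)\cdot\lceil \bsize(Z)/(\epsilon w(X))\rceil^{-6\eta}$, in which the $\bsize(Z)/\bsize(Q)$ ratios telescope over the recursion; combined with the stopping rule (which zeroes out the exponent once $\bsize(Z)\le\epsilon w(X)$), this yields $|S(X)|\ge \delta n$ for the fixed choice $\eta=1/\lceil 6\log_{1/\delta}1/\epsilon\rceil$, independently of how many scales the recursion visits. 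You need both the stopping rule and the $\bsize$-telescoping argument; with neither present, your calibration does not close.
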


Let $\vartheta: \mathbb{R}^{+} \to \mathbb{R}^{+}$ be a function
such that $\int_1^\infty \frac{dx}{\vartheta(x)} = 1$. In
particular for any $\xi>0$, we can have $\vartheta(x) \leq
cx\log^{1+\xi}(x)$.

\newcounter{ThmScalingMetricRamseyTheorem}
\setcounter{ThmScalingMetricRamseyTheorem}{\thetheorem}
\def\ThmScalingMetricRamseyTheorem{
For every $\delta >0$, any $n$-point metric space contains a
subspace $Y$ of size $\geq \delta n$ such that $Y$ has a embedding
into an ultrametric with scaling distortion $O(\lceil
\vartheta(\log_{1/\delta} 1/\epsilon)) \rceil$. As a consequence
its $\ell_q$-distortion is bounded by $O(\lceil
\vartheta(q/\log(1/\delta)) \rceil )$. }

\begin{theorem}[Scaling Metric Ramsey Theorem]\label{thm:ramsey-scaling}
\ThmScalingMetricRamseyTheorem
\end{theorem}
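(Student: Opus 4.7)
The plan is to derive Theorem~\ref{thm:ramsey-scaling} by strengthening the construction behind Theorem~\ref{thm:ramsey-partial}: rather than running the metric Ramsey decomposition once with a fixed padding parameter tuned to a single value of $\epsilon$, I would run it recursively so that the \emph{single} resulting ultrametric embedding simultaneously controls the fraction of bad pairs at every dyadic scale $\epsilon_k = 2^{-k}$. Concretely, the target is one subspace $Y\subseteq X$ with $|Y|\geq \delta n$ and one ultrametric embedding $f$, such that for every $k$, at most an $\epsilon_k$ fraction of pairs in ${Y\choose 2}$ have distortion exceeding $\alpha_k := O(\vartheta(\log_{1/\delta}(1/\epsilon_k)))$.

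First I would set up the recursion tree of the Ramsey decomposition, assigning to each level of the ultrametric (i.e.\ each rescaling of the diameter by a constant factor) a scale index $k$ and a padding parameter chosen to match that level's target distortion $\alpha_k$. A pair $\{u,v\}$ is charged to the level at which their least common ancestor in the ultrametric lies; at that level the decomposition, exactly as in Theorem~\ref{thm:ramsey-partial}, guarantees distortion $O(\alpha_k)$ for pairs in the surviving core. To convert this per-level guarantee into scaling distortion, I would bound the fraction of pairs charged to a level of scale index $k$ using the weighted Ramsey function $\XX_{\um}$ from Definition~\ref{def:XX} applied with parameter $\alpha_k$. Unfolding the argument of Theorem~\ref{thm:ramsey-partial}, each level contributes a multiplicative retention factor on the subspace size of the form $\delta^{1/\vartheta(\alpha_k)}$, and the corresponding fraction of bad pairs at that level is at most $\epsilon_k$ by the choice of $\alpha_k$.

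The main obstacle will be the bookkeeping that shows the surviving subspace still has size $\geq \delta n$ across \emph{all} levels of the recursion: this is exactly where the integrability condition $\int_1^\infty dx/\vartheta(x) = 1$ enters, since the product of per-level retention factors behaves like $\delta^{\sum_k 1/\vartheta(\alpha_k)}$, and the exponent must stay bounded by a universal constant. A discrete version of $\int dx/\vartheta(x) = 1$ is what allows this telescoping to succeed, so the construction is essentially extremal for the admissible class of $\vartheta$. Care must also be taken to ensure that pairs whose lca's appear at widely separated levels do not see double-counting of distortion, which is arranged by having the padding at each level control both a multiplicative gap in the ultrametric edge lengths and the associated distortion contribution.

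Finally, the stated $\ell_q$-distortion bound is a direct corollary of the scaling statement: with $\alpha(\epsilon) = O(\vartheta(\log_{1/\delta}(1/\epsilon)))$ one has
\[
\dist_q(f)^q \;\leq\; \int_0^1 \alpha(\epsilon)^q \, d\epsilon \;=\; \log(1/\delta) \int_0^\infty \vartheta(x)^q \, \delta^x \, dx ,
\]
after the substitution $x = \log_{1/\delta}(1/\epsilon)$. The integrand is sharply concentrated near $x \approx q/\log(1/\delta)$, and a standard Laplace-type estimate then yields $\dist_q(f) = O(\vartheta(q/\log(1/\delta)))$, as claimed.
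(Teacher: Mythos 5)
Your high-level strategy (recursive Ramsey decomposition with level-adaptive padding, size control via telescoping the integral $\int_1^\infty dx/\vartheta(x)=1$, and the final Laplace-type estimate converting the scaling bound into the claimed $\ell_q$-distortion) matches the paper's intent, and your $\ell_q$ derivation at the end is essentially the right sketch. However, the core of your construction has a gap that the paper avoids by a different device. You propose to index levels by \emph{diameter scale} (``each rescaling of the diameter by a constant factor'') and to assign dyadic targets $\epsilon_k=2^{-k}$ to those levels, and you then assert that ``the corresponding fraction of bad pairs at that level is at most $\epsilon_k$ by the choice of $\alpha_k$.'' That last step does not follow: the number of pairs whose lca sits at a given diameter scale is governed by the combinatorics of the recursion tree, not by the padding parameter $\alpha_k$, and two clusters at the same diameter scale can have wildly different local densities. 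The weighted Ramsey function $\XX_{\um}$, which you invoke for this, bounds surviving point mass, not pair counts, so it cannot be used the way you describe.

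The paper sidesteps pair counting entirely. It indexes the padding not by depth or diameter but by a \emph{local density level} $\ell(Z)=\max\{\log_{1/\delta}(w(X)/\bsize(Z)),1\}$, where $\bsize(Z)$ is the spherical weight $\max_{z\in Z}w(B(z,\ddiam(Z)/4))$, and sets $\eta = 1/(6\vartheta(\ell(Z)))$. It then uses the \emph{coarse} scaling definition: $G_\epsilon$ consists of pairs $(u,v)$ with $d(u,v)\ge \max\{r_{\epsilon/2}(u),r_{\epsilon/2}(v)\}$. This is a property of the pair alone (independent of the decomposition), and it immediately implies $|{X\choose 2}\setminus G_\epsilon| \le \epsilon{n\choose 2}$. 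The crucial observation is that if a good pair $(x,y)\in G_\epsilon$ separates at cluster $Z$, then either $d(x,y)>\ddiam(Z)/4$ (distortion $\le 4$) or $B(x,d(x,y))\subseteq B(x,\ddiam(Z)/4)$ forces $\bsize(Z)\gtrsim \epsilon\,w(X)$, hence $\ell(Z)\lesssim\log_{1/\delta}(1/\epsilon)$ and thus $\eta$ was already large enough to give distortion $O(\vartheta(\log_{1/\delta}(1/\epsilon)))$. The size bound $w^\psi(S(Z))\ge w(Z)\cdot\delta^{\int_{\ell(Z)}^\infty dx/\vartheta(x)}$ is then a direct induction using $\ell(Q)\ge\ell(Z)$ and the mean-value inequality $\frac{\log_{1/\delta}(\bsize(Z)/\bsize(Q))}{\vartheta(\ell(Z))}\ge\int_{\ell(Z)}^{\ell(Q)}\frac{dx}{\vartheta(x)}$. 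If you want to rescue your dyadic-scale version, you would need to (a) replace depth-indexing by $\bsize$-indexing (or some equivalent density quantity) and (b) switch to the coarse definition of $G_\epsilon$ so the ``fraction of bad pairs'' is automatic rather than something you must count per level.
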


 We also provide additional Ramsey theorems that give local
distortion bounds. However, in this context the Ramsey embedding
version of these theorems are more natural and will be discussed
in the subsequent subsection below.

In addition we present below an improved Ramsey theorem for
doubling and excluded-minor metrics.

\subsubsection{Ramsey Theorems for Decomposable Metric Spaces}

Recall that metric spaces $(X,d)$ can be characterized by their
decomposability parameter $\tau_X$ where it is known that $\tau_X
= O(\log \lambda_X)$, where $\lambda_X$ is the doubling constant
of $X$, and for metrics of $K_{s,s}$-excluded minor graphs.
$\tau_X = O(s^2)$.

\newcounter{ThmMetricRamseyTheoremDecomp}
\setcounter{ThmMetricRamseyTheoremDecomp}{\thetheorem}
\def\ThmMetricRamseyTheoremDecomp{
Let $X$ be a metric space. There exists $C>0$ such that for every
$1 \leq p \leq \infty$, and any $\alpha > 1$:
\[
\XX_{L_p}(X,\alpha) \ge 1-C\left(\frac{\tau_X}{\alpha}\right)^p
p\log\left(\frac{\tau_X}{\alpha}\right).
\]
In particular, for every $\e >0$, $X$ contains a subspace of size
$n^{1-\e}$ which embeds in $L_p$ space with distortion
$\tilde{O}(\tau_X/\e^{1/p})$. }

\begin{theorem}[Ramsey-type Theorem for Decomposable Metrics]\label{thm:ramsey-decomp}
\ThmMetricRamseyTheoremDecomp
\end{theorem}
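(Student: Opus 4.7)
The plan is to combine a $\tau_X$-sensitive variant of the metric Ramsey decomposition underlying Theorem~\ref{thm:ramsey-ultrametrics} with a Rao--KLMN style $L_p$ embedding of the resulting Ramsey subspace. The essential gain over the generic proof is that for a $\tau_X$-decomposable metric, the padded partition technology of \cite{KLMN04} can be realized deterministically so that, at each scale $\Delta$, one obtains a partition with cluster diameters $\le \Delta$ in which removing a set of weight only $O(\eta \tau_X)$ leaves a core that is $(\eta \Delta)$-padded -- rather than the weaker $O(\eta \log n)$ loss that suffices for general metrics.

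First I would iterate this decomposable Ramsey decomposition across the $O(\log(\tau_X/\alpha))$ dyadic scales $\Delta_i = 2^i$ that are relevant to a target distortion $\alpha$, producing nested cores $Y = Y_K \subseteq \cdots \subseteq Y_0 = X$ with partitions $P_i$ such that $B(y, \eta \Delta_i) \subseteq P_i(y)$ for every $y \in Y_i$, where $\eta = \Theta(\alpha / \tau_X)$. Using the multiplicative, weight-exponent bookkeeping of \cite{BLMN03} rather than a union bound, the per-scale weight loss compounds to yield $w^\XX(Y) \ge w(X)^\XX$ for the $\XX$ claimed, provided losses are charged in $\ell_p^p$ against pairs whose active scale lies in the relevant window (so most pairs contribute at most once).

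Next I would embed $Y$ into $L_p$ in the Rao--KLMN fashion: for each cluster $C \in P_i$ include a coordinate $f_{i,C}(x) = \sigma_{i,C} \cdot \min\{d(x, X \setminus C), \Delta_i\}$ with deterministically balanced signs $\sigma_{i,C} \in \{\pm 1\}$ (e.g.\ via a derandomized Lov\'asz local lemma or explicit Hadamard-style assignment). Non-contraction at $u,v \in Y$ follows by choosing the scale $i^\ast$ with $\Delta_{i^\ast} \asymp d(u,v)/\eta$: the $(\eta \Delta_{i^\ast})$-padding of $u$ in $Y_{i^\ast}$ forces $v \notin P_{i^\ast}(u)$, so a single coordinate contributes magnitude $\Omega(\eta \cdot d(u,v))$. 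Expansion per scale is bounded by $\min\{d(u,v), \Delta_i\}$, and $\tau_X$-decomposability ensures only $O(\tau_X)$ clusters are ``active'' at any point per scale, so the $\ell_p^p$ sum of contributions across scales telescopes to $O(d(u,v)^p)$. Balancing the two estimates yields distortion $O(\tau_X/\alpha)$.

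The main obstacle will be securing the $p$-th power factor $(\tau_X/\alpha)^p$ in the exponent rather than a linear $(\tau_X/\alpha)$ loss: a naive union over scales yields only the latter. The argument must charge Ramsey weight loss against the $\ell_p^p$-norm of scale contributions, so that only pairs with many simultaneously ``bad'' scales are discarded, and a telescoping computation in the spirit of the proof of Theorem~\ref{thm:ramsey-ultrametrics} must show that per-scale losses compound as a $p$-th power. Once this is done, the corollary -- a subspace of size $n^{1-\e}$ with distortion $\tilde O(\tau_X / \e^{1/p})$ -- follows by solving $C(\tau_X/\alpha)^p p \log(\tau_X/\alpha) = \e$ for $\alpha$ and absorbing the resulting $\log(p/\e)$ into the $\tilde O$ notation.
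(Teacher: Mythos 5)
Your proposal takes a genuinely different route from the paper, and that route has a real gap at exactly the point you flag as ``the main obstacle.''

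The paper does \emph{not} build a $\tau_X$-sensitive Ramsey decomposition, nor does it attempt any per-pair $\ell_p^p$ charging scheme in the Ramsey bookkeeping. Instead it runs the generic ultrametric Ramsey theorem (Theorem~\ref{thm:ramsey-ultrametrics}) as a black box at a carefully chosen target $\alpha' = (\tau_X/\alpha)^p/c$, obtaining a subspace $X'$ that is $\alpha'$-equivalent to an ultrametric with $\XX \ge 1 - C/\alpha'$. It then invokes Lemmas~3.15 and~3.16 of \cite{BLMN03} purely on the ultrametric side: first refining the ultrametric to a $k$-HST with $k = 2\alpha'\alpha''$, $\alpha'' = \exp(\alpha')$ (losing another $1 - 2\log\alpha'/\alpha'$ in the exponent), then observing that a $k$-HST for $k \ge 2\alpha'\alpha''$ is $2$-equivalent to a member of $\comp_2(\mathbf{\Phi})$ with $\Phi = \alpha'\alpha'' \le \exp(2\alpha')$. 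Only at the final step does $\tau_X$ appear: since decomposability is hereditary, the subspace inherits $\tau_{X''} \le \tau_X$, and Corollary~\ref{cor:composition-Phi} (Rao's embedding composed through Lemma~\ref{lem:composition-embed}) embeds each composition level into $L_p$ with distortion $O(\tau_X(\log\Phi)^{1/p}) = O(\tau_X(\alpha')^{1/p})$. Substituting $\alpha' = (\tau_X/\alpha)^p/c$ gives distortion $O(\alpha)$, and the Ramsey exponent accumulates to $1 - O\!\left(\log\alpha'/\alpha'\right) = 1 - C(\tau_X/\alpha)^p p\log(\tau_X/\alpha)$. The $p$-th power is thus purchased entirely by the \emph{structural} reduction to a composition of bounded-aspect-ratio pieces --- for any pair, the relevant scales in a $\comp_2(\mathbf{\Phi})$ metric live in a window of width $\log\Phi$, so the $\ell_p^p$ sum over scales automatically contributes only $(\log\Phi)^{1/p}$.

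Your sketch tries to get this saving directly, by charging per-pair scale contributions against the Ramsey weight loss inside the recursive decomposition. This is where the gap is concrete: the weighted Ramsey accounting (as in the proof of Theorem~\ref{thm:ramsey-ultrametrics}) tracks a point-wise quantity $w^\XX(S(Z))$ and recurses on disjoint parts $P$, $\bQ$; there is no mechanism in that recursion that ``sees'' pairs or the $\ell_p^p$ norm of their cross-scale contributions. You would need to invent a substantially different potential function --- one indexed by pairs and $p$-powered in the number of bad scales --- and show it recurses correctly through the $(P,\bQ)$ split; nothing in your outline does that, and it is unclear it can be made to work without effectively reproducing the reduction to bounded aspect ratio. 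You would also need to justify restricting attention to only $O(\log(\tau_X/\alpha))$ scales: for a general metric the ambient scales number $\log\Phi(X)$, and the padded-partition construction does not by itself bound how many scales are simultaneously active for a fixed pair. In short, the obstacle you identify is not a detail to be filled in but the crux, and the paper solves it by an entirely different two-stage argument (Ramsey to ultrametric, then ultrametric $\to$ $k$-HST $\to$ composition $\to$ $L_p$) rather than by a charging scheme.
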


\subsection{Ramsey Embedding, Ramsey Covers and Proximity Data Structures}

We show that our algorithms for computing metric Ramsey
constructions can be applied to obtain new results for proximity
data structures, including distance oracles and approximate
ranking.

To this aim we define the notion of a {\em Ramsey embedding}.

\begin{definition}
\label{def:ramsey-embedding} Given metric spaces $X,Y$. A Ramsey
embedding of $X$ into $Y$ is a pair composed of an embedding
$f:X\to Y$ and a subspace $X' \subseteq X$. A non-contractive
Ramsey embedding has distortion $\alpha$ if for every $x\in X'$
and $y \in X$, $\dist_f(x,y) \leq \alpha$. We call the subspace
$X'$ the core subspace of the embedding.
\end{definition}

We show that our Ramsey theorems can be extended to provide Ramsey
embeddings. These can be further extended to obtain Ramsey covers
which we then use to obtain first deterministic constructions
distance oracles and approximate ranking data structures with
optimal query, stretch and space tradeoffs.

Moreover we give the first construction of such data structures
with $O(n)$ storage, $O(1)$ query time, and $O(1)$ average
distortion and $\ell_q$ distortion for every fixed $q<\infty$ (and
the worst case distortion is $O(\log n)$).

\subsubsection{$k$-Local Embeddings and Data Structures}

\begin{definition}
For $x\in X$ let $r_k(x)$ the minimum $r$ such that $|B(x,r)|\geq
k$. A Ramsey embedding $f:X\to Y$ with core subspace $X'$ has
$k$-local distortion $\alpha$ if it is \emph{non-expansive} and
for every $x,y \in X$, $d_Y(f(x),f(y)) \geq \min\{ d_X(x,y),
r_k(x) \}/\alpha$. We say that $f$ has local scaling distortion
$\alpha(k)$ if it is $k$-local for every $1 \leq k \leq n$.
\end{definition}

We prove the following strengthened Ramsey embedding theorems:

\newcounter{ThmLocalMetricRamseyTheorem}
\setcounter{ThmLocalMetricRamseyTheorem}{\thetheorem}
\def\ThmLocalMetricRamseyTheorem{
For every $\varepsilon>0$ and $k \in \mathbb{N}$, any $n$-point
metric space has a Ramsey embedding with $k$-local distortion
$O(1/\varepsilon)$ and core subspace $Y$ of size $\geq n \cdot
k^{-\varepsilon}$. }

\begin{theorem}[Local Metric Ramsey-type Theorem]\label{thm:ramsey-local}
\ThmLocalMetricRamseyTheorem
\end{theorem}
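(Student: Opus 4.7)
The plan is to build the embedding using a two-layer variant of the hierarchical Ramsey construction behind Theorem~\ref{thm:ramsey-ultrametrics}: a top-layer decomposition of $X$ truncated at clusters of size at most $k$, below which the usual basic Ramsey decomposition is applied inside each terminal cluster. Set $t=\lceil 1/\varepsilon\rceil$ and $\alpha=8t=O(1/\varepsilon)$.

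For the top layer, I would recursively apply the Ramsey decomposition from the proof of Theorem~\ref{thm:ramsey-ultrametrics}: a cluster $C$ of size $>k$ is split into a core ball $A\subseteq C$ and its complement, with $|A|\ge|C|^{1-1/t}$ and the standard padding $d_X(a,b)\ge R/t$ at the split's scale $R$; a cluster of size $\le k$ is declared \emph{terminal} and the recursion halts. For the bottom layer, inside each terminal cluster $T$ apply the basic Ramsey-embedding form of Theorem~\ref{thm:ramsey-ultrametrics} to produce a core $Y_T\subseteq T$ with $|Y_T|\ge|T|^{1-1/t}$ together with a non-expansive ultrametric embedding $f_T:T\to U_T$ whose distortion is at most $\alpha$ for every pair with at least one endpoint in $Y_T$. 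Hang each $U_T$ beneath its terminal cluster in the top-layer tree, rescaling by $1/\alpha$ so that the combined map $f:X\to U$ is non-expansive, and set the core subspace $Y:=\bigcup_T Y_T$. The size bound follows immediately: since each $|T|\le k$, one has $|T|^{1-1/t}\ge|T|\cdot k^{-1/t}$, so $|Y|=\sum_T|Y_T|\ge k^{-1/t}\sum_T|T|=n\cdot k^{-\varepsilon}$.

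For the local distortion, fix $x\in Y$ with its terminal cluster $T_x$ and any $y\in X$. If $y\in T_x$, the bottom-layer Ramsey embedding $f_{T_x}$ gives $d_U(f(x),f(y))\ge d_X(x,y)/\alpha$, which dominates $\min(d_X(x,y),r_k(x))/\alpha$. If $y\notin T_x$, let $R$ be the top-layer scale at which $x$ and $y$ were first separated: since $x$ stayed in the core ball at every top-layer split it experienced, the padding property forces $d_X(x,y)\ge R/t$, while the rescaled top-layer assigns $d_U(f(x),f(y))=R/\alpha$, so $d_U(f(x),f(y))\ge d_X(x,y)/\alpha$ (via the cluster-diameter bound $d_X(x,y)\le R$), again dominating the required local bound. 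The main obstacle is the assembly step: the scales of each $U_T$ must be dominated by the top-layer scale at which $T$ was formed (divided by $\alpha$), so that the combined tree has monotonically decreasing scales along root-to-leaf paths and the global map is genuinely a non-expansive ultrametric embedding; since $\diam(T)\le R_T$ where $R_T$ is the scale of $T$'s parent split, this capping is compatible with the bottom-layer distortion bound, and the remaining verifications reduce to the bookkeeping already used in the proof of Theorem~\ref{thm:ramsey-ultrametrics}.
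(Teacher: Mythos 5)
Your two-layer plan (truncate the top-level Ramsey recursion at clusters of size $\le k$, then run the basic Ramsey-embedding inside each terminal cluster) is genuinely different from the paper's construction, and as written it has a gap that is not just bookkeeping. The paper's proof keeps a single recursion but at every node re-chooses the ``core subgraph to be partitioned'' $C(Z)$ to be the subspace of $Z$ of maximum diameter among all subspaces of size at most $k$, and then sets the label of $U(Z)$ to $\Delta(C(Z))/(Ct)$; the weight loss per level is thereby tied to $k$ and the label is tied to a $k$-local scale.

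The crux of the problem with your argument is the top-layer split you posit: a two-way partition of $C$ into a ball $A$ and its complement satisfying simultaneously $|A|\ge |C|^{1-1/t}$ and $d(A,C\setminus A)\ge R/t$. Lemma~\ref{lem:decomp} does not produce such a partition. It produces a ternary structure: a padded core $P\subseteq Q$, a buffer $Q\setminus P$, and the far side $\bar Q$, with $d(P,\bar Q)\ge\Delta/(4t)$ but no control on $d(Q\setminus P,\bar Q)$ or $d(P,Q\setminus P)$. If you drop the buffer you lose $\sum_T|T|=n$ (and then $|Y|\ge n k^{-\e}$ fails); if you keep it on either side then pairs straddling the cut can be arbitrarily close, the label $R/\alpha$ is no longer dominated by $d_X(x,y)$, and non-expansiveness is destroyed. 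Relatedly, the assertion that ``$x$ stayed in the core ball at every top-layer split it experienced'' is unjustified: $x\in Y_T$ only says $x$ lies in the \emph{bottom}-layer core inside the terminal cluster $T$; it says nothing about $x$'s trajectory through the top layer, and in any construction that keeps all points some points of $Y_T$ will have passed through non-padded sides.

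There is also a structural overreach. In the case $y\notin T_x$ you aim to establish the full bound $d_U(f(x),f(y))\ge d_X(x,y)/\alpha$ for every $x$ in the core and every $y$. If that were achievable with a core of size $nk^{-\e}$ it would give a (non-local) Ramsey embedding with a linear-size core, contradicting the $n^{1-\e}$ bound of Theorem~\ref{thm:ramsey-embedding}. The $k$-local guarantee is intentionally weaker -- only $\min\{d_X(x,y),r_k(x)\}/\alpha$ -- and a correct proof must exploit the $r_k(x)$ term when $y$ is far; this is exactly what the paper's choice of label $\Delta(C(Z))/(Ct)$, with $|C(Z)|\le k$, is designed to achieve, since the lca label is then comparable to $r_k(x)$ rather than to $\ddiam(Z)$. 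Your sketch never invokes $r_k(x)$ in the top-layer case, which is a sign that the argument is on the wrong track.
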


\newcounter{ThmScalingLocalMetricRamseyTheorem}
\setcounter{ThmScalingLocalMetricRamseyTheorem}{\thetheorem}
\def\ThmScalingLocalMetricRamseyTheorem{
For every $\varepsilon>0$, any $n$-point metric space has a Ramsey
embedding with scaling local distortion $O(\min\{\vartheta(\log
k),1/\varepsilon\})$ and core subspace $Y$ of size $\geq
n^{1-\varepsilon}$. }

\begin{theorem}[Scaling Local Metric Ramsey-type Theorem]\label{thm:ramsey-scaling-local}
\ThmScalingLocalMetricRamseyTheorem
\end{theorem}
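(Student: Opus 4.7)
The plan is to construct the Ramsey embedding via a hierarchical application of the metric Ramsey decomposition, with the padding at each recursive cluster chosen in a \emph{locally adaptive} way. This mirrors the scheme behind Theorem~\ref{thm:ramsey-scaling}, but the padding is made sensitive to the current cluster size, so that the distortion incurred when two points first separate inside a cluster $C$ is of order $\vartheta(\log|C|)$ rather than the global $\vartheta(\log n)$. The embedding $f:X\to U$ into the resulting ultrametric and the core subspace $Y\subseteq X$ (points well-padded at every level of the hierarchy) are read off the construction in the standard way, and non-expansiveness is built in from the choice of ultrametric edge lengths at the hierarchy's diameters.

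First, at each cluster $C$ in the recursive decomposition I would set the padding to $\alpha_C = c\cdot\max\{1/\varepsilon,\,\vartheta(\log|C|)\}$ for an absolute constant $c$. The Ramsey decomposition then produces a core within $C$ of size at least $|C|^{1-O(1/\alpha_C)}$. Iterating along any root-to-leaf path $X=C_0\supset C_1\supset\dots$ and telescoping the exponents yields $|Y|\geq n^{\,1-O(\varepsilon)}$, since the integral condition $\int_1^\infty dx/\vartheta(x)=1$ ensures the sum $\sum_i 1/\alpha_{C_i}$ along any such path is $O(\varepsilon)$. For the $k$-local distortion bound, fix $x\in Y$ and $y\in X$ with $d_X(x,y)\leq r_k(x)$, and let $C^{\ast}$ be the smallest cluster of the hierarchy containing both. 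Because the hierarchy refines diameter at a geometric rate, $\diam(C^{\ast})=O(d_X(x,y))=O(r_k(x))$, so $|C^{\ast}|\leq |B(x,O(r_k(x)))|=O(k)$ after absorbing the constant blow-up into $k$. The distortion on the pair $(x,y)$ is then $O(\alpha_{C^{\ast}})=O(\min\{\vartheta(\log k),1/\varepsilon\})$, giving the stated bound.

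The main obstacle is reconciling the locally adaptive padding with a clean core-size budget. The Ramsey decomposition, as previously stated, is used at a single scale with a single padding, and it must be checked that replacing the uniform padding by $\alpha_C$ depending on $|C|$ preserves the telescoping. The key fact needed is that along every root-to-leaf path the cluster sizes decrease at least geometrically (which can be arranged by insisting each recursive call either strictly shrinks the cluster or halts), so that $\log|C_i|$ forms a sparse enough sequence to make $\sum_i 1/\vartheta(\log|C_i|)$ majorized by $\int_1^{\log n} dx/\vartheta(x)=O(1)$. Should enforcing this geometric decrease directly prove delicate, a clean fallback is to work inside the weighted Ramsey framework of Definition~\ref{def:XX}, charging irregular splits to the weights and recovering the same integral bound via Proposition~\ref{prop:chi-R}.
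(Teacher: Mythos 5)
Your approach is genuinely different from the paper's, and it contains a real gap that I don't see how to close.

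The paper gets locality by changing \emph{what} is decomposed, not just the padding: at each recursion it selects the core subspace $C(Z)$ to be a subspace $H$ maximizing the sparseness ratio $\Delta(H')/|H'|$, and then sets the ultrametric label to $\Delta(C(Z))/\bigl(C\cdot\min\{t,\vartheta(\log k)\}\bigr)$ to force non-expansiveness. That choice of core is the mechanism tying the size of the separating cluster to the local parameter $k=|B(x,r_k(x))|$. Your proposal keeps the standard decomposition and only varies the padding $\alpha_C$ with $|C|$, and the locality argument then rests on two claims that do not hold.

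First, $\diam(C^{\ast})=O(d_X(x,y))$ is not what the decomposition gives you. When $x\in P$ and $y\in\bar Q$ are separated at cluster $C^{\ast}$, the guaranteed gap from Lemma~\ref{lem:decomp} is $d(x,y)\geq \eta\,\hat\Delta = \Theta(\diam(C^{\ast})/\alpha_{C^{\ast}})$, so the correct bound is $\diam(C^{\ast})=O(\alpha_{C^{\ast}}\cdot d_X(x,y))$, with the very distortion you are trying to bound already appearing as a multiplicative blow-up of the radius. Second, even granting $\diam(C^{\ast})=O(r_k(x))$, the step ``$|C^{\ast}|\leq |B(x,O(r_k(x)))|=O(k)$'' fails in general metric spaces: $B(x,r_k(x))$ has exactly $k$ points by definition, but $B(x,Cr_k(x))$ for a constant $C>1$ can have arbitrarily many more (think of $k$ points clustered at distance $r$ from $x$ and $n-k$ points at distance $2r$). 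So you cannot conclude $\alpha_{C^{\ast}}=c\,\vartheta(\log|C^{\ast}|)=O(\vartheta(\log k))$, and the local distortion claim falls through. These are exactly the two places where the paper's sparse-core construction (large $\Delta(H)/|H|$) does the real work of decoupling the local ball count from the cluster size; a plain $|C|$-adaptive padding over an unmodified decomposition does not substitute for it. Your non-expansiveness handling is also too thin --- the default label $\Lambda=\ddiam(S(Z))$ gives a non-contractive map, and you need the explicit down-scaling of the label that the paper uses; that part is repairable, but the $|C^{\ast}|$ versus $k$ gap is structural.
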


By following the same procedure discussed before we can use these
theorems to obtain new deterministic constructions of distance
oracles and approximate ranking data structures with $k$-local
stretch and scaling local stretch respectively. In particular, we
obtain the first construction of such data structure with $O(n)$
storage and $O(\vartheta(\log k))$ $k$-local stretch (i.e. this
bounds the stretch for the $k$th nearest neighbor of a point).

\subsection{Multi-Embedding}

Another application of our Ramsey decomposition is to obtain
optimal multi-embeddings of metric spaces into ultrametrics
\cite{bm}. A multi-embedding of a space $X$ into $Y$ is a mapping
of points in $X$ to sets of points in $Y$. It is desirable that
the size of $Y$ would be small. The path distortion of a
multi-embedding is $\alpha$ if for every path $p$ in $X$ there is
a corresponding path in $Y$ over the images of points in $p$ whose
length is at most $\alpha$ times the length of $p$.

The following theorem gives a tight bound on multi-embedding into
ultrametrics, improving the previous result of \cite{bm}. This
implied improvements to algorithms for the online metrical task
systems problem and for the group Steiner tree problem for metric
spaces with small aspect ratio, and provides simpler algorithms
for these problems.

\newcounter{ThmMultiEmbedding}
\setcounter{ThmMultiEmbedding}{\thetheorem}
\def\ThmMultiEmbedding{
For any metric space on $n$ points and aspect ratio $\Phi$, and any $\e>0$, there
exists a multi-embedding into an ultrametric of size $n^{1+\e}$,
whose path distortion is at most
$
 O( \min\{\log n , \log \Phi \}/\e).
$
}

\begin{theorem} \label{thm:path-approx}
\ThmMultiEmbedding
\end{theorem}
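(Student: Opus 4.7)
The plan is to build the multi-embedding hierarchically using the Ramsey decomposition underlying Theorem~\ref{thm:ramsey-ultrametrics} at each scale. Let $L = \min\{\log n, \log \Phi\} + 1$; it is enough to consider $O(\log n)$ distinct scales by the standard coarsening to the set of pairwise distances of $X$. At each recursive call on a subspace $X' \subseteq X$ at scale $\Delta$, apply the Ramsey decomposition with parameter $t = \Theta(L/\varepsilon)$ to obtain a partition of $X'$ into clusters $P_1,\dots,P_m$ of diameter at most $\Delta/2$ together with a core set $G\subseteq X'$ satisfying $B(x,\Delta/(8t))\cap X'\subseteq P_x$ for every $x\in G$. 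Form the augmented clusters
\[
P_i^+ \;:=\; P_i \;\cup\; \{\,x\in X'\setminus G : d(x,P_i)\le \Delta/(8t)\,\},
\]
recurse on each $P_i^+$ at halved scale, and glue the resulting subtrees under a common root of ultrametric height $\Delta$; the image set $S_x$ is the collection of leaves produced by every copy of $x$ created along the recursion.

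The distortion analysis I would do by induction on the scale. For any pair $u,v\in X$, let $\Delta^\star$ be the largest scale at which $u$ and $v$ no longer share an augmented cluster. The core property together with the definition of $P_i^+$ forces $d(u,v) > \Delta^\star/(16t)$: otherwise, whichever of $u,v$ is non-core would have been augmented into the other's cluster, and if both are core they would already lie in the same cluster. Hence the ultrametric distance between the corresponding copies is at most $2\Delta^\star \le O(t)\cdot d(u,v) = O(L/\varepsilon)\cdot d(u,v)$. Lifting a path $x_0,\dots,x_k$ in $X$ edge by edge gives a walk in the ultrametric; consecutive edges may pick different copies of the shared endpoint $x_j$, but any such ``jump'' between two copies of $x_j$ at their common augmenting scale is bounded by $O(t)$ times the incident edge lengths, so the total path length is $O(L/\varepsilon)$ times the original, i.e.\ path distortion $O(\min\{\log n, \log \Phi\}/\varepsilon)$.

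The main obstacle is bounding the total number of leaves by $n^{1+\varepsilon}$, since a non-core point may be replicated into several augmented clusters at each of the $L$ scales. Here I plan to exploit the weighted form $\XX_{\um}(8t)\ge 1-1/t$ of Theorem~\ref{thm:ramsey-ultrametrics} by the following amortized bookkeeping: at each recursive call assign each point of $X'$ a weight equal to its current replication count in the global tree, and apply the weighted Ramsey guarantee $w^{1-1/t}(G)\ge w(X')^{1-1/t}$ of Definition~\ref{def:XX} at that call. Because only non-core points replicate while core points retain their weight intact, this guarantee translates into a per-level multiplicative blow-up in total weight by at most $|X'|^{1/t}$, and iterating across the $L$ scales yields a global leaf count of at most $n\cdot n^{L/t}=n^{1+O(\varepsilon)}$. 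The delicate step is choosing the weight update correctly — essentially splitting each duplicated point's weight among its new copies — so that the weighted Ramsey guarantee is reapplied cleanly at every level despite the ongoing duplications, which I would carry out by adapting the bookkeeping of the multi-embedding construction to the Ramsey decomposition framework.
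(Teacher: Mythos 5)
Your construction diverges substantially from the paper's. The paper sets $t=\lceil 1/\e\rceil$ and modifies the Ramsey recursion to recurse on $Q$ and $\bar{P}=Z\setminus P$, so the two children of a node overlap exactly on $Q\setminus P$; because $P$ and $\bar{P}$ partition $Z$ exactly, the same telescoping via $\bsize$ used to prove Theorem~\ref{thm:ramsey-ultrametrics} carries over verbatim and gives $w^\xx(U(X))\le w(X)^{1+1/t}=n^{1+\e}$ directly. The $\min\{\log n,\log\Phi\}$ factor in the distortion is \emph{not} put into $t$; it comes from \cite{bm}'s path-distortion analysis, whose hypotheses are supplied by Lemma~\ref{lem:decomp-multi} (in particular the extra halving guarantee $|Q|\le|X|/2$, absent from Lemma~\ref{lem:decomp}). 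You instead set $t=\Theta(L/\e)$, partition each level into many augmented clusters, and try to make the per-pair distortion itself $O(L/\e)$.

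There are three genuine gaps in your route. First, the size bound is not established: your ``per-level blow-up by $|X'|^{1/t}$'' is not what the weighted Ramsey guarantee says at a single scale, and you yourself flag the weight-splitting bookkeeping as the delicate step you ``would carry out''; the paper sidesteps this entirely because $P\sqcup\bar{P}=Z$ makes the telescoping immediate. Second, the per-pair-to-path argument is incomplete. Bounding each ``jump'' $d_{U}(b_j,a_{j+1})$ between two copies of $x_j$ by $O(t)$ times the incident edge lengths is precisely the nontrivial content of the multi-embedding path-distortion analysis: the level at which two copies of $x_j$ diverge need not be comparable to $d(x_{j-1},x_j)$ or $d(x_j,x_{j+1})$, and \cite{bm} controls the aggregate jump cost via a global charging scheme that depends on the tree's structural properties (one child halves both cardinality and diameter), not via a local per-edge bound. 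Third, your opening claim that ``it is enough to consider $O(\log n)$ distinct scales by the standard coarsening'' is unsupported — coarsening to pairwise distances gives $O(\log\Phi)$ scales, not $O(\log n)$; the $\log n$ term in the theorem comes from the cardinality-halving of Lemma~\ref{lem:decomp-multi}, which your construction does not enforce. Even your per-pair lower bound $d(u,v)>\Delta^\star/(16t)$ when both $u,v$ are non-core requires care, since a point augmented into $P_i^+$ can lie at distance up to $\Delta/(8t)$ from $P_i$, so two augmented non-core points in different clusters need not witness the stated separation. The cleanest fix is to adopt the paper's binary overlap recursion with $t=\Theta(1/\e)$, which makes the size bound trivial, and then verify the three structural properties cited in the last paragraph of the paper's proof so that the path distortion bound of \cite{bm} applies.
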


\section{Notation}

For sets
$U,V \subseteq X$ let $d(U,V) = \min \{ d(u,v) | u \in U, v \in V
\}$. Let $\ddiam(X) = \diam(X)$ denote the diameter of $X$.

\subsection{Ultrametrics and Hierarchically Well-Separated Trees}\label{section:lower-preliminaries}

 Recall that an {\em ultrametric} is a metric space $(X,d)$ such that for every $x,y,z\in X$,
$$
d(x,z)\le \max\{d(x,y),d(y,z)\}.
$$

We recall the following definition from \cite{B96}:

\begin{definition}
\label{def:hst} For $k\geq 1$, a $k$-\emph{hierarchically
well-separated tree} ($k$-HST) is a metric space whose elements
are the leaves of a rooted tree $T$. To each vertex $u\in T$ there
is associated a label $\Lambda(u) \ge 0$ such that $\Lambda(u)=0$
iff $u$ is a leaf of $T$. It is required that if a vertex $u$ is a
child of a vertex $v$ then $\Lambda(u)\leq \Lambda(v)/k$ . The
distance between two leaves $x,y\in T$ is defined as
$\Lambda(\lca(x,y))$, where $\lca(x,y)$ is the least common
ancestor of $x$ and $y$ in $T$.

\end{definition}

First, note that an ultrametric and a $1$-HST are identical
concepts. Any $k$-HST is also a $1$-HST, i.e., an ultrametric. Any
ultrametric is $k$-equivalent to a $k$-HST \cite{B96}.

When we discuss $k$-HSTs, we freely use the tree $T$ as in
Definition~\ref{def:hst}, \emph{the tree defining the HST}.



Let $\um$ to denote the class of ultrametrics, and $k$-HST denotes
the class of $k$-HSTs.

\section{Ramsey Decomposition}
\label{sec:ramsey-decomposition}

Define the spherical-weight of $Z \subseteq X$,
$\bsize(Z) = \max_{z\in Z} w(B(z,\ddiam(Z)/4)).$

\begin{lemma}
\label{lem:decomp} Given a metric space $X$, and $0 < \DDelta \leq
\ddiam(X)/2$, and integer $t \geq 2$, then there exists a
partition $(Q,\bar{Q})$ of $X$, and $P \subseteq Q$, such that:
$\ddiam(Q) \leq \DDelta$, $d(P,\bQ) \geq \DDelta / (4t)$, and
\begin{eqnarray*}
w(P) \geq w(Q) \cdot \left( \frac{\bsize(X)}{\bsize(Q)}
\right)^{-1/t}.
\end{eqnarray*}
\end{lemma}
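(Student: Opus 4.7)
The plan is to use a ball-growing / pigeonhole argument anchored at a carefully chosen center. First I would pick $x_0 \in X$ maximizing $w(B(x_0,\hat{\Delta}/4))$, and then consider the $t+1$ nested balls $V_i = B(x_0,r_i)$ with radii $r_i = \hat{\Delta}/4 + i \cdot \hat{\Delta}/(4t)$ for $i=0,1,\ldots,t$. Consecutive radii differ by exactly $\hat{\Delta}/(4t)$, the innermost ball $V_0$ has radius $\hat{\Delta}/4$, and the outermost $V_t$ has radius $\hat{\Delta}/2$. Since $w(V_t)/w(V_0) = \prod_{i=1}^t w(V_i)/w(V_{i-1})$, the pigeonhole principle (taking the geometric mean of the $t$ consecutive ratios) yields an index $i^* \in \{1,\ldots,t\}$ with
\[
\frac{w(V_{i^*-1})}{w(V_{i^*})} \ge \left(\frac{w(V_0)}{w(V_t)}\right)^{1/t}.
\]
I would then set $Q = V_{i^*}$, $P = V_{i^*-1}$ and $\bar Q = X\setminus Q$.

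The geometric properties are immediate. Because $Q \subseteq B(x_0,\hat{\Delta}/2)$, the triangle inequality gives $\diam(Q)\le \hat{\Delta}$. For the separation, any $p\in P$ satisfies $d(x_0,p)\le r_{i^*-1}$, while any $q\in \bar Q$ satisfies $d(x_0,q)> r_{i^*}$, so $d(p,q)\ge r_{i^*}-r_{i^*-1} = \hat{\Delta}/(4t)$.

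The main obstacle, and really the heart of the argument, is to translate the pigeonhole bound into the required weight inequality $w(P)/w(Q) \ge (\bsize(X)/\bsize(Q))^{-1/t}$. It suffices to prove
\[
\frac{w(V_0)}{w(V_t)} \;\ge\; \frac{\bsize(Q)}{\bsize(X)},
\]
which I would establish via two separate containments. On the one hand, since $\hat{\Delta}\le \diam(X)/2$, one has $V_t = B(x_0,\hat{\Delta}/2) \subseteq B(x_0,\diam(X)/4)$, and the latter has weight at most $\bsize(X)$ by definition. On the other hand, for every $z\in Q$ the ball $B(z,\diam(Q)/4)$ sits inside $B(z,\hat{\Delta}/4)$, and the maximality of $x_0$ gives $w(B(z,\hat{\Delta}/4)) \le w(V_0)$; taking the max over $z\in Q$ yields $\bsize(Q)\le w(V_0)$. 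Combining these two inequalities with the pigeonhole bound gives the desired lower bound on $w(P)/w(Q)$.

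The only subtlety I anticipate is the interaction between the two distinct radius scales that appear in the definition of $\bsize$ (namely $\diam(Z)/4$, which depends on the set) versus the fixed scale $\hat{\Delta}/4$ used to choose $x_0$; the containments above are exactly what reconcile them, and they rely crucially on the hypothesis $\hat{\Delta}\le \diam(X)/2$. Once these are in hand, the rest of the proof is a direct verification.
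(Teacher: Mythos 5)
Your proof is correct, and it follows the same overall template as the paper (nested balls of radii $\hat\Delta/4 + i\hat\Delta/(4t)$, pigeonhole on the $t$ consecutive weight ratios, then translate to the $\bsize$ formulation), but it uses a genuinely different extremal choice of center. The paper picks $v$ \emph{minimizing} the ratio $w(B(v,\hat\Delta/2))/w(B(v,\hat\Delta/4))$, obtains $w(P)\ge w(Q)\cdot\bigl(w(B(v,\hat\Delta/2))/w(B(v,\hat\Delta/4))\bigr)^{-1/t}$ directly from the pigeonhole, and then must introduce an auxiliary point $u$ realizing $\bsize(Q)$ and compare the ratio at $v$ against the ratio at $u$ to reach the stated bound. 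You instead pick $x_0$ \emph{maximizing} $w(B(x_0,\hat\Delta/4))$ and establish the two containment-type inequalities $\bsize(Q)\le w(V_0)$ (from $\diam(Q)\le\hat\Delta$ together with maximality of $x_0$) and $w(V_t)\le\bsize(X)$ (from $\hat\Delta\le\diam(X)/2$), which together give $w(V_0)/w(V_t)\ge\bsize(Q)/\bsize(X)$ with no ratio comparison and no auxiliary point. Both choices of center exploit the same slack --- the pigeonhole only needs the local growth ratio at the chosen center to be dominated by $\bsize(X)/\bsize(Q)$ --- but your version is somewhat more direct, trading the paper's ``minimize a ratio, then compare two ratios'' for ``maximize a weight, then apply two monotonicity observations.''
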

\begin{proof}
Let $v$ be a node that minimizes the ratio
$w(B(v,\DDelta/2))/w(B(v,\DDelta/4))$. We will choose $Q=B(v,r)$
for some $r \in [\DDelta/4,\DDelta/2]$. For $0 \leq i \leq t$, define $Q_i =
B((1+\frac{i}{t})\DDelta/4)$. Clearly there exist some $i>0$ such
that $w(Q_i) \leq w(Q_{i-1})
\left(\frac{w(Q_t)}{w(Q_0)}\right)^{\frac{1}{t}}$. Then we set
$Q=Q_i$ and $P=Q_{i-1}$. Therefore we have that $d(P,\bQ) \geq
\DDelta/(4t)$ and
\begin{eqnarray*}
w(P) & \geq & w(Q) \cdot \left(
\frac{w(B(v,\DDelta/2))}{w(B(v,\DDelta/4)} \right)^{-1/t}.
\end{eqnarray*}

Now, let $u$ be the node that maximizes $w(B(u,\ddiam(Q)/4))$.
Since $\ddiam(Q) \leq \DDelta$ we have that $\bsize(Q) \leq
w(B(u,\DDelta/4))$. Recall that $\bsize(X) \geq
w(B(u,\DDelta/2))$, as $\DDelta \leq \ddiam(X)/2$. By the choice
of $v$ we conclude that
\begin{eqnarray*}
w(P) & \geq & w(Q) \cdot \left( \frac{w(B(u,\DDelta/2))}{w(B(u,\DDelta/4))} \right)^{-1/t}  
 \geq  w(Q) \cdot \left( \frac{\bsize(X)}{\bsize(Q)}
\right)^{-1/t}.
\end{eqnarray*}\qed
\end{proof}

\section{Metric Ramsey Theorems}
\label{sec:ramsey}

\setcounter{theorem}{\theThmMetricRamseyTheorem}
\begin{theorem}
For any integer $t \geq 2$,
\[
\XX_{\um}(8t) \ge 1-1/t.
\]
\end{theorem}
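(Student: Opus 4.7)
The plan is to construct the subspace $Y \subseteq X$ and an ultrametric on $Y$ simultaneously, by recursively applying Lemma~\ref{lem:decomp}. On a current subspace $S$: if $|S|=1$, return the singleton as a leaf; otherwise apply the lemma to $S$ with $\DDelta = \ddiam(S)/2$ to obtain $(Q,\bar Q,P)$, recursively build $(Y_P,T_P)$ on $P$ and $(Y_{\bar Q},T_{\bar Q})$ on $\bar Q$, and return $Y=Y_P \cup Y_{\bar Q}$ together with the tree $T$ whose root has label $\ddiam(S)$ and whose two children are $T_P$ and $T_{\bar Q}$. The recursion makes progress because a diametrical pair of $S$ cannot both lie in a ball of radius $\DDelta/2 = \ddiam(S)/4$, so $\bar Q \neq \emptyset$; and since labels are non-increasing along any root-to-leaf path, $T$ is a valid $1$-HST (ultrametric).

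For the distortion I would show by induction on $|S|$ that $T$ has distortion at most $8t$ on $Y$. A pair $u \in Y_P,\ v \in Y_{\bar Q}$ lies in different children of the root, so $d_T(u,v) = \ddiam(S) = 2\DDelta$; meanwhile, by the lemma, its metric distance satisfies $d(u,v) \in [d(P,\bar Q),\ \ddiam(S)] \subseteq [\DDelta/(4t),\ 2\DDelta]$, which gives non-contraction and expansion at most $8t$. Pairs that lie in the same subtree are handled by the inductive hypothesis on $(Y_P,T_P)$ or $(Y_{\bar Q},T_{\bar Q})$.

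The size bound is driven by the potential
\[
V(S) \;=\; \frac{w(S)}{\bsize(S)^{1/t}},
\]
and I would establish by induction the stronger inequality $w^{1-1/t}(Y) \ge V(S)$. The base case $|S|=1$ is immediate since $\bsize(S)=w(S)$. For the inductive step I use the monotonicity facts $\bsize(P) \le \bsize(Q)$ and $\bsize(\bar Q) \le \bsize(S)$ (fewer centers and smaller diameters yield smaller maxima of smaller balls). The inductive hypothesis gives $w^{1-1/t}(Y_P) \ge w(P)/\bsize(Q)^{1/t}$ and $w^{1-1/t}(Y_{\bar Q}) \ge w(\bar Q)/\bsize(S)^{1/t}$. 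Plugging the lemma's weight bound $w(P) \ge w(Q)(\bsize(Q)/\bsize(S))^{1/t}$ into the first expression telescopes the $\bsize(Q)$ factors and yields $w(Q)/\bsize(S)^{1/t}$; summing the two contributions gives exactly $V(S)$. Applied at $S = X$, and using $\bsize(X) \le w(X)$, this yields $w^{1-1/t}(Y) \ge w(X)^{1-1/t}$, which by Definition~\ref{def:XX} is the desired bound $\XX_{\um}(8t) \ge 1-1/t$.

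I expect the main conceptual obstacle to be identifying the precise potential $V(S) = w(S)/\bsize(S)^{1/t}$. The exponent $1/t$ is dictated by the lemma's weight bound, and only this exact form makes both recursive branches telescope: a stronger potential would fail on the $\bar Q$ branch, which loses only the coarse factor $\bsize(\bar Q) \le \bsize(S)$, while a weaker one would leave the lemma's gain on the $P$ branch unused. The remaining ingredients (HST label monotonicity, non-emptiness of $\bar Q$ and hence termination, and the singleton base case) are routine once the potential is in place.
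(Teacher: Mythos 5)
Your proof is correct and follows essentially the same approach as the paper: the identical recursion on Lemma~\ref{lem:decomp} with $\DDelta = \ddiam(Z)/2$, the identical potential $w(Z)\cdot\bsize(Z)^{-1/t}$ with the same telescoping via $\bsize(P)\le\bsize(Q)$ and $\bsize(\bQ)\le\bsize(Z)$, and the same distortion accounting across a cut giving $8t$. The only cosmetic difference is that you label each root by the diameter of the full current subspace rather than by the diameter of the selected subset $S(Z)$ as the paper does; both give valid HST labels and yield the same $8t$ bound.
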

\setcounter{theorem}{\thetmp}
\begin{proof}
Let $X$ be an arbitrary metric space. Let $Z \subseteq X$. We will
construct a subspace $S(Z) \subseteq Z$ and an ultrametric
$U(S(Z))$ recursively as follows: use the decomposition described
in Section~\ref{sec:ramsey-decomposition} with $\DDelta=\ddiam(Z)/2$ to obtain a partition of the graph
$(Q,\bQ)$ and $P \subset Q$ satisfying Lemma~\ref{lem:decomp}. Run
the algorithm on $P$ and $\bQ$ recursively, obtaining subspaces
$S(P)$ and $S(\bQ)$ and ultrametrics $U(S(P))$ and $U(S(\bQ))$
respectively. Let $S(Z) = S(P)\cup S(\bQ)$. The ultrametric $U(Z)$
is constructed by creating a root $r$ labeled with
$\Lambda(r)=\ddiam(S(Z))$ with two children at which we root the
trees defining $U(S(P))$ and $U(S(\bQ))$.

We first prove by induction on the size of $Z \subset X$ that
$S(Z)$ is $4t$ equivalent to $U(S(Z))$ via a
non-contractive embedding and $\ddiam(U(S(Z)) = \ddiam(S(Z))$.

If $Z$ includes a single point $z$ then the claim trivially holds.
Assume by induction that the claim holds for strict subsets of
$Z$. Consider $x,y \in S(Z)$. If $x,y \in S(P)$ their distance in
$U(Z)$ is the same as in $U(P)$ and therefore the claim follows
from the induction hypothesis. If $x,y \in \bQ$ then a similar
argument holds. Let $\eta = 1/(4t)$. Let $x \in P$ and $y \in \bQ$ then by
Lemma~\ref{lem:decomp}, $d(x,y) \geq \eta/2 \cdot 
\ddiam(Z) \geq \eta/2 \cdot\ddiam(S(Z))$. As $d(x,y) \leq
\ddiam(S(Z))$ it follows that $d(x,y) \leq d_{U(Z)}(x,y) \leq
2/\eta\cdot d(x,y)$, and we conclude that $S(Z)$ is
$(8t)$-equivalent to $U(Z)$.

Let $\xx = 1-1/t$. Next, we prove by induction on
the size of $Z \subseteq X$ that
\[
w^\xx(S(Z)) \geq w(Z)\cdot \bsize(Z)^{-1/t}.
\]
If $Z=\{z\}$ includes a single point $z$ then the claim trivially
holds since $w^\xx(S(Z))= w(Z) \cdot \bsize(Z)^{-1/t} =
w(z)^\xx$. By applying the induction hypothesis and
Lemma~\ref{lem:decomp} we obtain
\begin{eqnarray*}
w^\xx(S(Z)) & = & w^\xx(S(P)) + w^\xx(S(\bQ)) \geq w(P)\cdot
\bsize(P)^{-1/t} + w(\bQ)\cdot \bsize(\bQ)^{-1/t} \\
& \geq & w(Q) \cdot \left(\frac{\bsize(Z)}{\bsize(Q)}\cdot
\bsize(P)\right)^{-1/t} + w(\bQ)\cdot
\bsize(Z)^{-1/t} \\
& \geq & (w(Q) + w(\bQ)) \cdot \bsize(Z)^{-1/t} = w(Z) \cdot
\bsize(Z)^{-1/t}.
\end{eqnarray*}
Noting that $\bsize(X) \leq w(X)$ we conclude that $w^\xx(S(X))
\geq w(X)^\xx$.
\end{proof}

\subsection{Stronger Ramsey Theorems: Subspaces of Linear Size}

In this section we will fix $w(Z) = w^\psi(Z) = |Z|$ (however, the
claims and proofs can be appropriately generalized to more general
weight functions).

\setcounter{theorem}{\theThmPartialMetricRamseyTheorem}
\begin{theorem}
\ThmPartialMetricRamseyTheorem
\end{theorem}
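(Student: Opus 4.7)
The plan is to adapt the recursive Ramsey construction of Theorem~\ref{thm:ramsey-ultrametrics} to produce a subspace of size $\delta n$, with a distortion guarantee that holds for all but an $\epsilon$-fraction of pairs. I would begin by setting $t = \lceil \log n / \log(1/\delta) \rceil$ and running the same recursive decomposition using Lemma~\ref{lem:decomp} with this $t$. The weight argument from the proof of Theorem~\ref{thm:ramsey-ultrametrics} then immediately gives a subspace $Y = S(X)$ of size $|Y| \ge n^{1-1/t} \ge \delta n$ together with a non-contractive embedding $f$ into an ultrametric $U$ of worst-case distortion $8t$. The new content of the proof is to show that, with target $\alpha = 8 t_0$ for $t_0 = \lceil \log_{1/\delta}(1/\epsilon) \rceil$, the fraction of pairs in $Y$ with $\dist_f(x,y) > \alpha$ is at most $\epsilon$.

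Next, I would analyze pairs level by level. Any pair $(x,y) \in Y$ is separated at some recursion level with associated cluster $Z$, diameter $\Delta$, and Lemma~\ref{lem:decomp}-partition $(Q, \bar Q, P)$ with $x \in S(P)$ and $y \in S(\bar Q)$ (or vice versa). The HST distance is $\Theta(\Delta)$, so the pair is bad iff $d(x,y) < \Delta/\alpha$. Since Lemma~\ref{lem:decomp} builds $P$ as an inner ball $B(v, r_{i-1})$ and $\bar Q$ as the complement of an outer ball $B(v, r_i)$ with $r_i - r_{i-1} = \Delta/(4t)$, bad pairs are confined to the thin annuli $B(v, r_{i-1}) \setminus B(v, r_i - \Delta/\alpha)$ on the $P$ side and $B(v, r_i + \Delta/\alpha) \setminus B(v, r_i)$ on the $\bar Q$ side. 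Counting the weight within these annuli via the same growth-minimizing choice of $v$ used in Lemma~\ref{lem:decomp}, the number of bad pairs contributed at this level is bounded by a quantity analogous to the weight invariant, but with exponent $1/t_0$ in place of $1/t$.

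Then I would aggregate over all recursion levels by a charging argument that mirrors the weight induction from the proof of Theorem~\ref{thm:ramsey-ultrametrics}. Maintaining an inductive invariant of the form $|\mathrm{bad}(S(Z))| \le \epsilon \binom{|S(Z)|}{2}$, I would show the bad-pair contribution at each level telescopes into the overall bound $\epsilon \binom{|Y|}{2}$ provided $t_0 \ge C \log_{1/\delta}(1/\epsilon)$ for a suitable constant $C$. The main obstacle is exactly this aggregation: the bad-pair counts at different levels interact through the shared $\bsize(\cdot)$-growth structure of the ambient metric, and balancing the two parameters $t$ and $t_0$ to simultaneously secure the size bound $|Y| \ge \delta n$ (which forces $t$ to be as large as $\log n/\log(1/\delta)$) and the bad-pair budget $\epsilon$ (which only allows $t_0 = \log(1/\epsilon)/\log(1/\delta)$) is expected to be the most delicate step. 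If the direct telescoping proves elusive, I would fall back on a strengthened version of Lemma~\ref{lem:decomp} that explicitly tracks annular boundary weights, iterating it at two scales $t$ and $t_0$ so that the size invariant and the bad-pair invariant can be controlled independently within the same recursion.
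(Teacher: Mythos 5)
Your approach diverges from the paper's and contains a genuine gap at the step you yourself flag as ``the most delicate.'' The difficulty is structural: with $t \approx \log n / \log(1/\delta)$ (needed to secure $|Y|\ge\delta n$), the annulus that Lemma~\ref{lem:decomp} actually controls has width $\Delta/(4t)$, whereas a pair is bad for the target $\alpha = 8t_0$ whenever its distance is below roughly $\Delta/(8t_0)$, so the ``bad annulus'' has width on the order of $\Delta/t_0$ --- about $t/t_0 \approx \log n / \log(1/\epsilon)$ times wider than what the decomposition gives you any handle on. The greedy choice of $v$ in Lemma~\ref{lem:decomp} only guarantees that \emph{some} consecutive pair $(Q_{i-1},Q_i)$ has small weight ratio; it says nothing about the weight sitting in a much thicker shell around the chosen boundary. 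So the per-level annulus count you want does not follow from the lemma, and the telescoping invariant $|\mathrm{bad}(S(Z))|\le\epsilon\binom{|S(Z)|}{2}$ is not something you can propagate. Your proposed fallback (a two-scale strengthening of the decomposition tracking annular weights) is not what is needed either, and would face the same obstruction: you would have to control weight in a thick shell with a single greedy choice.

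The paper sidesteps this entirely with a much simpler idea you did not consider: \emph{stop the recursion early}. Whenever the current cluster $Z$ has weight $w(Z)\le\epsilon\, w(X)$, set $S(Z)=Z$ and map it to a star (so distances inside $Z$ are not preserved at all), and declare \emph{every} pair inside $Z$ a bad pair. Since the stopped clusters are disjoint, the total number of such pairs is at most $\sum_Z \binom{|Z|}{2} \le \sum_Z |Z|\cdot\epsilon(n-1)/2 \le \epsilon\binom{n}{2}$, which is exactly the allowed budget. Every other pair is separated at a level where $w(Z)>\epsilon\, w(X)$, hence $\bsize(Z)$ can never drop below $\epsilon\, w(X)$, and the ratio $\bsize(X)/\bsize(Z)$ that drives both the distortion and the size recursion is at most $1/\epsilon$ rather than $n$. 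This is precisely why a \emph{single} parameter $\eta = 1/\lceil 6\log_{1/\delta}(1/\epsilon)\rceil$ simultaneously yields distortion $O(\log_{1/\delta}(1/\epsilon))$ for all surviving pairs and the size invariant $w^\psi(S(Z))\ge w(Z)\cdot\lceil\bsize(Z)/(\epsilon w(X))\rceil^{-6\eta}$, which at $Z=X$ gives $|S(X)| \ge \delta n$. No annulus counting, no balancing of two parameters, no per-level bad-pair budget: the bad pairs are exactly those killed by the early stop, and their count is bounded trivially.
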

\setcounter{theorem}{\thetmp}
\begin{proof}
Let $X$ be an arbitrary metric space. Let $Z \subseteq X$. We
construct a subspace $S(Z) \subseteq Z$ and an ultrametric
$U(S(Z))$ recursively as follows. If $w(Z) \leq \epsilon \cdot
w(X)$ we let $S(Z) = Z$ and define $U(S(Z))$ to be a rooted star
with $|Z|$ leaves and label the root with
$\Lambda(r)=\ddiam(S(Z))$. Otherwise, if $w(Z)
> \epsilon \cdot w(X)$ use the decomposition described in
Section~\ref{sec:ramsey-decomposition} with $\eta = 1/\lceil
6\log_{1/\delta} 1/\epsilon \rceil$ and $\DDelta=\ddiam(Z)/2$ to
obtain a partition of the graph $(Q,\bQ)$ and $P \subset Q$
satisfying Lemma~\ref{lem:decomp}. Run the algorithm on $P$ and
$\bQ$ recursively, obtaining subspaces $S(P)$ and $S(\bQ)$ and
ultrametrics $U(S(P))$ and $U(S(\bQ))$ respectively. Let $S(Z) =
S(P)\cup S(\bQ)$. The ultrametric $U(Z)$ is constructed by
creating a root $r$ labeled with $\Lambda(r)=\ddiam(S(Z))$ with
two children at which we root the trees defining $U(S(P))$ and
$U(S(\bQ))$.

 We first prove by induction on the size of $Z \subset X$ that
there is a $(1-\epsilon)$-partial embedding of $S(Z)$ into
$U(S(Z))$ with distortion $2/\eta = O(\lceil \log_{1/\delta}
1/\epsilon \rceil)$, and that $\ddiam(U(S(Z)) = \ddiam(S(Z))$.

If $w(Z) \leq \epsilon \cdot w(X)$ then the number of pairs in $Z$
is at most ${|Z|\choose 2} \leq |Z| \epsilon (n-1) /2$. All such
pairs are excluded from $G_\epsilon$. It follows that the number
of excluded pairs sums up to at most $\epsilon {n\choose 2}$ in
total. Otherwise if $w(Z) > \epsilon \cdot w(X)$ then the same
argument in the proof of Theorem~\ref{thm:ramsey-ultrametrics}
holds.

Let $\xx = 1-6\eta$. Next, we prove by induction on the size of $Z
\subseteq X$ that
\[
w^\xx(S(Z)) \geq w(Z)\cdot \left\lceil \frac{\bsize(Z)}{\epsilon \cdot
w(X)} \right\rceil^{-6\eta}.
\]
If $w(Z) \leq \epsilon \cdot w(X)$ then $S(Z)=Z$ and the claim
trivially holds since $\bsize(Z) \leq w(Z) \leq \epsilon \cdot
w(X)$. Otherwise if $w(Z) > \epsilon \cdot w(X)$ then essentially
the same argument in the proof of
Theorem~\ref{thm:ramsey-ultrametrics} holds.

 Noting that $\bsize(X) \leq w(X)$ we conclude that
$w^\xx(S(X)) \geq w(X) \cdot \left( \frac{1}{\epsilon}
\right)^{-6/\lceil 6\log_{1/\delta} 1/\epsilon \rceil} \geq \delta
\cdot w(X).$
\end{proof}

\setcounter{theorem}{\theThmScalingMetricRamseyTheorem}
\begin{theorem}
\ThmScalingMetricRamseyTheorem
\end{theorem}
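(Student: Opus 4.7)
The plan is to mirror the recursive construction of Theorem~\ref{thm:ramsey-partial}, but to adapt the decomposition parameter to the level of each recursion node rather than to a single fixed $\epsilon$. Concretely, at a node $Z$ of the recursion with $\mu = w(Z)/w(X)$, set $k(Z) = \lceil \log_{1/\delta}(1/\mu) \rceil$ and invoke Lemma~\ref{lem:decomp} with $\DDelta = \ddiam(Z)/2$ and $\eta_k = c/\vartheta(k)$ (equivalently $t_k = 1/(4\eta_k)$) for a suitable absolute constant $c$. One then recurses on $P$ and $\bar Q$, glues the resulting ultrametrics under a new root labeled $\ddiam(S(Z))$, and sets $S(Z) = S(P)\cup S(\bar Q)$.

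First I would establish the scaling distortion. Exactly as in Theorem~\ref{thm:ramsey-ultrametrics}, every pair $\{x,y\}$ that is separated at a node $Z$ satisfies $\dist_f(x,y) \le 2/\eta_{k(Z)} = O(\vartheta(k(Z)))$ via Lemma~\ref{lem:decomp} and the standard HST argument. For fixed $\epsilon>0$, let $\mathcal{A}_\epsilon$ be the maximal antichain of recursion nodes with $w(Z)\le\epsilon w(X)$; its members are pairwise disjoint subsets of $X$ of size at most $\epsilon n$ each. Any pair separated strictly below $\mathcal{A}_\epsilon$ is contained in a single element of $\mathcal{A}_\epsilon$, so the number of such bad pairs is at most
\[
\sum_{Z\in\mathcal{A}_\epsilon}\binom{|Z|}{2}\;\le\;\tfrac{1}{2}\epsilon n\sum_{Z\in\mathcal{A}_\epsilon}|Z|\;\le\;\epsilon\binom{n}{2}.
\]
All remaining pairs are separated at nodes of level at most $\lceil\log_{1/\delta}(1/\epsilon)\rceil$ and hence have distortion $O(\vartheta(\log_{1/\delta}(1/\epsilon)))$. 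The $\ell_q$-distortion estimate then follows by routine integration of this scaling bound against the uniform measure on pairs, which for $\alpha(\epsilon) = \vartheta(\log_{1/\delta}(1/\epsilon))$ evaluates, up to constants, to $\vartheta(q/\log(1/\delta))$.

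The main obstacle is the core-size bound $|S(X)|\ge\delta n$. The approach is to generalize the single-scale invariant of the proof of Theorem~\ref{thm:ramsey-partial} to one product over all dyadic scales simultaneously, proving by induction on $|Z|$ an estimate of the shape
\[
w^\xx(S(Z))\;\ge\;w(Z)\cdot\prod_{j\ge k(Z)}\Psi_j(Z),
\]
in which each factor $\Psi_j(Z)\le 1$ plays, at the scale $\delta^j$, the role of the single term $\lceil\bsize(Z)/(\epsilon w(X))\rceil^{-6\eta}$ appearing in Theorem~\ref{thm:ramsey-partial}. The inductive step at a node of level $k$ applies Lemma~\ref{lem:decomp} and uses the telescoping of $\bsize$ between parent and child (exactly as in the proofs of Theorems~\ref{thm:ramsey-ultrametrics} and \ref{thm:ramsey-partial}) to absorb the $(\bsize(Z)/\bsize(Q))^{-1/t_k}$ loss into the $j=k$ factor, while the factors with $j>k$ pass unchanged into the invariants for $P$ and $\bar Q$. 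Evaluated at $Z=X$ the total product telescopes to a quantity of the form $\delta^{c\sum_{j\ge 1}1/\vartheta(j)}$; the hypothesis $\int_1^\infty dx/\vartheta(x)=1$ bounds $\sum_j 1/\vartheta(j)$ by a constant, and for $c$ chosen appropriately this gives $w^\xx(S(X))\ge\delta w(X)$.

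The delicate technical point is designing the factors $\Psi_j$ so that (i) the $j=k$ factor at a level-$k$ node absorbs exactly the loss produced by Lemma~\ref{lem:decomp}, (ii) the higher-scale factors transfer cleanly into the invariants for both children of the decomposition, and (iii) the entire product evaluated at the root amounts to at most a factor $\delta$. This is precisely the step where convergence of $\int_1^\infty dx/\vartheta(x)$ is used: it supplies the total loss budget which, distributed across the dyadic scales at rate $1/\vartheta(k)$ per level, suffices to cover all losses incurred throughout the infinite recursion.
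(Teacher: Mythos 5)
Your construction deviates from the paper's in the choice of level parameter, and this deviation breaks the size bound. You set $k(Z)=\lceil\log_{1/\delta}(w(X)/w(Z))\rceil$, i.e., you index the level by the \emph{total} weight of the recursion node. The paper instead indexes by the spherical weight: $\ell(Z)=\max\{\log_{1/\delta}(w(X)/\bsize(Z)),1\}$, and sets $\eta=1/(6\vartheta(\ell(Z)))$. This distinction is essential, because the loss guaranteed by Lemma~\ref{lem:decomp} is $(\bsize(Z)/\bsize(Q))^{-1/t}$, a $\bsize$-ratio. With the $\bsize$-based level, the exponent $\log_{1/\delta}(\bsize(Z)/\bsize(Q))/\vartheta(\ell(Z))$ is precisely $(\ell(Q)-\ell(Z))/\vartheta(\ell(Z))$, which is at least $\int_{\ell(Z)}^{\ell(Q)}dx/\vartheta(x)$ by monotonicity of $\vartheta$, and the telescoping closes cleanly. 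With your $w$-based level, there is no such identity: the $w$-level can remain constant while $\bsize$ drops by an arbitrary factor.

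Concretely, your invariant (however you instantiate $\Psi_j$) forces the inequality
\[
\frac{\log_{1/\delta}\bigl(\bsize(Z)/\bsize(Q)\bigr)}{t_{k(Z)}}
\;\le\;
\sum_{j=k(Z)}^{k(P)-1}\frac{1}{\vartheta(j)}
\]
at every node, where $t_k=\vartheta(k)/(4c)$. In the worst case $w(P)=w(Q)\bigl(\bsize(Z)/\bsize(Q)\bigr)^{-1/t_{k(Z)}}$ exactly (the lemma gives only a lower bound on $w(P)$), with $w(Q)=w(Z)$ and $\bQ=\emptyset$; then $k(P)-k(Z)=\log_{1/\delta}\bigl(\bsize(Z)/\bsize(Q)\bigr)/t_{k(Z)}$, and the left side of the inequality equals $k(P)-k(Z)$ while the right side is $\sum_{j=k(Z)}^{k(P)-1}1/\vartheta(j)<k(P)-k(Z)$ (since $\vartheta(j)>1$). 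The invariant therefore fails at a single node, and the induction does not go through. The root cause is that the $\vartheta(k)$ factor inflates the level jump and discounts the budget by the same amount, so the two cancel instead of compensating.

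A secondary mismatch: once you switch to the $\bsize$-based level $\ell(Z)$, your antichain argument no longer yields the right distortion bound. Being separated at a node of small $w$-mass does not imply small $\bsize$-mass, and vice versa. The paper resolves this by proving \emph{coarse} scaling distortion, defining $G_\epsilon$ via the balls $r_{\epsilon/2}(\cdot)$. Then for a pair $(x,y)\in G_\epsilon$ separated at $Z$ with $d(x,y)\le\ddiam(Z)/4$, one gets $\bsize(Z)\ge w(B(x,d(x,y)))\ge(\epsilon/2)w(X)$, hence $\ell(Z)=O(\log_{1/\delta}(1/\epsilon))$ and the desired distortion. Your antichain counting is a valid route to \emph{some} $G_\epsilon$, but it only controls $w(Z)$, not $\bsize(Z)$, and so does not cohere with the level parameter that the size proof requires.

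In short, the key idea you are missing is to index the recursion level by $\bsize(Z)$ (rather than $w(Z)$) so that the $\bsize$-ratio loss from Lemma~\ref{lem:decomp} translates verbatim into a level increment, and to control the good pairs via the coarse $G_\epsilon$ rather than via an antichain of small subspaces.
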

\setcounter{theorem}{\thetmp}
\begin{proof}
Let $X$ be an arbitrary metric space. Let $Z \subseteq X$. Define
$\ell(Z) = \max \{ \log_{1/\delta} \frac{w(X)}{\bsize(Z)} , 1 \}$.
We will construct a subspace $S(Z) \subseteq Z$ and an ultrametric
$U(S(Z))$ recursively as follows: use the decomposition described
in Section~\ref{sec:ramsey-decomposition} with $\eta = 1/(6
\vartheta(\ell(Z)) )$ and $\DDelta=\ddiam(Z)/2$ to obtain a
partition of the graph $(Q,\bQ)$ and $P \subset Q$ satisfying
Lemma~\ref{lem:decomp}. Run the algorithm on $P$ and $\bQ$
recursively, obtaining subspaces $S(P)$ and $S(\bQ)$ and
ultrametrics $U(S(P))$ and $U(S(\bQ))$ respectively. Let $S(Z) =
S(P)\cup S(\bQ)$. The ultrametric $U(Z)$ is constructed by
creating a root $r$ labeled with $\Lambda(r)=\ddiam(S(Z))$ with
two children at which we root the trees defining $U(S(P))$ and
$U(S(\bQ))$.

We first prove by induction on the size of $Z \subset X$ that
there exists an embedding of $S(Z)$  into $U(S(Z))$ with coarsely
scaling distortion $12\lceil \vartheta(\log_{1/\delta} 1/\epsilon)
\rceil$, and that $\ddiam(U(S(Z)) = \ddiam(S(Z))$. More
specifically let $r_\epsilon(u)$ be the minimum $r$ such that
$w(B(u,r)) \geq \epsilon \cdot w(X)$. That is, we let $G_\epsilon
= \{ (u,v) | d(u,v) \geq \max\{r_{\epsilon/2}(u),r_{\epsilon/2}(v)
\} \}$.

If $Z$ includes a single point $z$ then the claim trivially holds.
Assume by induction that the claim holds for strict subsets of
$Z$. Consider $x,y \in S(Z)$. If $x,y \in S(P)$ their distance in
$U(Z)$ is the same as in $U(P)$ and therefore the claim follows
from the induction hypothesis. If $x,y \in \bQ$ then a similar
argument holds. Let $x \in P$ and $y \in \bQ$, and assume $x,y \in
G_\epsilon$. We may assume that $d(x,y) \leq \ddiam(Z)/4$,
otherwise we can bound the distortion by 4. It follows that
$\bsize(Z) \geq w(B(x, d(x,y)) \geq \epsilon \cdot w(X)$. Then by
Lemma~\ref{lem:decomp} $d(x,y) \geq \eta \DDelta \geq \eta/2 \cdot
\ddiam(S(Z))$. As $d(x,y) \leq \ddiam(S(Z))$ it follows that
$d(x,y) \leq d_{U(Z)}(x,y) \leq 2\eta\cdot d(x,y)$, so that the
distortion of $x$ and $y$ is bounded by $12\lceil
\vartheta(\log_{1/\delta} 1/\epsilon) \rceil$.


We prove by induction on the size of $Z \subseteq X$ that
\[
w^\xx(S(Z)) \geq w(Z)\cdot \delta^{\int_{\ell(Z)}^\infty
\frac{dx}{\vartheta(x)}} .
\]
If $Z=\{z\}$ includes a single point $z$ then the claim trivially
holds since $w^\xx(S(Z))= w(Z)=1$. By applying the induction
hypothesis and Lemma~\ref{lem:decomp} we obtain
\begin{eqnarray*}
w^\xx(S(Z)) & = & w^\xx(S(P)) + w^\xx(S(\bQ)) \geq w(P)\cdot
\delta^{\int_{\ell(P)}^\infty \frac{dx}{\vartheta(x)}} +
w(\bQ)\cdot \delta^{\int_{\ell(\bQ)}^\infty
\frac{dx}{\vartheta(x)}} \\
& \geq & w(Q) \cdot \left(\frac{\bsize(Z)}{\bsize(Q)}\right)^{-1/
\vartheta(\ell(Z)) } \cdot \delta^{\int_{\ell(P)}^\infty
\frac{dx}{\vartheta(x)}} + w(\bQ)\cdot
\delta^{\int_{\ell(\bQ)}^\infty \frac{dx}{\vartheta(x)}}.
\end{eqnarray*}
Since
\begin{eqnarray*}
 \left(\frac{\bsize(Z)}{\bsize(Q)}\right)^{-1/\vartheta(\ell(Z)) } & = &
\delta^{\frac{\log_{1/\delta}\left(\frac{\bsize(Z)}{\bsize(Q)}\right)}{\vartheta(\ell(Z))
}} \geq \delta^{\int_{\ell(Z)}^{\ell(Q)} \frac{dx}{\vartheta(x)}},
\end{eqnarray*}
we get that
\begin{eqnarray*}
w^\xx(S(Z)) & \geq & w(Q) \cdot \delta^{\int_{\ell(Z)}^{\ell(Q)}
\frac{dx}{\vartheta(x)} + \int_{\ell(P)}^\infty
\frac{dx}{\vartheta(x)}} + w(\bQ)\cdot
\delta^{\int_{\ell(\bQ)}^\infty \frac{dx}{\vartheta(x)}} \\
& \geq & (w(Q) + w(\bQ)) \cdot \delta^{\int_{\ell(Z)}^\infty
\frac{dx}{\vartheta(x)}}  = w(Z) \cdot
\delta^{\int_{\ell(Z)}^\infty \frac{dx}{\vartheta(x)}}
\end{eqnarray*}
Noting that $\bsize(X) \leq w(X)$ and $\int_1^\infty
\frac{dx}{\vartheta(x)}=1$ we conclude that $w^\xx(S(X)) \geq
\delta \cdot w(X)$.
\end{proof}

The Ramsey theorems which obtain linear size spaces with local
distortions are discussed in the next section.

\section{Ramsey Embedding, Ramsey Covers and Proximity Data Structures}

In this section we obtain first deterministic constructions of
distance oracles and approximate ranking data structures with
constant query time and asymptotically optimal storage-stretch
tradeoffs.

\subsection{Ramsey Embedding}

The first step is to extend our Ramsey theorems to obtain Ramsey
embeddings (see definition \ref{def:ramsey-embedding}). As our
construction must support distortions which vary as function of
the pairs of points this does not follow directly from the Ramsey
theorems themselves but the Ramsey embeddings can be derived by an
appropriate modification of the algorithms and proofs as described
below. We obtain the following theorems:

\def\ThmMetricRamseyEmbeddingTheorem{
For every $0 < \e$, and any $n$-point metric space $X$, there
exists a Ramsey embedding of $X$ into an ultrametric with a core
subspace of size $n^{1-\e}$ and distortion $O(1/\e)$. }

\begin{theorem}\label{thm:ramsey-embedding}
\ThmMetricRamseyEmbeddingTheorem
\end{theorem}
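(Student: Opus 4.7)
The plan is to start from the ultrametric produced by Theorem~\ref{thm:ramsey-ultrametrics} and extend it to an ultrametric on all of $X$ by attaching each non-core point as a fresh leaf at a carefully chosen internal node, then uniformly rescaling the labels. Apply Theorem~\ref{thm:ramsey-ultrametrics} with integer parameter $t=\lceil 1/\varepsilon\rceil$ to obtain a core $X'\subseteq X$ of size at least $n^{1-\varepsilon}$ together with a non-contractive embedding into an ultrametric $U'$ on $X'$ of distortion $8t$. By the standard rounding of labels to consecutive powers of $2$, I may assume (at the price of a factor $2$ in the distortion) that $U'$ is represented by a tree $T$ which is an exact $2$-HST, i.e.\ each parent's label is exactly twice its children's. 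I also pad $T$ at the top with a short chain of single-child roots whose labels double, up to some value at least $\max_{y\in X\setminus X'}d(y,x'(y))$, where $x'(y)\in X'$ denotes a fixed nearest core point to $y$.

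For each $y\in X\setminus X'$, let $v(y)$ be the deepest ancestor of the leaf $x'(y)$ in $T$ with $\Lambda(v(y))\ge d(y,x'(y))$; the exact $2$-HST structure and the depth choice guarantee $\Lambda(v(y))<2d(y,x'(y))$. Attach $y$ as a new leaf child of $v(y)$, then rescale every label of the resulting tree by a constant factor $c=3$ to obtain the ultrametric $U$ on $X$. Let $f:X\to U$ be the natural leaf identification; the Ramsey embedding is the pair $(f,X')$. Since the core size is inherited from Theorem~\ref{thm:ramsey-ultrametrics}, only the non-contractivity of $f$ on all of $X$ and the $O(1/\varepsilon)$ distortion on pairs $(x,y)\in X'\times X$ need verification.

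For non-contractivity, the hardest case is $y,z\in X\setminus X'$. Letting $u$ be the LCA of $y$ and $z$ in the extended tree, $u$ is an ancestor of both $v(y)$ and $v(z)$ in $T$, and $x'(y),x'(z)$ lie in the subtree of $u$ in $T$, whence $d_{U'}(x'(y),x'(z))\le \Lambda(u)$. The triangle inequality then yields
\[
d(y,z)\le d(y,x'(y))+d_{U'}(x'(y),x'(z))+d(x'(z),z)\le \Lambda(v(y))+\Lambda(u)+\Lambda(v(z))\le 3\Lambda(u)=d_U(y,z),
\]
while the cases involving core points give strictly tighter bounds. For the distortion on a pair $(x,y)$ with $x\in X'$, $y\in X$, the nearest-core inequality $d(x,y)\ge d(y,x'(y))$, the bound $\Lambda(v(y))=O(d(y,x'(y)))$, and the $O(1/\varepsilon)$ distortion of $U'$ combine, through a short case split on whether $x$ lies in the subtree of $v(y)$ in $T$, to give $d_U(x,y)=O(1/\varepsilon)\cdot d(x,y)$.

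The main obstacle is the non-contractivity analysis for pairs of non-core points: it is exactly what forces the rescaling constant $c=3$ and relies crucially on the exact $2$-HST structure, which allows all three triangle-inequality summands to be bounded by the same label $\Lambda(u)$; a more loosely controlled HST would require a larger scaling factor and a corresponding worsening of the distortion.
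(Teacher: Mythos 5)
Your approach is genuinely different from the paper's. The paper re-runs the recursive decomposition from scratch while maintaining, for each cluster $Z$, a designated core $C(Z)\subseteq Z$; it applies Lemma~\ref{lem:decomp} with the weight restricted to $C(Z)$, and then hands the recursion a \emph{padding set} $R=\{x\in Z: d(x,P)\le \eta_Z\DDelta/2\}$ with core $P$, so that every non-core point is absorbed into whichever side it is close to. The resulting tree is built over all of $X$ from the start, and one proves by induction that the pair $(f, S(X))$ has distortion $4/\eta$. You instead treat Theorem~\ref{thm:ramsey-ultrametrics} as a black box and post-process its output by grafting the non-core points onto the existing tree. That is a legitimate and arguably cleaner strategy, and it does not require reopening the decomposition argument.

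However, there is a genuine gap. Your key quantitative claim is that ``the exact $2$-HST structure and the depth choice guarantee $\Lambda(v(y))<2d(y,x'(y))$.'' This is not true for the tree you actually build. Rounding labels up to powers of $2$ and inserting single-child nodes makes every \emph{internal-to-internal} edge have ratio exactly $2$, and you pad the top, but you do not pad between a leaf and its lowest internal ancestor. If $\ell$ denotes the parent of the leaf $x'(y)$ in $T$ and $\Lambda(\ell)=L$, then whenever $d(y,x'(y))<L$ the deepest ancestor of $x'(y)$ whose label is at least $d(y,x'(y))$ is forced to be $\ell$ itself, so $\Lambda(v(y))=L$, which may exceed $d(y,x'(y))$ by an arbitrarily large factor. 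Nothing in Theorem~\ref{thm:ramsey-ultrametrics} rules this out: $L$ is roughly the distance from $x'(y)$ to the \emph{nearest other core point}, while $d(y,x'(y))$ can be much smaller because $y$ was simply not selected into the core. This breaks your distortion bound precisely on the pair $(x'(y),y)$, which is exactly a pair the definition of Ramsey embedding requires you to control (one endpoint in the core): you get $d_U(x'(y),y)=3\Lambda(v(y))=3L$, while $d(x'(y),y)=d(y,x'(y))$, so $\dist_f(x'(y),y)=3L/d(y,x'(y))$ is unbounded, not $O(1/\e)$. (For any other $x\in X'$ in the subtree of $v(y)$ you are saved by $d(x,x'(y))\ge L/(16t)$, but not for $x=x'(y)$.) Note that non-contractivity is unaffected; only the one-sided stretch bound fails.

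The fix is easy but must be made explicit: pad the bottom of $T$ as well, inserting below each leaf's parent a chain of single-child nodes whose labels halve down to (at most) the minimum nonzero distance in $X$ from that leaf, so that \emph{every} power of $2$ from the root label down to this minimum appears on the root-to-leaf path. This insertion does not change any leaf-to-leaf distance, so it preserves non-contractivity and the distortion bound on $X'\times X'$, and with it your claim $\Lambda(v(y))<2d(y,x'(y))$ becomes correct, after which the rest of your case analysis and the rescaling by $3$ go through. With that amendment the argument is sound and gives a genuinely different, modular proof of Theorem~\ref{thm:ramsey-embedding} that reuses Theorem~\ref{thm:ramsey-ultrametrics} directly, at the cost of worse constants than the paper's $4/\eta$ bound and of a tree whose depth grows with the aspect ratio.
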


\def\ThmPartialMetricRamseyEmbeddingTheorem{
For every $\delta >0$ and $\epsilon >0$, and any $n$-point metric
space $X$, there exists a Ramsey $(1-\epsilon)$-partial embedding
of $X$ into an ultrametric with a core subspace $Y$ of size $\geq
\delta n$ and distortion $O(\lceil \log_{1/\delta} 1/\epsilon)
\rceil$. }

\begin{theorem}\label{thm:ramsey-embedding-partial}
\ThmPartialMetricRamseyEmbeddingTheorem
\end{theorem}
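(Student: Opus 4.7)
The plan is to combine the base-case trick from the proof of Theorem~\ref{thm:ramsey-partial} with the Ramsey-embedding construction underlying Theorem~\ref{thm:ramsey-embedding}. The former extends Theorem~\ref{thm:ramsey-ultrametrics} to produce a linear-size core at the cost of $\epsilon\binom{n}{2}$ excluded pairs; the latter extends the basic construction so that all of $X$, not merely the core, is embedded, while preserving the core-to-anywhere distortion bound. The proof will be the natural marriage of the two.

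Concretely, on input $Z \subseteq X$, if $w(Z) \leq \epsilon\cdot w(X)$ I would declare $S(Z)=Z$ and let $U(Z)$ be the star with leaves $Z$ and root label $\ddiam(Z)$. Otherwise, I would invoke Lemma~\ref{lem:decomp} with $\DDelta = \ddiam(Z)/2$ and $\eta = 1/\lceil c\log_{1/\delta}(1/\epsilon)\rceil$ for an appropriate constant $c$; as in the construction behind Theorem~\ref{thm:ramsey-embedding}, a two-sided variant of the lemma (obtained by running the same pigeonhole over roughly twice as many shells) yields simultaneously an inner region $P \subseteq Q$ and an outer region $\bar P \subseteq \bar Q$ with both $d(P,\bar Q)$ and $d(\bar P,Q)$ at least $\Omega(\eta\DDelta)$. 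Then recurse on $Q$ and on $\bar Q$ to obtain $(f_Q,S_Q)$ and $(f_{\bar Q},S_{\bar Q})$; set $S(Z) = (S_Q \cap P) \cup (S_{\bar Q} \cap \bar P)$, and build $U(Z)$ by joining $U(Q)$ and $U(\bar Q)$ under a new root labelled $\ddiam(Z)$.

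The argument will proceed by induction on $|Z|$. For distortion, I would show that $f_Z$ is non-contractive on $Z$ and that for every $x \in S(Z)$ and $y\in Z$ not contained in a common base-case cluster, $\dist_{f_Z}(x,y) = O(\lceil \log_{1/\delta}(1/\epsilon)\rceil)$: same-side pairs come from the inductive hypothesis, whereas for $x \in S_Q\cap P$ and $y \in \bar Q$ one has $d_U(x,y) = \ddiam(Z)$ together with $d(x,y)\geq d(P,\bar Q) = \Omega(\eta\ddiam(Z))$, and the symmetric case with $x \in S_{\bar Q}\cap \bar P$ and $y \in Q$ is handled through the $\bar P$ guarantee. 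The pairs excluded from $G_\epsilon$ are only those whose endpoints both end up in the same base-case cluster $Z'$; since these clusters form a laminar family with weights at most $\epsilon\cdot w(X)$, summing $\binom{|Z'|}{2}$ yields the required bound of $\epsilon\binom{n}{2}$ excluded pairs. The lower bound $|S(X)| \geq \delta n$ follows from the weighted induction of Theorem~\ref{thm:ramsey-partial}, applied simultaneously to $P$ and $\bar P$ at each step, giving $w^\psi(S(X)) \geq \delta \cdot w(X)$ with $\psi = 1 - O(\eta)$.

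The main obstacle is precisely the two-sided separation: Lemma~\ref{lem:decomp} as stated yields only a one-sided buffer between $P$ and $\bar Q$, so the core extracted from the $\bar Q$-side recursion has no direct guarantee of being far from $Q$. The fix is the two-sided pigeonhole already required for Theorem~\ref{thm:ramsey-embedding}; this costs only a constant factor in the distortion but forces the inductive weight accounting to track both $w(P)$ and $w(\bar P)$ simultaneously, ensuring that the buffer weight lost at each level still leaves enough total core to reach $\delta n$. Once this two-sided decomposition is in hand, the remaining arguments mirror those of Theorems~\ref{thm:ramsey-partial} and~\ref{thm:ramsey-embedding} very closely, the only new wrinkle being that the star used in the base case must be compatible with the core-to-anywhere distortion requirement, which it trivially is since all leaves of the star are at equal distance $\ddiam(Z)$.
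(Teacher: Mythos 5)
Your high-level plan is the correct one, and it matches the paper's: graft the $w(Z)\le\epsilon\cdot w(X)$ base case of Theorem~\ref{thm:ramsey-partial} (star with root label $\ddiam(Z)$, whole cluster excluded from $G_\epsilon$) onto the Ramsey-embedding construction behind Theorem~\ref{thm:ramsey-embedding}, with $\eta=1/\lceil c\log_{1/\delta}(1/\epsilon)\rceil$. But the construction you then describe rests on a false premise about what Theorem~\ref{thm:ramsey-embedding} needs, and the fix you propose has an unresolved gap.

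You assert that the Ramsey-embedding construction ``already requires'' a two-sided decomposition in which both $d(P,\bQ)$ and $d(\bP,Q)$ are $\Omega(\eta\DDelta)$. It does not. The paper keeps the one-sided Lemma~\ref{lem:decomp} (in its core-restricted $w_C$ variant, producing $P\subset Q\cap C(Z)$ with $d(P,\bQ)\ge\eta_Z\DDelta$), and then simply \emph{moves the cut into the middle of the existing gap}: it sets $R=\{x\in Z\mid d(x,P)\le\eta_Z\DDelta/2\}$ and recurses on $R$ with core $C(R)=P$ and on $\bR$ with core $C(\bR)=\bQ$. Then $P\subseteq R$, $\bQ\subseteq\bR$, and two-sided separation from the new $R/\bR$ boundary comes for free: $x\in P$ and $y\in\bR$ gives $d(x,y)>\eta_Z\DDelta/2$ directly from the definition of $R$, while $x\in\bQ$ and $y\in R$ gives $d(x,y)\ge d(P,\bQ)-\eta_Z\DDelta/2\ge\eta_Z\DDelta/2$ by the triangle inequality. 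No extra pigeonhole is run, and no weight is lost on the $\bQ$ side.

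Your two-sided pigeonhole, by contrast, opens a genuine hole in the weight accounting that your sketch names but does not close. Running the pigeonhole over more shells lets you pick $i$ controlling the \emph{inner} ratio $w(Q_i)/w(Q_{i-1})$; it gives no handle on the \emph{outer} ratio $\bigl(w(X)-w(Q_{i+1})\bigr)/\bigl(w(X)-w(Q_i)\bigr)$, so the buffer $\bQ\setminus\bP$ may carry almost all the outer weight. The inequality that closes the induction in the paper is $w(P)\cdot\left(\bsize(Z)/\bsize(Q)\right)^{1/t}+w(\bQ)\ge w(Z)$, and it relies on $\bQ$ entering \emph{undiminished}; you cannot replace $w(\bQ)$ by $w(\bP)$ and get a matching compensating factor, since $\bsize(\bQ)$ may already equal $\bsize(Z)$. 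A second lossy buffer thus requires a new decomposition lemma, not a routine doubling of shells. Separately, defining $S(Z)=(S_Q\cap P)\cup(S_{\bQ}\cap\bP)$ by post-hoc intersection yields no size guarantee unless the recursive call on $Q$ is already forced to produce its core inside $P$ --- which is exactly what the paper's bookkeeping through $C(\cdot)$ and the restricted weight $w_C$ accomplishes, and which your sketch leaves implicit.
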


\def\ThmScalingMetricRamseyEmbeddingTheorem{ For
every $\delta >0$, any $n$-point metric space $X$, there exists a
Ramsey embedding of $X$ into an ultrametric with a core subspace
$Y$ of size $\geq \delta n$ and scaling distortion $O(\lceil
\vartheta(\log_{1/\delta} 1/\epsilon)) \rceil$. As a consequence
its $\ell_q$-distortion is bounded by $O(\lceil
\vartheta(\log_{1/\delta} q)) \rceil$. }

\begin{theorem}\label{thm:ramsey-embedding-scaling}
\ThmScalingMetricRamseyEmbeddingTheorem
\end{theorem}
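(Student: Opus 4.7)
The plan is to adapt the recursive algorithm from the proof of Theorem~\ref{thm:ramsey-scaling} so that the resulting ultrametric $U_X$ contains a leaf for every point of $X$, not merely for the recursively-chosen core subspace $S(X)$. At each recursive step on a subspace $Z$, I apply Lemma~\ref{lem:decomp} with $\DDelta=\ddiam(Z)/2$ and $\eta=1/(6\vartheta(\ell(Z)))$ to obtain the partition $(Q,\bQ)$ and the inner set $P\subseteq Q$; I then recurse on $P$ and on $\bQ$ to produce $(f_P,S(P),U_P)$ and $(f_{\bQ},S(\bQ),U_{\bQ})$, and set $S(Z):=S(P)\cup S(\bQ)$. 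The new ingredient is that the annular points of $Q\setminus P$, which were dropped in the proof of Theorem~\ref{thm:ramsey-scaling}, are now attached to $U_Z$ as leaves: the tree $U_Z$ has root of label $\ddiam(Z)$ with two children $\hat U_Q$ and $U_{\bQ}$, where $\hat U_Q$ is formed by placing a new root of label $\ddiam(Q)$ above $U_P$ and hanging each point of $Q\setminus P$ as a sibling of $U_P$'s root.

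With this construction the core-size bound $w^\xx(S(X))\geq\delta\cdot w(X)$ is inherited verbatim from the proof of Theorem~\ref{thm:ramsey-scaling} because $S(Z)$ satisfies the same recurrence. Non-contractiveness and the identity $\ddiam(U_Z)=\ddiam(Z)$ follow by a straightforward induction, since every internal label in $U_Z$ dominates the diameter of the corresponding recursive subspace and hence the distance between any two leaves that first separate at that level.

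The main technical obstacle is the scaling distortion bound for core pairs. Fix $(x,y)$ with $x\in S(X)$ and $(x,y)\in G_\epsilon$, and let $Z^*$ be the smallest recursive subspace containing both. When $x$ and $y$ lie on opposite sides of the $(Q^*,\bQ^*)$-split, the argument from Theorem~\ref{thm:ramsey-scaling} applies unchanged: assuming $d(x,y)\leq\ddiam(Z^*)/4$, the scaling condition $d(x,y)\geq r_{\epsilon/2}(x)$ forces $\bsize(Z^*)\geq(\epsilon/2)\,w(X)$, hence $\ell(Z^*)\leq\log_{1/\delta}(2/\epsilon)$; combined with the separation $d(x,y)\geq\eta_{Z^*}\DDelta_{Z^*}$ from Lemma~\ref{lem:decomp}, this yields distortion $O(\vartheta(\log_{1/\delta}1/\epsilon))$. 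The genuinely new case is when $y\in Q^*\setminus P^*$, since Lemma~\ref{lem:decomp} does not lower-bound $d(P,Q\setminus P)$. I address this by strengthening the lemma to produce a decomposition with pairwise gaps of order $\eta\DDelta$ among all three parts $P$, $Q\setminus P$, and $\bQ$; this is done by partitioning the radius range $[\DDelta/4,\DDelta/2]$ into $2t$ shells and selecting two adjacent shells of small growth, at the cost of only a constant factor in $\eta$. With this strengthened decomposition the preceding weight-concentration argument carries over verbatim to the new case.

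Finally, the $\ell_q$-distortion bound $O(\vartheta(\log_{1/\delta}q))$ follows from the scaling distortion by the standard integration argument of \cite{ABN06}, using the normalization $\int_1^\infty dx/\vartheta(x)=1$.
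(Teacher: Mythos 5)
You correctly identify the key obstacle: once the ultrametric must cover all of $X$, one needs to bound the distortion between core points in $P$ and the annular points of $Q\setminus P$, and Lemma~\ref{lem:decomp} gives no lower bound on $d(P,Q\setminus P)$ or on $d(Q\setminus P,\bQ)$. However, the fix you propose cannot work. Since $P$ and $Q$ are concentric balls around $v$, the set $Q\setminus P$ is an annulus abutting both $P$ and $\bQ$. Selecting shells of small weight-growth (whether one shell or two adjacent ones) controls the weight function, not the geometry: the annulus may still contain points arbitrarily close to the boundary of $P$ and arbitrarily close to $\bQ$. So the strengthened lemma you posit---pairwise gaps of order $\eta\DDelta$ among all three of $P$, $Q\setminus P$, and $\bQ$---simply does not exist, and hanging $Q\setminus P$ as sibling leaves at label $\ddiam(Q)$ leaves pairs $(x,y)$ with $x$ in the core and $y$ annular at unbounded distortion.

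The paper sidesteps this differently, without ever needing a gap between $P$ and $Q\setminus P$. It carries a core set $C(Z)\subseteq Z$ through the recursion (initially $C(X)=X$), applies the decomposition to the weight restricted to $C(Z)$ so that $P\subseteq Q\cap C(Z)$, and then recurses on the padded set $R=\{x\in Z : d(x,P)\leq\eta_Z\DDelta/2\}$ with core $C(R)=P$, and on $\bR=Z\setminus R$ with core $C(\bR)=\bQ$. Annular points near $P$ fall into $R$ and are handled inside $U(R)$; those far from $P$ fall into $\bR$. The required separations then come for free: if $x\in S(R)\subseteq P$ and $y\in\bR$ then $d(x,y)>\eta_Z\DDelta/2$ by the very definition of $R$, while if $x\in S(\bR)\subseteq\bQ$ and $y\in R$ then $d(x,y)\geq d(P,\bQ)-\eta_Z\DDelta/2\geq\eta_Z\DDelta/2$. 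With $\DDelta=\ddiam(Z)/2$ this yields distortion $4/\eta_Z$ across the $R/\bR$ split, after which the scaling and core-size analyses proceed essentially as you outline.
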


To modify the proofs in Section~\ref{sec:ramsey} we need to build
an ultrametric over the entire space $X$ rather than just on the
subspace $S(X)$. Let us describe how to modify the constructions.

It would be useful to use a variation of  Lemma~\ref{lem:decomp} when replacing the function $w$ with $w_C$ which is restricted to a subset $C$. Similar variation can be applied to the theorems of Section~\ref{sec:ramsey} (similarly replacing $w^\phi$ with $w_C^\psi)$.

Let $X$ be an arbitrary metric space. The algorithm builds an
ultrametric recursively. For a $Z \subseteq X$ we also maintain a
core $C(Z) \subseteq Z$. 
Initially $Z=C(Z)=X$. We will construct a
subspace $S(Z) \subseteq C(Z)$ and an ultrametric $U(S(Z))$
recursively as follows: use the decomposition described in
Section~\ref{sec:ramsey-decomposition} on $Z$ with the variation described above for $C=C(Z)$, and with $\eta_Z$
defined appropriately as in the proofs of the theorems in
Section~\ref{sec:ramsey} and $\DDelta=\ddiam(Z)/2$ to obtain a
partition $(Q,Z\setminus Q)$ of the metric space on $Z$, and
a set $P \subset Q \cap C(Z)$ satisfying Lemma~\ref{lem:decomp}. Let $R = \{
x \in Z | d(x,P) \leq \eta_Z \DDelta/2 \}$. Run the algorithm
recursively on $R$ with core $C(R)=P$ and on $\bR$ with core
$C(\bR)=\bQ$, obtaining subspaces $S(R)$ and $S(\bR)$ and
ultrametrics $U(R)$ and $U(\bR)$ respectively. It is easy to
verify that construction above is indeed valid satisfying $P
\subseteq R$ and $\bQ \subseteq \bR$, where the second condition
follows as $d(P,\bQ) > \eta_Z \DDelta$.  
 Let $S(Z) = S(R)\cup S(\bR)$. The ultrametric $U(Z)$ is constructed by creating a root
$r$ labeled with $\Lambda(r)=\ddiam(Z)$ with two children at which
we root the trees defining $U(R)$ and $U(\bR)$.

We prove by induction on the size of $Z \subset X$ that there is a
Ramsey embedding of $Z$ into $U(Z)$ with distortion $4/\eta_Z$.

If $Z$ includes a single point $z$ then the claim trivially holds.
Assume by induction that the claim holds for strict subsets of
$Z$. Consider $x \in S(Z)$ and $y \in Z$. If $x \in S(R)$ and $y
\in R$ their distance in $U(Z)$ is the same as in $U(R)$ and
therefore the claim follows from the induction hypothesis. If $x
\in S(\bR)$ and $y \in \bR$ then a similar argument holds. Let $x
\in S(R) \subseteq P$ and $y \in \bR$ then by definition of $R$,
$d(x,y) \geq \eta_Z \DDelta/2$. Similarly if $x \in S(\bR)
\subseteq \bQ$ and $y \in R$ then by Lemma~\ref{lem:decomp}
$d(x,y) \geq d(P,\bQ) - \eta_Z \DDelta/2 \geq \eta_Z \DDelta/2$.
Hence in both cases $d(x,y) \geq \eta_Z \DDelta/2 \geq \eta_Z
\ddiam(Z)/4 $. As $d(x,y) \leq \ddiam(Z)$ it follows that $d(x,y)
\leq d_{U(Z)}(x,y) \leq 4/\eta_Z\cdot d(x,y)$, and we conclude that
our Ramsey embedding has distortion $4/\eta$. The rest of the
proofs is the same as in Section~\ref{sec:ramsey}.

\subsection{Ramsey Covers}

 Let $\M$ be a class of metric spaces. Assume that given a metric space $X$ we can construct a Ramsey
embedding of $X$ into $Y \in \M$ of size at least $\beta|X|$ which
embeds with distortion $\alpha$ (possibly a function of pairs in
$X$).

We build a Ramsey cover as follows. We apply this construction
iteratively a follows: Let $X_0 = X$ and let $Z_0$ be the core
subspace of $X$, and $Y_0$ be image of $X_0$ in $\M$ under the
Ramsey embedding. For $i>0$ let $X_i = X_{i-1} \setminus Z_{i-1}$
and let $Z_i$ be the core subspace of $X$, and $Y_i$ be the image
of $X_i$ in $\M$ under the Ramsey embedding.

This construction yields a collection of spaces $Y_0, \ldots Y_t
\in \M$, where $|Y_i| \leq (1-\beta)^i |X|$, such that for every
$x \in X$ there exists $i \in [ t ]$ such that for every $y \in
X_i$ the distortion of $x$ and $y$ in $Y_i$ is at most
$\alpha(x,y)$. In particular, for $s\geq 1$, $\sum_i |Y_i|^s \leq
|X|^s/\beta$.

By using Theorems~\ref{thm:ramsey-embedding},
~\ref{thm:ramsey-embedding-partial},
and~\ref{thm:ramsey-embedding-scaling} in the above construction
of Ramsey cover we obtain a cover by ultrametrics.

\subsection{Application to Proximity Data Structures}

An approximate distance oracle is a data structure for a given
metric space $X$ of size $S$ (the space) such that for every $x,y
\in X$ an approximation of the distance between them can be
computed in time $Q$ (the query time) and distortion $D$ (also
called stretch).

Thorup and Zwick \cite{TZ05} gave a randomized construction of
approximate distance oracles of size $O(t\cdot n^{1+1/t})$,
distortion $2t-1$ and query time $O(k)$, for any $k \in
\mathbb{N}$. Mendel and Naor \cite{MN06} gave a different
randomized construction of approximate distance oracles of size
$n^{1+1/t}$, distortion $O(t)$ and query time $O(1)$. The
preprocessing expected time is $O(n^{2+1/t}\log n)$. In
\cite{TZ05} it is shown that this space-distortion tradeoff is
best possible up to the constants.

One can naturally define $(1-\epsilon)$-partial and scaling
distortion distance oracles. These notions have been previously
studied in \cite{ABN06,CDG06} where the results of \cite{TZ05}
have been adapted to accommodate these notions. In particular, the
scaling distortion constructions imply constant average distortion
in $O(n\log n)$ space.

 Using the construction of the Ramsey cover by ultrametrics
described above we can obtain several new {\em deterministic}
constructions of approximate distance oracles. The application
follows since computing the distance in an ultrametric can be done
by computing the LCA of the two leaves in $O(1)$ time. We get the
following results:
\begin{theorem}
There exist deterministic constructions with the following
properties:
\begin{enumerate}
\item For every $t \geq 1$, there exists an approximate distance
oracle with space: $n^{1+1/t}$, distortion: $O(t)$, and query time
$O(1)$.
\item For every $t \geq 1$ and $\epsilon \in (0,1)$, there exists a $(1-\epsilon)$-partial
approximate distance oracle with space: $n\cdot
(\frac{1}{\epsilon})^{1/t}$, distortion: $O(t)$, and query time:
$O(1)$.
\item For every $t \geq 1$ and $\epsilon \in (0,1)$, there exists a
approximate distance oracle with space: $n^{1+1/t}$, distortion:
$O(\lceil \vartheta(t \frac{\log \frac{1}{\epsilon}}{\log n} )
\rceil )$, and query time: $O(1)$.
\end{enumerate}
The preprocessing time can be bounded\footnote{The bound on the
preprocessing time for the algorithm described here is larger by a
factor of $n$. This can be improved by a more involved
implementation and the details are left for the full version.} by
$O(n^2)$.
\end{theorem}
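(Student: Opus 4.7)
The plan is to apply the Ramsey cover construction of the previous subsection to each of Theorems~\ref{thm:ramsey-embedding}, \ref{thm:ramsey-embedding-partial}, and \ref{thm:ramsey-embedding-scaling}, and to exploit the fact that since every element of the cover is an ultrametric, exact distance queries in it reduce to LCA queries on the defining tree and can be answered in $O(1)$ time after linear-time preprocessing.

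For part (1), I would instantiate the cover with Theorem~\ref{thm:ramsey-embedding} at parameter $\varepsilon=1/t$, obtaining Ramsey embeddings with core fraction $\beta=n^{-1/t}$ and worst-case distortion $O(t)$. For part (2), I would invoke Theorem~\ref{thm:ramsey-embedding-partial} with $\delta=\epsilon^{1/t}$, so that $\log_{1/\delta}(1/\epsilon)=t$ and the core fraction is $\beta=\delta=\epsilon^{1/t}$, while the distortion on good pairs is $O(t)$. For part (3), I would use Theorem~\ref{thm:ramsey-embedding-scaling} with $\delta=n^{-1/t}$; then $\log_{1/\delta}(1/\epsilon)=t\log(1/\epsilon)/\log n$ and the scaling distortion becomes $O(\lceil\vartheta(t\log(1/\epsilon)/\log n)\rceil)$.

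In each case the iterated construction yields ultrametrics $Y_0,Y_1,\ldots$ with $|Y_i|\le(1-\beta)^i n$. I would preprocess each $Y_i$ for constant-time LCA queries on its defining tree, and for every point $x\in X$ store the unique index $i_x$ with $x\in Z_{i_x}$. The total storage telescopes to $\sum_i|Y_i|\le n/\beta$, giving the bounds $n^{1+1/t}$, $n\cdot(1/\epsilon)^{1/t}$, and $n^{1+1/t}$ respectively, plus $O(n)$ for the index table. A query on a pair $(x,y)$ sets $i^{\star}=\min\{i_x,i_y\}$ and returns $d_{Y_{i^{\star}}}(x,y)$ in $O(1)$ time; by the choice of $i^{\star}$ both $x,y\in X_{i^{\star}}$ and one of them lies in the core $Z_{i^{\star}}$, so the Ramsey embedding distortion bound applies to that pair in $Y_{i^{\star}}$.

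The main subtlety I expect is the per-pair accounting in part (2): one must argue that although Theorem~\ref{thm:ramsey-embedding-partial} only guarantees bounded distortion on a $(1-\epsilon)$-fraction of core-incident pairs per iteration, the total excluded fraction over all iterations does not exceed $\epsilon\binom{n}{2}$, yielding a genuine $(1-\epsilon)$-partial oracle; this follows because the bad pairs at iteration $i$ are those between a core point in $Z_i$ and one of its $\epsilon|X_i|$ nearest neighbors in $X_i$, and summing over $i$ telescopes to $\epsilon\binom{n}{2}$. The remaining ingredients, namely the geometric summation bounding space by $n/\beta$ and the standard LCA machinery for $O(1)$ query time, are routine. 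The $O(n^2)$ preprocessing bound is obtained by observing that each Ramsey embedding on $X_i$ can be constructed in $O(|X_i|^2)$ time using the decomposition of Section~\ref{sec:ramsey-decomposition}, and $\sum_i |X_i|^2=O(n^2)$ by the geometric decay of $|X_i|$.
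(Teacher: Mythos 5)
Your proposal matches the paper's (implicit) argument: the paper states this theorem without a written proof, immediately after describing the Ramsey cover construction and noting that distances in an ultrametric reduce to LCA queries, and your instantiations are precisely the ones that construction calls for. The parameter choices are correct and essentially forced: $\varepsilon=1/t$ in Theorem~\ref{thm:ramsey-embedding} for part~(1); $\delta=\epsilon^{1/t}$ in Theorem~\ref{thm:ramsey-embedding-partial} for part~(2), giving $\log_{1/\delta}(1/\epsilon)=t$ and core fraction $\epsilon^{1/t}$; and $\delta=n^{-1/t}$ in Theorem~\ref{thm:ramsey-embedding-scaling} for part~(3), giving $\log_{1/\delta}(1/\epsilon)=t\log(1/\epsilon)/\log n$. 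Storing the unique index $i_x$ with $x\in Z_{i_x}$, querying the LCA in $Y_{\min\{i_x,i_y\}}$, and bounding storage by $\sum_i|Y_i|\le n/\beta$ via the geometric decay all align with the Ramsey cover lemma, and your remark on the per-pair accounting for the partial case (that the pair is charged only once, at the iteration where the earlier-removed endpoint lies in the core) is the right way to see why the excluded fraction stays $O(\epsilon)\binom{n}{2}$.

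The one place where your argument is off is the preprocessing bound. You claim $\sum_i|X_i|^2=O(n^2)$ ``by the geometric decay,'' but the geometric sum gives $\sum_i|X_i|^2\le n^2\sum_i(1-\beta)^{2i}=O(n^2/\beta)$, which for $\beta=n^{-1/t}$ is $n^{2+1/t}$, not $O(n^2)$. Moreover, the recursive Ramsey-embedding construction on $X_i$ is not obviously $O(|X_i|^2)$: the recursion tree for building $U(X_i)$ can have depth $\Theta(|X_i|)$ with $\Theta(|X_i|^2)$ work per level, so the naive bound per embedding is closer to $O(|X_i|^3)$. These are consistent with the paper's own footnote, which concedes that the algorithm as described is off by a factor of $n$ from the stated $O(n^2)$ and that the improvement requires a more careful implementation; you should not claim the clean $O(n^2)$ bound follows from the naive argument.
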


In particular we get the following corollary:
\begin{corollary}
There exists a deterministic construction of an approximate
distance oracle with space: $O(n)$, distortion: $O(\vartheta(\log
\frac{1}{\epsilon}) )$ and query: $O(1)$. In particular it has
average distortion: $O(1)$ and $\ell_q$-distortion\footnote{the
worst case distortion can bounded by $O(\log n)$ by modifying
slightly the definition of $\vartheta$.} $O(\min \{\vartheta(q),
\log n \})$.
\end{corollary}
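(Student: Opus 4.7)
The plan is to derive the corollary directly from part~(3) of the preceding theorem by a single parameter choice, and then convert the resulting scaling distortion guarantee into the stated average-distortion and $\ell_q$-distortion bounds.

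First I would instantiate part~(3) with $t = \log n$. This yields space $n^{1+1/\log n} = O(n)$, query time $O(1)$, and scaling distortion
\[
O\!\left(\left\lceil \vartheta\!\left(\log n \cdot \tfrac{\log(1/\epsilon)}{\log n}\right)\right\rceil\right) \;=\; O(\vartheta(\log(1/\epsilon))),
\]
which is the first claim of the corollary verbatim. To translate this into bounds on the moments of the distortion, I would invoke the standard relationship between scaling distortion and $\ell_q$-distortion (as in \cite{ABN06}): if the oracle is non-contractive with scaling distortion $\alpha(\epsilon)$, then
\[
\E[\dist_f^q] \;\lesssim\; \int_0^1 \alpha(\epsilon)^q \, d\epsilon.
\]
Substituting $\alpha(\epsilon)=O(\vartheta(\log(1/\epsilon)))$ and changing variables $u = \log(1/\epsilon)$ (so $d\epsilon = e^{-u} du$) transforms the right-hand side into $\int_0^\infty \vartheta(u)^q e^{-u}\,du$. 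The exponential weight dominates the (at most) polylogarithmic growth of $\vartheta$, so for $q=1$ the integral is a constant, giving constant average distortion; for general $q$ the integrand concentrates near $u \approx q$, and a direct Laplace-type estimate yields $\ell_q$-distortion $O(\vartheta(q))$.

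For the $O(\log n)$ worst-case bound referenced in the footnote, I would truncate the scaling function, replacing $\vartheta(x)$ by $\tilde\vartheta(x) := \min\{\vartheta(x),\, c\log n\}$ for a suitable constant $c$. Because there are only $\binom{n}{2}$ pairs, the relevant range of $\epsilon$ in the scaling guarantee is bounded below by $\Omega(1/n^2)$, so the uncapped bound never exceeds $\vartheta(\log(n^2))$, and the truncation affects neither the normalization $\int_1^\infty dx/\vartheta(x) = 1$ nor the moment integrals up to absorbed constants. Combining the two estimates gives $\ell_q$-distortion $O(\min\{\vartheta(q), \log n\})$. The only subtlety, rather than a genuine obstacle, is verifying that the Ramsey cover assembled in part~(3) propagates the scaling guarantee uniformly to the oracle's query procedure: this follows routinely because each point $x$ is assigned, at preprocessing, a single ultrametric from the cover in which $x$ lies in the core subspace and every $y \in X$ is answered with distortion at most $\alpha(\epsilon)$ on the appropriate fraction of pairs, and each ultrametric query costs $O(1)$ time via precomputed LCA labels.
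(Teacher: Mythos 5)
Your derivation is correct and follows the paper's intended (implicit) route: instantiating part~(3) of the preceding theorem with $t = \log n$ yields space $n^{1+1/\log n} = O(n)$, query $O(1)$, and scaling distortion $O(\vartheta(\log(1/\epsilon)))$, and the passage from scaling distortion to $\ell_q$-distortion via $\E[\dist_f^q] \lesssim \int_0^1 \alpha(\epsilon)^q\,d\epsilon = \int_0^\infty \vartheta(u)^q e^{-u}\,du$ is exactly the mechanism of \cite{ABN06} already invoked in the scaling Ramsey embedding theorem.

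One point in your treatment of the footnote needs care. The truncation $\tilde\vartheta(x) := \min\{\vartheta(x), c\log n\}$ does \emph{not} preserve the normalization: a bounded function satisfies $\int_1^\infty dx/\tilde\vartheta(x) = \infty$, not $1$, so the size guarantee $w^\psi(S(X)) \geq \delta w(X)$ in the scaling Ramsey theorem would no longer follow from the stated argument. The correct observation is that the construction only ever evaluates $\vartheta$ at arguments $\ell(Z) \leq \log_{1/\delta} n$, so $\vartheta$ may be redefined freely above that point to restore integrability. However, even with this freedom, a monotone $\vartheta$ with $\int_1^{\log n} dx/\vartheta(x) \leq 1$ and $\vartheta(1) = O(1)$ (needed for constant average distortion) necessarily has $\vartheta(\log n) = \omega(\log n)$, so obtaining the exact $O(\log n)$ worst-case bound claimed in the footnote requires something beyond a reweighting of $\vartheta$ alone — e.g.\ running the basic Ramsey construction (Theorem~\ref{thm:ramsey-ultrametrics} with $t = \log n$) in parallel and taking the better of the two answers per query. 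The paper does not spell this out either, so this is a gap in the source rather than in your reconstruction; still, your phrase ``the truncation affects neither the normalization \dots'' is not accurate as written.
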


Similar results can be derived for the approximate ranking problem
via an approach similar to \cite{MN06}.

\subsection{$k$-Local Embeddings and Data Structures}

\setcounter{theorem}{\theThmLocalMetricRamseyTheorem}
\begin{theorem}
\ThmLocalMetricRamseyTheorem
\end{theorem}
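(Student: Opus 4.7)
The plan is to adapt the Ramsey embedding construction of Section 6.1, using the fixed parameters $t = 1/\varepsilon$ and $\eta_Z = \varepsilon/4$ at every recursive level, while replacing the spherical-weight $\bsize$ in the analysis by the $k$-capped version
\[
\bsize_k(Z) \ :=\ \min\{\bsize(Z),\, k\}.
\]
The conceptual basis is that for $k$-local distortion we only need to control pairs $(x,y)$ with $d(x,y) \leq r_k(x)$, and such pairs can only be separated at a scale $\Delta$ where the ball around $x$ of radius $\eta\Delta/2$ contains at most $k$ points; hence the cost of the decomposition in terms of core shrinkage should be governed by $\bsize_k$ rather than $\bsize$.

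The first step is to prove a capped version of Lemma~\ref{lem:decomp}: given $Z$ and $0 < \DDelta \leq \ddiam(Z)/2$, there exists a partition $(Q,\bQ)$ of $Z$ and $P \subseteq Q$ with $\ddiam(Q) \leq \DDelta$, $d(P,\bQ) \geq \DDelta/(4t)$, and
\[
w(P) \ \geq\ w(Q)\cdot \left(\frac{\bsize_k(Z)}{\bsize_k(Q)}\right)^{-1/t}.
\]
When $\bsize(Z) \leq k$ this follows directly from the original lemma, since $\bsize_k$ then coincides with $\bsize$. When $\bsize(Z) > k$, I would choose the center $v$ so as to minimize the capped ratio $\min(w(B(v,\DDelta/2)),k)/\min(w(B(v,\DDelta/4)),k)$ and apply the nested-balls pigeonhole argument on the capped sequence $\min(w(Q_i),k)$. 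A case split on the first index $i^*$ at which $w(Q_{i^*}) \geq k$ handles the saturated range: for $i \leq i^*$ the original pigeonhole applies, and for $i > i^*$ one can afford a trivial split (since at such scales no local pair can be separated, the corresponding distortion contribution is automatically absorbed by the weaker non-local guarantee).

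Next, I would run the recursive Ramsey embedding construction of Section 6.1 using this capped decomposition at every node, building an ultrametric on the entire space $X$ with core subspace $Y := S(X)$. By induction on $|Z|$, mirroring the proof of Theorem~\ref{thm:ramsey-ultrametrics}, I establish
\[
|S(Z)| \ \geq\ |Z|\cdot \bsize_k(Z)^{-1/t}.
\]
The inductive step is essentially identical to the original, using the monotonicity inequalities $\bsize_k(P) \leq \bsize_k(Q)$ and $\bsize_k(\bQ) \leq \bsize_k(Z)$, both preserved by capping. Applied at the root, $\bsize_k(X) \leq k$ yields $|Y| \geq n\cdot k^{-\varepsilon}$. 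For the $k$-local distortion, I rescale the HST labels by a factor of $\eta/4$: the resulting ultrametric is non-expansive, and for any pair $(x,y)$ separated at level $Z$ one has $d_{U(X)}(f(x),f(y)) = \eta\ddiam(Z)/4$, while $d(x,y) \in [\eta\ddiam(Z)/4,\,\ddiam(Z)]$. Hence $d_{U(X)}(f(x),f(y)) \geq \eta d(x,y)/4$ when $d(x,y) \leq r_k(x)$, and $d_{U(X)}(f(x),f(y)) \geq \eta\ddiam(Z)/4 \geq \eta r_k(x)/4$ when $d(x,y) > r_k(x)$ (using $\ddiam(Z) \geq d(x,y) > r_k(x)$), verifying the $k$-local condition with $\alpha = 4/\eta = 16/\varepsilon = O(1/\varepsilon)$.

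The main obstacle is the capped decomposition lemma. The classical pigeonhole argument exploits a clean geometric-mean inequality on the uncapped sequence $w(Q_i)$ that is weakened once some balls saturate at $k$ points. I expect the gap to be closed by the case analysis on the saturation index together with the fact, argued in the second paragraph, that saturated scales cannot correspond to local-pair separations and so only contribute to the weaker non-local distortion bound, which is automatic from the ultrametric construction.
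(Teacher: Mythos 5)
Your approach is genuinely different from the paper's. The paper's proof chooses the ``core subgraph to be partitioned'' $C(Z)$ as the maximum-diameter subspace of $Z$ with at most $k$ points, and rescales the ultrametric label at level $Z$ to $\Delta(C(Z))/(C\cdot t)$; your proposal instead caps the spherical-weight ($\bsize_k(Z)=\min\{\bsize(Z),k\}$), keeps $\DDelta=\ddiam(Z)/2$, and rescales all labels by the uniform factor $\eta/4$. These are not the same construction, and the two substitutions you make both have unresolved gaps.

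First, the capped decomposition lemma is not actually established. The pigeonhole on $\min(w(Q_i),k)$ gives you an index $i$ with $\min(w(Q_i),k)\leq \min(w(Q_{i-1}),k)\cdot(\cdots)^{1/t}$, but this bounds the \emph{capped} quantities, not $w(Q_i)/w(Q_{i-1})$ itself, and the two diverge precisely in the saturated regime: if $w(Q_{i-1})<k\leq w(Q_i)$ (your case A2), or if both exceed $k$ (your ``$i>i^*$'' regime), the capped inequality becomes trivial or nearly so and does not imply $w(Q_{i-1})\geq w(Q_i)\cdot(\bsize_k(Z)/\bsize_k(Q))^{-1/t}$, since $w(Q_i)$ may vastly exceed $k$. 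Restricting the pigeonhole to $i\le i^*$ replaces the exponent $1/t$ by $1/i^*\ge 1/t$, which weakens the bound in the wrong direction. Your remark that ``saturated scales cannot correspond to local-pair separations'' is an observation about the \emph{distortion} side, not the \emph{cardinality} side; it does not repair the weight inequality, which is exactly what controls the size of $S(X)$. So this lemma needs a real proof, and as stated it is doubtful.

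Second, the non-expansiveness claim is incorrect as stated. You assert that for any separated pair $(x,y)$ at level $Z$ one has $d(x,y)\in[\eta\ddiam(Z)/4,\ddiam(Z)]$, but the lower endpoint of that interval is only guaranteed when $x\in P$ (the core). In the Ramsey-embedding construction $R=\{z\in Z: d(z,P)\le \eta\DDelta/2\}$ and $\bR=Z\setminus R$, and a pair $x\in R\setminus P$, $y\in\bR$ straddling the boundary of $R$ can have arbitrarily small distance $d(x,y)$ while their ultrametric distance is fixed at $\eta\ddiam(Z)/4$; the uniform rescaling therefore does \emph{not} yield a non-expansive map on all of $X$, which the definition of $k$-local distortion requires. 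This is exactly the difficulty the paper's choice of label $\Delta(C(Z))/(C\cdot t)$, tied to a $k$-point local diameter rather than to $\ddiam(Z)$, is designed to address, and your construction does not engage with it. To make progress along your lines you would need either to redefine the split $R$ so that $d(R,\bR)$ is bounded below by a fixed fraction of the label at that level (e.g.\ by sweeping an annulus), or to follow the paper in tying the scale of the decomposition and the label to the diameter of a bounded-size local witness set rather than to $\ddiam(Z)$.
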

\setcounter{theorem}{\thetmp}
\begin{proof}
The proof is similar to the one described in the head of this
section. The main difference is that the core subgraph to be
partitioned $C(Z)$ is chosen as the subspace of $Z$ of maximum
diameter amongst all subspaces of size at most $k$. To take care
that the embedding is non-expansive we set the label of the
constructed ultrametric $U(Z)$ to be $\Delta(C(Z))/(C\cdot t)$ for
some appropriate constant $C$. We omit the details of the proof.
\end{proof}

\setcounter{theorem}{\theThmScalingLocalMetricRamseyTheorem}
\begin{theorem}
\ThmScalingLocalMetricRamseyTheorem
\end{theorem}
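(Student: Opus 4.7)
The plan is to combine the Ramsey embedding construction from the preceding subsection (which builds an ultrametric $U(Z)$ over all of $Z$, not just $S(Z)$) with the adaptive decomposition parameter $\eta_Z$ used in the proof of Theorem~\ref{thm:ramsey-scaling} and the local choice of core $C(Z)$ described in the proof of Theorem~\ref{thm:ramsey-local}. Concretely, I would recurse on $Z \subseteq X$ while maintaining a core $C(Z) \subseteq Z$, where at each step $C(Z)$ is a maximum-diameter subspace of $Z$ of size bounded by a threshold reflecting the local density at the centre of the decomposition, and a local scale $\ell(Z)$ is defined to track how the weight of $C(Z)$ relative to $w(X)$ has been depleted along the recursion path. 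I then set $\eta_Z = 1/(C \vartheta(\ell(Z)))$ and invoke Lemma~\ref{lem:decomp} restricted to $C(Z)$, as in the Ramsey embedding construction at the head of this section. The label of $U(Z)$ is set to $\ddiam(C(Z))/(C\cdot t)$ to enforce both non-expansiveness and the $r_k(x)$ cap.

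For the size bound, I would carry out the potential argument from the scaling Ramsey proof: by induction on $|Z|$, $w^\psi(S(Z)) \geq w(Z)\cdot \delta^{\int_{\ell(Z)}^\infty dx/\vartheta(x)}$, combining Lemma~\ref{lem:decomp} with the inequality $(\bsize(Z)/\bsize(Q))^{-1/\vartheta(\ell(Z))} \geq \delta^{\int_{\ell(Z)}^{\ell(Q)} dx/\vartheta(x)}$. Choosing $\delta = n^{-\varepsilon}$ yields $|S(X)| \geq n^{1-\varepsilon}$. For the distortion bound the analysis is pair-by-pair: fix $x, y \in X$ and consider the recursion level $Z$ at which they are first separated. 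If $y$ lies among the $k$ nearest neighbours of $x$ then the relevant $C(Z)$ has size at most $O(k)$, giving $\ell(Z) = O(\log k)$ and hence distortion $O(\vartheta(\log k))$; the label on $U(Z)$, bounded by $O(r_k(x))$, automatically enforces $d_{U(Z)}(x,y) \leq r_k(x)$, which realises the local clause $\min\{d(x,y), r_k(x)\}$ in the definition of scaling local distortion. The uniform bound $O(1/\varepsilon)$ comes from the same accounting as in Theorem~\ref{thm:ramsey-local}, and taking the minimum of the two gives the stated $O(\min\{\vartheta(\log k), 1/\varepsilon\})$ guarantee.

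I expect the main technical obstacle to be the simultaneous calibration of $C(Z)$, $\ell(Z)$, and the label of $U(Z)$: the label must shrink fast enough to give non-expansiveness and to realise the $r_k(x)$ cap on local pairs, yet slowly enough that pairs $(x,y)$ with $y$ outside the $k$-nearest ball of $x$ are still embedded with distortion at most $O(1/\varepsilon)$, while the potential argument for the size of $S(X)$ continues to go through. I anticipate that the cleanest resolution couples $\ell(Z)$ to both a weight-based measure (as in the proof of Theorem~\ref{thm:ramsey-scaling}) and a rank-based measure (as in the proof of Theorem~\ref{thm:ramsey-local}), taking the maximum of the two so that both scaling and local guarantees are enforced at every level; the $\min\{\vartheta(\log k), 1/\varepsilon\}$ bound then drops out as the outcome of the two parallel analyses.
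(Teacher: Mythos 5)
Your high-level plan — recurse with a core $C(Z)$, choose $\eta_Z$ adaptively as in the scaling Ramsey proof, label $U(Z)$ via $\ddiam(C(Z))$ scaled down, and run two parallel analyses for the $\vartheta(\log k)$ and $1/\varepsilon$ bounds — is aligned with the paper's (very terse) sketch, but the one concrete design decision the paper actually records is precisely the one your proposal leaves as an ``anticipated obstacle.'' You set $C(Z)$ to be ``a maximum-diameter subspace of $Z$ of size bounded by a threshold reflecting the local density,'' which is still the $k$-parameterized criterion from Theorem~\ref{thm:ramsey-local}; for a \emph{scaling} local guarantee you need a single parameter-free core at each node of the recursion that works simultaneously for every $k$. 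The paper resolves this by choosing $C(Z)$ to maximize the ratio $\ddiam(H')/|H'|$ (over an appropriate family of candidate subspaces), which is a scale-free ``sparsest ball'' type criterion and replaces your unspecified threshold. Correspondingly, the paper labels $U(Z)$ by $\ddiam(C(Z))/\bigl(C\cdot\min\{t,\vartheta(\log k)\}\bigr)$, where $k$ is tied to $|C(Z)|$, rather than $\ddiam(C(Z))/(C\cdot t)$; without the $\vartheta(\log k)$ term in the denominator, pairs that are close in rank but whose separating recursion level carries a large $C(Z)$ would not get the promised $O(\vartheta(\log k))$ bound.

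On the size side your potential argument (inductive bound $w^{\psi}(S(Z))\ge w(Z)\cdot\delta^{\int_{\ell(Z)}^{\infty} dx/\vartheta(x)}$, then set $\delta=n^{-\varepsilon}$) is the right machinery and matches the scaling proof it is borrowed from, and your observation that the label must both be non-expansive and implement the $\min\{d(x,y),r_k(x)\}$ cap is correct. But as written the proposal still has a genuine gap: you do not exhibit a specific, scale-free definition of $C(Z)$ and $\ell(Z)$ under which the size potential, the non-expansiveness, and the two distortion bounds can all be verified simultaneously; you explicitly flag this calibration as something you ``anticipate'' will work out. The paper's ratio-maximizing criterion is the missing ingredient — it is what makes $\ell(Z)$ well-defined without reference to a particular $k$ and what lets the same $C(Z)$ certify the bound for all $k$ at once.
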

\setcounter{theorem}{\thetmp}
\begin{proof}
Again, we follow the proof described in the head of the section.
To obtain scaling local distortion we modify the subspace to be
partitioned $C(Z)$ to be a subspace $H$ that maximizes
$\Delta(H')/|H'|$ over all subspaces $H'$ such that $H' \cap H
\neq \emptyset$. We set the label of the constructed ultrametric
$U(Z)$ to be $\Delta(C(Z))/(C\cdot \{ \min\{t,\vartheta(\log k)
\})$ for some appropriate constant $C$. We omit the details of the
proof.
\end{proof}

From the theorems above we can deduce the following new proximity
data structure results. We give the distance oracles version
below:

\begin{theorem}
There exist deterministic constructions with the following
properties:
\begin{enumerate}
\item For every $t \geq 1$, and $1 \leq k \leq n$ there exists an approximate distance
oracle with space: $n\cdot k^{1/t}$, local distortion: $O(t)$, and
query time $O(1)$.
\item For every $t \geq 1$, and $1 \leq k \leq n$ there exists an approximate distance
oracle with space: $n^{1+1/t}$,  scaling local distortion:
$O(\min\{ \vartheta(\log k), t\})$, and query time $O(1)$.
\end{enumerate}
\end{theorem}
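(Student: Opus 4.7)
The plan is to follow the same pipeline already laid out in the paper for distance oracles from Ramsey embeddings, but to instantiate it with the local versions established just above in Theorem~\ref{thm:ramsey-local} and Theorem~\ref{thm:ramsey-scaling-local}. The key observation, already noted at the end of the previous subsection, is that once we have a cover of $X$ by ultrametrics in which every point $x$ is served (within the prescribed distortion) by some tree in the cover, we obtain a distance oracle of total space equal to the sum of the tree sizes, with $O(1)$ query time because distances in an ultrametric are just labels of LCAs, which are computable in $O(1)$ after standard LCA preprocessing.

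For part (1), I would plug Theorem~\ref{thm:ramsey-local} into the iterative Ramsey cover construction of Section 5.2, with parameters $\varepsilon = 1/t$, so that each Ramsey embedding has $k$-local distortion $O(t)$ and core subspace of size at least $\beta n$ with $\beta = k^{-1/t}$. The iterative peeling produces a sequence $Y_0, Y_1, \ldots$ of ultrametrics with $|Y_i| \le (1-\beta)^i n$, and for every $x\in X$ there is an index $i$ such that in $Y_i$ the $k$-local distortion guarantee holds from $x$ to every $y\in X_i$. Summing a geometric series, the total storage is bounded by $\sum_i |Y_i| \le n/\beta = n\cdot k^{1/t}$, as required. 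The stored data per tree is linear in its size (tree structure together with LCA labels), giving the claimed space; the query proceeds by looking up the representative ultrametric for the source point and performing the $O(1)$-time LCA query. Non-expansiveness, and the fact that the $k$-local distortion from $x$ is controlled in the ultrametric covering $x$, give the desired $k$-local stretch bound.

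For part (2), the construction is identical but uses Theorem~\ref{thm:ramsey-scaling-local} in place of Theorem~\ref{thm:ramsey-local}, with $\varepsilon = 1/t$. This produces Ramsey embeddings with scaling local distortion $O(\min\{\vartheta(\log k), 1/\varepsilon\}) = O(\min\{\vartheta(\log k), t\})$ and core subspace of size at least $n^{1-1/t}$. Iterating, $\sum_i |Y_i| \le n\cdot n^{1/t} = n^{1+1/t}$, giving the asserted space. The query time is again $O(1)$, and the scaling local stretch guarantee is inherited from the Ramsey embedding at each iteration. The worst-case stretch of $O(t)$ comes from the $1/\varepsilon$ branch of the minimum, while the tighter $\vartheta(\log k)$ bound applies for the $k$ nearest neighbors of the query point.

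The main thing to be careful about is not an obstacle so much as a bookkeeping point: ensuring that the mapping from a point $x$ to its serving tree $Y_{i(x)}$ is itself stored in $O(n)$ auxiliary space, so that the query is truly $O(1)$ and the total space is dominated by $\sum_i |Y_i|$. This is immediate from the peeling construction, since $i(x)$ is exactly the first iteration in which $x$ enters the core subspace. Preprocessing is bounded analogously to the global Ramsey oracle construction; as in the preceding theorem, the naive implementation yields $O(n^2)$ time up to polylogarithmic factors, with the details deferred to the full version.
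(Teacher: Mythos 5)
Your proposal is correct and follows essentially the same approach the paper intends: instantiate the iterative Ramsey-cover construction of Section 5.2 with the local Ramsey embedding theorems (rather than the global ones) and read off space, query time, and stretch from the cover. The paper itself states the theorem as an immediate consequence of the preceding results without further elaboration, and your outline — including the geometric-series/telescoping space bounds $\sum_i |Y_i| \le n k^{1/t}$ and $\sum_i |Y_i| \le n^{1+1/t}$, the $O(1)$ LCA-based query, and the $O(n)$ auxiliary map from each point to its serving tree — is precisely the intended deduction.
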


\section{Ramsey Theorems for Decomposable Metric Spaces}

The main idea for achieving better metric Ramsey theorems for
decomposable metric spaces is to reduce the problem to embedding
metric spaces of small aspect ratio. Such metric spaces can be
embedded with low distortion using Rao's method \cite{rao}. Such a
reduction is not quite possible but it is possible to reduce to
the case of hierarchical metric spaces where each level has small
aspect ratio. We will show that such metric spaces have low
distortion embeddings as well. We recall the following definitions
from \cite{BLMN03}:

\begin{definition}[Metric Composition]\label{def:metric-composition}
Let $M$ be a finite metric space. Suppose that there is a
collection of disjoint finite metric spaces $N_x$ associated with
the elements $x$ of $M$. Let $\N = \{ N_x \}_{x \in M}$. For
$\beta\geq 1/2$, the $\beta$-composition of $M$ and $\N$, denoted
by $C=M_\beta[\N]$, is a metric space on the disjoint union $\dot
\cup_x N_x$. Distances in $C$ are defined as follows. Let $x,y \in
M$ and $u \in N_x, v \in N_y$, then:
\[ d_C(u,v)= \biggl \{\begin{matrix} d_{N_x}(u,v) & x=y \\
  \beta \gamma d_M(x,y) & x\neq y .\end{matrix}  \]
where $\gamma=\frac{\max_{z \in M} \diam(N_z)}{\min_{x\neq y \in
M} d_M(x,y)}$.
\end{definition}

\begin{definition}[Composition Closure] \label{def:comp}
Given a class $\M$ of finite metric spaces, we consider
$\comp_\beta(\M)$, its closure under $\ge \beta$-compositions.
Namely, this is the smallest class $\C$ of metric spaces that
contains all spaces in $\M$, and satisfies the following
condition: Let $M \in \M$, and associate with every $x \in M$ a
metric space $N_x$ that is isometric to a space in $\C$. Also, let
$\beta'\geq \beta$. Then $M_{\beta'}[\N]$ is also in $\C$.
\end{definition}

We prove the following general lemma:
\begin{lemma}
\label{lem:composition-embed} Let $\M$ be a class of finite metric
spaces such that every $M \in \M$ has an embedding into $L_p$ with
distortion $\alpha$ then every $X \in \comp_2(\M)$ embeds into
$L_p$ with distortion $2\alpha$.
\end{lemma}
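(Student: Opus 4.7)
The plan is to argue by induction on the composition depth needed to express $X$ in $\comp_2(\M)$. The base case $X\in\M$ is immediate since any $\alpha$-embedding is also a $2\alpha$-embedding. For the inductive step, write $X = M_\beta[\N]$ with $M\in\M$, $\beta\geq 2$, and each $N_x\in\comp_2(\M)$; by induction each $N_x$ admits a non-contractive embedding $f_x:N_x\to L_p$ of distortion $\leq 2\alpha$, and by hypothesis on $\M$ there is a non-contractive embedding $f_M:M\to L_p$ of distortion $\leq\alpha$.

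I will construct $f:X\to L_p$ as an orthogonal direct sum of a ``global'' block carrying a scaled $f_M$ and one ``local'' block per $x\in M$ carrying a translated copy of $f_x$. Explicitly, fix an anchor $c_x\in N_x$, let $g_x(u)=f_x(u)-f_x(c_x)$, and for $u\in N_x$ set
\[
f(u) \;=\; A\cdot f_M(x)\ \oplus\ g_x(u),
\]
with $g_x(u)$ placed in the $x$-th local block (zero in all other local blocks) and $A=\beta\gamma$, where $\gamma=\frac{\max_z\diam(N_z)}{\min_{x\neq y}d_M(x,y)}$ as in Definition~\ref{def:metric-composition}. By orthogonality of the blocks one has, for $u\in N_x$ and $v\in N_y$ with $x\neq y$,
\[
\|f(u)-f(v)\|_p^p \;=\; A^p\|f_M(x)-f_M(y)\|_p^p + \|g_x(u)\|_p^p + \|g_y(v)\|_p^p,
\]
while for $u,v\in N_x$ the global block cancels and $\|f(u)-f(v)\|_p=\|f_x(u)-f_x(v)\|_p$.

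Within a single cluster the embedding coincides with $f_x$, which is non-contractive of expansion $\leq 2\alpha$ by induction. For a cross-cluster pair, the lower bound $\|f(u)-f(v)\|_p\geq A\,d_M(x,y)=\beta\gamma\,d_M(x,y)=d_X(u,v)$ already gives non-contractiveness. For the upper bound I will plug $\|f_M(x)-f_M(y)\|_p\leq\alpha\,d_M(x,y)$ together with $\|g_z(w)\|_p\leq 2\alpha\,\diam(N_z)\leq 2\alpha\gamma\,d_M(x,y)$ (by definition of $\gamma$) into the displayed $\ell_p^p$-identity and combine the three contributions; the calibration $A=\beta\gamma$ together with $\beta\geq 2$ is exactly what forces the resulting $\ell_p$-norm to stay below $2\alpha\,d_X(u,v)$.

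The main obstacle is precisely this cross-cluster upper bound: the within-cluster ``noise'' terms $\|g_x(u)\|_p,\|g_y(v)\|_p$ are of order $\alpha\gamma d_M(x,y)$, and must be absorbed into the cross-cluster target $d_X(u,v)=\beta\gamma\,d_M(x,y)\geq 2\gamma\,d_M(x,y)$. The role of the $\geq 2$ separation built into $\comp_2(\M)$ is exactly to make this absorption possible inside the $\ell_p$-direct-sum computation above, at which point the inductive distortion $2\alpha$ is preserved and the induction closes.
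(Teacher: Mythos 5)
Your overall architecture matches the paper's proof: you induct on composition depth and realize the embedding of $X=M_\beta[\N]$ as an $\ell_p$-direct sum of a $\beta\gamma$-scaled copy of $f_M$ with the inductive embeddings of the $N_x$'s, using $\beta\geq 2$ to absorb the local ``noise'' into $d_X(u,v)=\beta\gamma\,d_M(x,y)$. Whether the local pieces sit in per-cluster blocks (your version) or a single shared block (the paper's) is immaterial.

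There is, however, a genuine gap in the cross-cluster upper bound, and it kills the induction for $p$ near $1$. Your noise bound $\|g_z(w)\|_p\leq 2\alpha\,\diam(N_z)$ comes from applying the inductively assumed expansion $2\alpha$ of $f_z$ to $d(w,c_z)$, and is therefore \emph{coupled to the distortion constant}. Feeding it into your displayed identity, with $\gamma\,d_M(x,y)=d_X(u,v)/\beta\leq d_X(u,v)/2$, gives
\[
\|f(u)-f(v)\|_p^p\ \leq\ \alpha^p d_X(u,v)^p + 2\,(2\alpha)^p\bigl(d_X(u,v)/2\bigr)^p\ =\ 3\,\alpha^p d_X(u,v)^p,
\]
so the expansion is $3^{1/p}\alpha$, which exceeds $2\alpha$ whenever $1\leq p<\log_2 3$; the claimed ``stays below $2\alpha\,d_X(u,v)$'' does not follow. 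Worse, if you weaken the hypothesis to expansion $\leq C\alpha$ the same computation gives $\|f(u)-f(v)\|_p^p\leq\bigl(1+2^{1-p}C^p\bigr)\alpha^p d_X(u,v)^p$, and for $p=1$ the recurrence $C\mapsto 1+C$ has no finite fixed point: the expansion accumulates linearly with the composition depth. The paper avoids this precisely by \emph{not} anchoring at $c_x$: it plugs in $\hat f_{N_x}(u)$ as is and propagates a separate inductive \emph{norm bound} $\sup_u\|\hat f_X(u)\|_p\leq 2\diam(X)$, whose constant $2$ is stable because the recurrence $c\mapsto 1+c/\beta$ has fixed point $\beta/(\beta-1)\leq 2$ for $\beta\geq 2$. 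That norm bound --- not the accumulating expansion constant --- is what controls the cross-cluster local terms, so the expansion of the composed map stays $O(1)$ and the contraction stays $\leq\alpha$, giving $O(\alpha)$ distortion uniformly in the depth. Your construction is repairable in the same spirit: drop the anchoring, place $f_x(u)$ directly in the $x$-th block, and carry the inductive bound $\sup_u\|f_x(u)\|_p\leq 2\diam(N_x)$ to control the cross-cluster contribution.
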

\begin{proof}
Assume every $M \in \M$ has a non-expansive embedding $f_M:M\to
L_p$ with distortion $\alpha$. Let $X \in \comp_2(\M)$. We define
an embedding $\hat{f}_X:X\to L_p$ recursively on the structure of
the metric composition. If $X \in \M$ then $\hat{f}_X=f_X$.
Otherwise let $X = M_{\beta'}[\N]$, $\beta'\geq 2$. Let $u \in X$
such that $u \in N_x$, $x \in M$. Let $\hat{f}_X(u)=(\beta' \gamma
\cdot f_M(x)) \oplus \hat{f}_{N_x}(u)$.

W.l.o.g we may assume that $\| f_M(x) \|_p \leq \diam(M)$. We first
claim by induction that $\|\hat{f}_X(u)\|_p \leq 2\diam(X)$. This
follows as $\|\hat{f}_X(u)\|_p \leq \beta' \gamma \cdot \diam(M) +
2\diam(N_x) \leq \diam(X) + 2\gamma\cdot \diam(M) \leq 2\diam(X)$.

Consider $u,v \in X$. Let $X' = M'_{\beta''}[\N']$, $\beta''\geq 2$, be the first
level in the composition structure such that $u \in N'_x$ and $v
\in N'_y$ for $x \neq y$. Then
\begin{eqnarray*}
\|\hat{f}_X(u)-\hat{f}_X(v)\|_p^p &=&
\|\hat{f}_{X'}(u)-\hat{f}_{X'}(v)\|_p^p =
\|f_{M'}(x)-f_{M'}(y)\|_p^p +
\|\hat{f}_{N_x}(u)-\hat{f}_{N_y}(v)\|_p^p\\
& \leq & \left(\beta''\gamma \cdot d(x,y)\right)^p + \left( 2\max
\{\diam(N_x),\diam(N_y)\} \right)^p \\
& \leq & d(u,v)^p + \left(2\gamma \cdot d(x,y)\right)^p \leq 2
d(u,v)^p .
\end{eqnarray*}

On the other hand
\begin{eqnarray*}
\|\hat{f}_X(u)-\hat{f}_X(v)\|_p & \geq & \|f_{M'}(x)-f_{M'}(y)\|_p
\geq \beta''\gamma \cdot d(x,y)/\alpha = d(u,v)/\alpha .
\end{eqnarray*}
\end{proof}

 Let ${\bf \Phi}$ denote the
class of metric spaces $M$ with aspect ratio at most $\Phi$. Then
we have the following:
\begin{corollary}
\label{cor:composition-Phi}
 Let $X \in \comp_2({\bf \Phi})$ then
$X$ embeds into $L_p$ with distortion $O(\tau_X (\log
\Phi)^{1/p})$.
\end{corollary}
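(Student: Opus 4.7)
The plan is to combine the general metric-composition embedding lemma (Lemma~\ref{lem:composition-embed}) with an embedding of the base class ${\bf \Phi}$ obtained via Rao-style padded decompositions. Concretely, I will first establish a base-case embedding: any finite metric space $M$ of aspect ratio at most $\Phi$ whose decomposability parameter satisfies $\tau_M \leq \tau$ admits a non-expansive embedding into $L_p$ with distortion $O(\tau \cdot (\log \Phi)^{1/p})$. Then, after verifying that decomposability is monotone under $\geq 2$-compositions, I will invoke Lemma~\ref{lem:composition-embed} to lift this bound to the closure $\comp_2({\bf \Phi})$, losing only a factor of $2$ in the distortion.

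For the base case, I would use the standard Rao/KLMN gluing of padded probabilistic partitions (for which the paper's deterministic Ramsey decomposition can also be substituted). Because the aspect ratio is at most $\Phi$, only scales $\Delta_i = 2^i$ with $i$ ranging over a set of size $O(\log \Phi)$ contribute nontrivially. For each such scale one uses a $\tau_M$-padded decomposition to produce an $L_p$-valued coordinate of magnitude proportional to the local padding radius. A non-contractive lower bound on the target distance at the scale $i^*$ with $2^{i^*} \asymp d_M(u,v)$ yields gap $\Omega(d_M(u,v)/\tau_M)$, while a non-expansive upper bound follows by summing in $\ell_p$ across the $O(\log \Phi)$ relevant scales, producing the factor $(\log \Phi)^{1/p}$. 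This gives the advertised $O(\tau_M(\log\Phi)^{1/p})$ distortion.

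Next, I need to argue that if $X \in \comp_2({\bf \Phi})$ and $M$ is any base piece appearing in the composition tree of $X$, then $\tau_M \leq \tau_X$. This is standard: since $M$ is isometric to a sub-metric of a scaled quotient of $X$, any $\tau_X$-padded decomposition of $X$ restricts/projects to a $\tau_X$-padded decomposition of $M$, so $\tau_M \leq \tau_X$. Consequently, letting $\M = \{M \in {\bf \Phi} : \tau_M \leq \tau_X\}$, we have $X \in \comp_2(\M)$, and every $M \in \M$ has a non-expansive embedding into $L_p$ with distortion $\alpha := O(\tau_X (\log \Phi)^{1/p})$ by the base case. Applying Lemma~\ref{lem:composition-embed} yields an embedding of $X$ into $L_p$ with distortion $2\alpha = O(\tau_X(\log\Phi)^{1/p})$, which is the desired bound.

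The main obstacle is the base-case embedding for bounded aspect ratio, since everything else is essentially assembly. The subtlety is twofold: first, restricting to only $O(\log \Phi)$ scales (rather than $O(\log n)$) to reap the $(\log \Phi)^{1/p}$ gain rather than a full Bourgain-type $(\log n)^{1/p}$; and second, making sure the per-scale coordinates combine non-expansively under the $\ell_p$ norm, which is where the $1/p$ exponent on $\log \Phi$ comes from and which is the only place where the argument depends on $p$. Once this is handled with the standard padded-decomposition analysis, Lemma~\ref{lem:composition-embed} supplies the composition step essentially for free.
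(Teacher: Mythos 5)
Your proposal is correct and follows essentially the same route as the paper's one-line proof: apply Lemma~\ref{lem:composition-embed} with $\alpha = O(\tau_X(\log\Phi)^{1/p})$ coming from Rao's padded-decomposition embedding of bounded-aspect-ratio metrics. You additionally spell out the Rao base case and the monotonicity $\tau_M \leq \tau_X$ for the pieces appearing in the composition tree — both of which the paper elides but which are indeed needed to apply Lemma~\ref{lem:composition-embed} with a single $\alpha$.
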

\begin{proof}
Apply Lemma~\ref{lem:composition-embed} on Rao's embeddings for
metric spaces of aspect ratio $\Phi$ to obtain the claimed
distortion bound.
\end{proof}

We are now ready to prove the metric Ramsey theorem:

\setcounter{theorem}{\theThmMetricRamseyTheoremDecomp}
\begin{theorem}
Let $X$ be a metric space. There exists $C>0$ such that for every
$1 \leq p \leq \infty$, and any $\alpha > 1$:
\[
\XX_{L_p}(X,\alpha) \ge 1-C\left(\frac{\tau_X}{\alpha}\right)^p
p\log\left(\frac{\tau_X}{\alpha}\right).
\]
\end{theorem}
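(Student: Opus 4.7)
The plan is to combine the Ramsey decomposition of Lemma~\ref{lem:decomp} with the embedding of composition-closure metrics provided by Corollary~\ref{cor:composition-Phi}. Specifically, I will show that a mild modification of the recursive construction of Section~\ref{sec:ramsey} produces a Ramsey subspace $S(X)$ that (up to a constant factor) lies in $\comp_2({\bf \Phi})$ for $\Phi = 2^{\Theta((\alpha/\tau_X)^p)}$. Corollary~\ref{cor:composition-Phi} then embeds $S(X)$ into $L_p$ with distortion $O(\tau_X(\log\Phi)^{1/p}) = O(\alpha)$, and a refined weighted Ramsey argument gives the desired size bound.

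To obtain the composition structure, I would modify the recursion on a subspace $Z$ as follows. Instead of a single binary split at scale $\ddiam(Z)/2$, fix a coarser scale $\DDelta = \ddiam(Z)/\Phi_0$ (for an appropriate $\Phi_0 = \Theta(\Phi)$) and iterate Lemma~\ref{lem:decomp} on successive remainders to carve out a cover of $Z$ by clusters $Q_1,\dots,Q_k$ of diameter at most $\DDelta$ with padded cores $P_i \subseteq Q_i$ satisfying $d(P_i, Q_j) \geq \DDelta/(4t)$ for every $j \neq i$. Fixing a representative $m_i \in P_i$ gives a meta-metric $M_Z$ of aspect ratio $O(t\Phi)$, which, after a rescaling to enforce the $\beta' \geq 2$ separation required by $\comp_2$, serves as the top level of a $\beta'$-composition. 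The algorithm recurses on each $P_i$; inductively $S(P_i) \in \comp_2({\bf \Phi})$, so $S(Z) := M_{Z,\beta'}[\{S(P_i)\}]$ (up to constants) also lies in $\comp_2({\bf \Phi})$.

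The size analysis extends the weight-telescoping of Theorem~\ref{thm:ramsey-ultrametrics} to the iterated peeling. Every peel contributes a multiplicative loss of $(\bsize(X)/\bsize(Q_i))^{-1/t}$; summing over all peels in a single invocation and telescoping across the recursion tree (whose depth is now of order $(\log\text{aspect}(X))/\log\Phi$ instead of $\log\text{aspect}(X)$), the cumulative weighted subspace size satisfies $w^\XX(S(X)) \geq w(X)^\XX$ with $\XX = 1 - \Theta((\log\Phi)/t)$. Substituting $\log\Phi = \Theta((\alpha/\tau_X)^p)$, and choosing the Ramsey parameter $t = \Theta(p \log(\tau_X/\alpha))$ to balance the logarithmic slack introduced by Corollary~\ref{cor:composition-Phi}, yields $\XX \geq 1 - C(\tau_X/\alpha)^p\, p \log(\tau_X/\alpha)$, matching the theorem.

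The main obstacle is reconciling the padding ratio $1/(4t) \leq 1/8$ that Lemma~\ref{lem:decomp} natively provides with the $\beta' \geq 2$ scale separation demanded by the $\comp_2$ closure: one must either shrink each $Q_i$ to a sub-cluster of diameter at most $d(P_i, Z\setminus Q_i)/2$ or rescale the composition in a way that absorbs the factor $4t$. Both variants cost a constant in the weight guarantee but --- crucially --- must be tracked through the telescoping so that the extra $p \log(\tau_X/\alpha)$ factor in the exponent emerges with correct constants. A secondary subtlety is that after peeling $Q_i$ the remainder $Z \setminus \bigcup_{j\leq i} Q_j$ may still have diameter $\ddiam(Z)$, so the hypothesis $\DDelta \leq \ddiam(\cdot)/2$ of Lemma~\ref{lem:decomp} must be re-verified at each peel; this is immediate as long as $\Phi_0 \geq 2$.
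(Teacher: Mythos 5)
Your overall strategy coincides with the paper's at a high level: both proofs funnel the metric into $\comp_2({\bf \Phi})$ for $\Phi = 2^{\Theta((\alpha/\tau_X)^p)}$ and then invoke Corollary~\ref{cor:composition-Phi} (Rao's embedding plus Lemma~\ref{lem:composition-embed}) to land in $L_p$. Where you diverge is in how the composition structure is reached. The paper does \emph{not} re-engineer the decomposition: it first applies Theorem~\ref{thm:ramsey-ultrametrics} as a black box to get an ultrametric Ramsey subspace with exponent $1-O(1/\alpha')$, then uses Lemma~3.15 of \cite{BLMN03} to pass from the ultrametric to a $k$-HST with exponent loss $O(\log\alpha'/\alpha')$, and then Lemma~3.16 of \cite{BLMN03} to observe that the resulting subspace is $2$-equivalent to a member of $\comp_2({\bf \Phi})$ with $\Phi\le e^{2\alpha'}$. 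You instead try to build the composition structure directly by iterated peeling of Lemma~\ref{lem:decomp}.

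There are concrete gaps in the direct route. First, the composition metric $M_\beta[\N]$ assigns \emph{constant} inter-cluster distances $\beta\gamma\,d_M(x,y)$, whereas your peeled cores $P_i$ have inter-core distances that fluctuate; the separation $\DDelta/(4t)$ versus cluster diameter $\DDelta$ leaves a factor of $\Theta(t)$ between the actual pairwise distances and the composition distances, and you do not account for how this factor propagates into the final distortion (the paper's Lemmas~3.15/3.16 absorb exactly this slack and keep only a $\log$ in the exponent). Second, the claim $\XX = 1-\Theta((\log\Phi)/t)$ is not established. The telescoping in the proof of Theorem~\ref{thm:ramsey-ultrametrics} yields $\XX=1-1/t$ \emph{independently of recursion depth}; shortening the recursion by a factor $\log\Phi$ does not by itself buy an extra $\log\Phi$ in the exponent. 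The $\log\alpha'$ factor in the exponent of the paper's bound comes specifically from the ultrametric$\to$$k$-HST reduction (Lemma~3.15), for which you have no analogue. Third, and most tellingly, the final parameter substitution does not match the theorem: with $\log\Phi = \Theta\bigl((\alpha/\tau_X)^p\bigr)$ and $t = \Theta\bigl(p\log(\alpha/\tau_X)\bigr)$, your formula gives $\XX = 1-\Theta\bigl((\alpha/\tau_X)^p/(p\log(\alpha/\tau_X))\bigr)$, which for $\alpha\gg\tau_X$ is vacuous (the loss term exceeds $1$), whereas the theorem's bound is $1 - C(\tau_X/\alpha)^p\,p\log(\alpha/\tau_X)$, a loss that goes to $0$. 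So the balancing is off by roughly $(\alpha/\tau_X)^{2p}/(p\log)^2$. To fix the argument you would essentially need to re-prove the $k$-HST conversion (Lemma~3.15), at which point the paper's modular route is both simpler and correct.
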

\setcounter{theorem}{\thetmp}
\begin{proof}
  Let $\alpha' = \left(\frac{\tau_X}{\alpha}\right)^p/c$ (we may assume $\alpha' \geq
  1$), where $c$ is a constant to be set later.
We first use Theorem~\ref{thm:ramsey-ultrametrics} to obtain a
subspace $X'$ of $X$ which is $\alpha'$ equivalent to an
ultrametric $Y'$ and satisfies the weighted Ramsey condition with
$\XX_\um(X,\alpha') \geq 1-\frac{C}{\alpha'}$.

We be apply Lemma 3.15 of \cite{BLMN03} we obtain a subspace $Y''
\subseteq Y'$ which is $\alpha''$ equivalent to a $k$-HST where
$k=2\alpha'\alpha''$ and $Y''$ satisfies the weighted Ramsey
condition with $\XX_{{\rm k-HST}}(\um,\alpha'') \geq
1-\frac{\log(k/\alpha'')}{\log\alpha''} \geq 1-2\frac{\log
\alpha'}{\log \alpha''} \geq 1-2\frac{\log \alpha'}{\alpha'}$
where we let $\alpha'' = \exp(\alpha')$.

It follows that there exists a subspace $X'' \subseteq X'$ that is
$\alpha'\alpha''$ equivalent to a $k=2\alpha'\alpha''$-HST and
satisfies the weighted Ramsey condition with
$\XX_\um(X,\alpha')\XX_{{\rm k-HST}}(\um,\alpha'') \geq
1-2\frac{\log \alpha'}{\alpha'}$.

Now, applying Lemma 3.16 of \cite{BLMN03} we get that $X''$ is
2-equivalent to a metric space in $\comp_2({\bf \Phi})$, where
$\Phi = \alpha'\alpha'' \leq \exp(2\alpha')$.

Finally, using Corollary~\ref{cor:composition-Phi} we get that
$X''$ is $2c'\tau_X(2\alpha')^{1/p}$ equivalent to a subspace of
$L_p$, where $c'>0$ is a constant. Hence for an appropriate choice
of $c$ we get that $X''$ is $\alpha$ equivalent to a subspace of
$L_p$.
\end{proof}

\section{Deterministic Embeddings}

In \cite{ABN06} a method for embedding finite metric spaces was
developed which unifies many metric embedding results as well as
improving and strengthening some of the known embeddings into
$L_p$. Here, we describe how to construct deterministic embeddings
using our Ramsey decompositions.

Recall the following definition:
\begin{definition}[Partition]
Let $(X,d)$ be a finite metric space. A partition $P$ of $X$ is a
collection of disjoint sets ${\cal C}(P) = \{ C_1, C_2, \ldots,
C_t \}$ such that $X = \cup_j C_j$. The sets $C_j \subseteq X$ are
called clusters. For $x \in X$ we denote by $P(x)$ the cluster
containing $x$. Given $\Delta>0$, a partition is $\Delta$-{\em
bounded} if for all $1 \leq j \leq t$, $\diam(C_j) \leq \Delta$.
\end{definition}

We first construct a bundle of $\hat{\Delta}$-bounded partition of
the metric space by consecutively applying lemma~\ref{lem:decomp}.
However for the purpose of achieving a partition with properties
similar to those of the probabilistic partitions of \cite{ABN06}
we need to choose the ``padding parameter" $\eta$
more carefully. Specifically, we will let $\eta$ depend on the
choice of the point $v$ in the Ramsey decomposition procedure.
That is set $\eta = \log(1/\delta)/\min \{ \log
(\bsize(X)/\bsize(Q)), 2^6 \}$, where $0 < \delta <1$. It follows
that the decomposition creates a partition $(Q,\bQ)$ and a set $P$
such that $w(P) \geq w(Q) \cdot \delta$ and $d(P,Q) \geq \eta\cdot
\hat{\Delta}$. We define $C=Q$ to be a cluster in the partition
and let $\eta(C) = \eta$. We keep applying the lemma on $\bQ$
until $\Delta(\bQ) \leq \hat{\Delta}$. This defines the first
partition in the bundle. We then set $X'$ to be $X$ after all the
core sets $P$ have been removed and repeat the process for $X'$.
This is repeated until $X'$ is empty. It follows that this happens
after repeating $O(\log n/\delta)$ times. We obtain the following
lemma:

\begin{lemma}[Explicit Padded Partitions]\label{lem:det-padded-partition}
  For every $n$-point metric space $X$ it is possible to efficiently
construct deterministically a bundle of  $O(\log n/\delta)$
$\hat{\Delta}$-bounded partitions such that for every $x \in X$
there is a cluster $C$ in the bundle such that $x \in C$ and
$d(x,X/C) \geq \eta(C) \hat{\Delta}$.
\end{lemma}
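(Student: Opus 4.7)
The plan is to follow exactly the construction outlined in the paragraph preceding the lemma, and to justify its two claims: bounded number of partitions, and for every point a cluster in which it sits deeply padded. I would proceed in three layers: a single cluster extraction, a full bounded partition, and an iterated bundle.

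First I would formalize the single-cluster step. Given a remaining set $Z$, I apply Lemma~\ref{lem:decomp} with the stated choice $\eta = \log(1/\delta)/\min\{\log(\bsize(X)/\bsize(Q)),\,2^6\}$; this choice ensures that (i) the padding satisfies $d(P,\bQ)\geq \eta\hat{\Delta}$, and (ii) the ratio bound in the lemma translates into $w(P) \geq w(Q)\cdot \delta$. The first claim is immediate from Lemma~\ref{lem:decomp} since $d(P,\bar Q)\geq \hat\Delta/(4t)$ with the value of $t$ induced by $\eta$. The second is the key trade-off: the choice of $\eta$ is precisely calibrated so that $\bigl(\bsize(X)/\bsize(Q)\bigr)^{-1/t}\geq \delta$, using the capped logarithm to handle the degenerate case where the ratio is very small (in which essentially any $\eta$ works, so we take the constant $2^{-6}$). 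I would record $\eta(C):=\eta$ for this extracted cluster $C=Q$.

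Second, I would build one $\hat\Delta$-bounded partition by iteration of the single-cluster step. Starting from the full space, at each stage I apply the above to the current residual $\bar Q$-chain until its diameter drops below $\hat\Delta$, producing a sequence of clusters $C_1,C_2,\ldots$ with padded cores $P_1\subseteq C_1, P_2\subseteq C_2,\ldots$. Because clusters are always $\hat\Delta$-bounded and pairwise disjoint (each new cluster is drawn from the residual), this yields a valid $\hat\Delta$-bounded partition. Summing $w(P_i)\geq \delta\, w(C_i)$ over $i$ gives $w(\bigcup_i P_i)\geq \delta\,w(Z)$, so a $\delta$-fraction of the current weight becomes covered (i.e., placed into a padded core).

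Third, I iterate the partition construction: set $X_0=X$, construct a partition on $X_j$, then let $X_{j+1}=X_j\setminus\bigcup_i P_i^{(j)}$. By the previous paragraph $w(X_{j+1})\leq (1-\delta)\,w(X_j)$, so after $O(\log n/\delta)$ rounds the residual has measure less than one and hence is empty (with unit weights). This bounds the bundle size. For any $x\in X$, the round in which $x$ is first removed is the round in which $x$ lies in some padded core $P\subseteq C$ with $d(P,\bar Q)\geq \eta(C)\hat\Delta$, hence $d(x,X\setminus C)\geq d(P,\bar Q)\geq \eta(C)\hat\Delta$, which is exactly the padded partition property.

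The only subtle point — and the part I would write most carefully — is the calibration of $\eta$ in the first step: one must verify that the capped logarithm behaves correctly at both extremes so that the padding bound and the weight gain $\delta$ hold simultaneously, and that the geometric decay of $w(X_j)$ really forces $X_j=\emptyset$ after $O(\log n/\delta)$ rounds. The rest is bookkeeping on diameters and disjointness; no new metric geometry is required beyond Lemma~\ref{lem:decomp}.
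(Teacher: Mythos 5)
Your construction and overall plan do match the paper's (the paper only sketches the construction informally in the paragraph before the lemma and does not spell out a proof), but the crucial padding step in your third paragraph is not correct as written, and this is precisely the point the paper glosses over and that you should have worked out carefully rather than asserted.

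You claim that if $x$ is removed in round $j$ because $x\in P\subseteq C=Q$ for some application of Lemma~\ref{lem:decomp}, then $d(x,X\setminus C)\geq d(P,\bQ)\geq \eta(C)\hat\Delta$. But $\bQ$ in Lemma~\ref{lem:decomp} is the complement of $Q$ \emph{inside the space the lemma was applied to}, i.e.\ the current residual $Z$ of the carving chain. We have $\bQ = Z\setminus Q \subsetneq X\setminus C$, and for $A\subseteq B$ one has $d(x,A)\geq d(x,B)$, so $d(P,\bQ)\geq\eta\hat\Delta$ gives an inequality on the \emph{smaller} complement and says nothing about $d(x,X\setminus C)$. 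Concretely, $X\setminus C$ also contains (i) the clusters $Q_1,\dots,Q_{i-1}$ carved earlier \emph{in the same round}, and (ii) the cores removed in \emph{previous rounds}, i.e.\ $X\setminus X_j$. The ball structure in Lemma~\ref{lem:decomp} guarantees that $x\in P_i$ is far from $\bQ_i = Z_{i-1}\setminus Q_i$ (and, with a little more care, far from earlier cores $P_{i'}$, since $x\notin B(v_{i'},r_{i'})$ while $P_{i'}\subseteq B(v_{i'},r_{i'}')$), but it gives no lower bound on $d(x,Q_{i'}\setminus P_{i'})$ for $i'<i$, nor on $d(x, X\setminus X_j)$. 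So the step ``hence $d(x,X\setminus C)\geq d(P,\bQ)$'' is simply the wrong direction of the containment inequality, and the padding claim of the lemma does not follow from what you wrote. To fix this you must either (a) run the carving on all of $X$ in every round with the weight function $w$ restricted to $X_j$ (the paper hints at exactly this device in the Ramsey-embedding section, where it replaces $w$ by $w_C$), \emph{and} argue that the padding with respect to earlier-carved pieces of the same round still holds, or (b) prove a suitably weakened statement where the padding is measured against the residual rather than against $X$, and check that this still suffices for the intended application.

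A secondary issue: in your second paragraph you sum $w(P_i)\geq\delta\,w(C_i)$ and conclude $w(\bigcup_i P_i)\geq \delta\,w(Z)$. But the carving chain terminates with a final residual cluster $C_m$ of diameter at most $\hat\Delta$ to which Lemma~\ref{lem:decomp} is never applied; $C_m$ has no padded core and contributes nothing to $\bigcup_i P_i$. So what you actually get is $w(\bigcup_i P_i)\geq\delta\bigl(w(Z)-w(C_m)\bigr)$, and you need a separate argument (e.g.\ treating $C_m$ itself as a core, or bounding $w(C_m)$) before you can conclude the geometric decay $w(X_{j+1})\leq(1-\delta)w(X_j)$ that drives the $O(\log n/\delta)$ bound on the number of rounds.
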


This replaces the use of the uniform probabilistic partitions in \cite{ABN06}. It can be shown that the properties of these partitions provide similar qualities necessary for their main theorem.
The other randomness used in their proof is for choosing $O(\log
n)$ independent $0,1$ valued random variables for each cluster.
These can be replaces with assigning binary code words of length
$O(\log n)$. Now, using this and
lemma~\ref{lem:det-padded-partition} we can use the framework of
\cite{ABN06} to define the partition-based embeddings of
\cite{ABN06,ABN-doubling, ABN15-local} in order to obtain
deterministic constructions of embeddings into $L_p$. In
particular we get a deterministic embedding of general metric
spaces into $L_p$ in $O(\log n)$ dimensions with $O(\log n)$
distortion, $O(1)$ average distortion, and $O(q)$
$\ell_q$-distortion.

\section{Multi-Embedding}

In this section we obtain multi-embeddings of metric spaces into
ultrametrics with optimal path-distortion. The construction is
analogous to that of Theorem~\ref{thm:ramsey-ultrametrics}
combined with the proof in \cite{bm}. We first give the following
variant of Lemma~\ref{lem:decomp}:
\begin{lemma}
\label{lem:decomp-multi} Given a metric space $X$, and $0 <
\DDelta \leq \ddiam(X)/4$, and integer $t \geq 2$, then there exists a
partition $(Q,\bar{Q})$ of $X$, and $P \subseteq Q$, such that:
$|Q| \leq |X|/2$, $\ddiam(Q) \leq \DDelta$, $d(P,\bQ) \geq \DDelta / (8t)$, and
\begin{eqnarray*}
w(P) \geq w(Q) \cdot \left( \frac{\bsize(X)}{\bsize(Q)}
\right)^{-1/t}.
\end{eqnarray*}
\end{lemma}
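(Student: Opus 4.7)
The plan is to run a close variant of the argument used for Lemma~\ref{lem:decomp}, with the sole new ingredient being a restriction on the choice of center that guarantees $|Q|\le |X|/2$. The hypothesis $\DDelta\le\ddiam(X)/4$ is present precisely to make such a restriction feasible.

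First I would use the diameter hypothesis to produce a legal candidate center. Pick any pair $u_1,u_2\in X$ with $d(u_1,u_2)=\ddiam(X)\ge 4\DDelta>2\DDelta$; then $B(u_1,\DDelta)\cap B(u_2,\DDelta)=\emptyset$, so at least one of these balls contains $\le |X|/2$ points. Hence the set
\[
V^{*}\;=\;\{v\in X : |B(v,\DDelta)|\le |X|/2\}
\]
is non-empty. Among $v\in V^{*}$ I would select one that minimizes the ratio $w(B(v,\DDelta/2))/w(B(v,\DDelta/4))$, and then run the same annulus-peeling as in Lemma~\ref{lem:decomp}: set $Q_i=B(v,(1+i/t)\DDelta/4)$ for $i=0,\dots,t$, and by a telescoping product pick an index with $w(Q_i)\le w(Q_{i-1})\bigl(w(Q_t)/w(Q_0)\bigr)^{1/t}$, taking $Q=Q_i$ and $P=Q_{i-1}$. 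Directly from the construction, $\ddiam(Q)\le\DDelta$, $d(P,\bQ)\ge\DDelta/(4t)\ge\DDelta/(8t)$, and $|Q|\le |B(v,\DDelta/2)|\le |B(v,\DDelta)|\le |X|/2$.

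The main obstacle is the final inequality
\[
w(P)\;\ge\;w(Q)\cdot(\bsize(X)/\bsize(Q))^{-1/t}.
\]
In Lemma~\ref{lem:decomp} this followed because the ratio-minimizing $v$ was compared against the $\bsize$-witness $u$ (the vertex maximizing $w(B(u,\ddiam(Q)/4))$) over the entire space. Here the minimization is only over $V^{*}$, so when the $\bsize$-maximizer $u$ lies outside $V^{*}$ (i.e.\ $|B(u,\DDelta)|>|X|/2$), the inequality $w(B(v,\DDelta/2))/w(B(v,\DDelta/4))\le w(B(u,\DDelta/2))/w(B(u,\DDelta/4))$ is not automatic. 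My plan to handle this is to exploit the fact that $\DDelta\le\ddiam(X)/4$ implies $\bsize(X)\ge w(B(u,\DDelta))$, i.e.\ a wider ball than in the original lemma. The factor-of-two slack between the derived distance $\DDelta/(4t)$ and the claimed $\DDelta/(8t)$ then lets me peel over a slightly shifted radius window (e.g.\ $[\DDelta/8,\DDelta/4]$ with annulus width $\DDelta/(8t)$) so that the weight ratio at $v$ can be charged against the wider-radius ratio $w(B(u,\DDelta))/w(B(u,\DDelta/4))$ at $u$, which is bounded by $\bsize(X)/\bsize(Q)$ regardless of whether $u\in V^{*}$.

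The remaining verifications — that $Q$ and $\bQ$ partition $X$, that $P\subseteq Q$, and the diameter and separation bounds — are immediate from the ball construction, exactly as in Lemma~\ref{lem:decomp}.
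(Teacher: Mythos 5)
Your approach is genuinely different from the paper's and, as sketched, it has a gap that I do not see how to close.

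The paper's proof does not modify the center-selection step inside Lemma~\ref{lem:decomp} at all. Instead it first picks a diametral pair $u_1,u_2$ and observes that one of the open balls of radius $\ddiam(X)/2$ around them, say $B$ around $u_2$, contains at most $|X|/2$ points; it sets $X'=B\cup\{y\}$ where $y$ is the nearest neighbor of $u_2$ outside $B$, notes that $\ddiam(X')\ge\ddiam(X)/2\ge 2\DDelta$, and then applies Lemma~\ref{lem:decomp} as a black box to $X'$. Since $\ddiam(Q)\le\DDelta<\ddiam(X')$, necessarily $Q\subsetneq X'$, so $|Q|\le |X'|-1\le |X|/2$. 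Both the ratio-minimizer $v$ and the $\bsize$-witness $u$ live in the same restricted space $X'$, so the comparison $w(B(v,\DDelta/2))/w(B(v,\DDelta/4))\le w(B(u,\DDelta/2))/w(B(u,\DDelta/4))$ goes through exactly as in Lemma~\ref{lem:decomp}. Your proposal instead keeps the space $X$ intact but restricts the center $v$ to a set $V^{*}$, which is a genuinely different route.

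The gap is in the final charging step. You need $\bigl(\text{annulus ratio at }v\bigr)\le\bsize(X)/\bsize(Q)$, and the route you describe is to pass through the $\bsize$-witness $u\in Q$ of $Q$ via $\bsize(X)/\bsize(Q)\ge w(B(u,\DDelta))/w(B(u,\DDelta/4))$. That second inequality is indeed valid and uses $\DDelta\le\ddiam(X)/4$ exactly as you say. But you are still left needing $w(B(v,\DDelta/4))/w(B(v,\DDelta/8))\le w(B(u,\DDelta))/w(B(u,\DDelta/4))$, a comparison across two different centers and two different radius windows, and the minimality of $v$ does not supply it: $v$ only minimizes the annulus ratio over $V^{*}$, and $u$ need not lie in $V^{*}$. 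Nor does a containment argument save it: from $u\in Q\subseteq B(v,\DDelta/4)$ you get $d(u,v)\le\DDelta/4$, which yields $w(B(v,\DDelta/4))\le w(B(u,\DDelta/2))\le w(B(u,\DDelta))$ for the numerators, but for the denominators you would need $B(u,\DDelta/4)\subseteq B(v,\DDelta/8)$, which fails since $d(u,v)$ can be as large as $\DDelta/4>\DDelta/8$. The "factor-of-two slack" in the separation bound gives you room to narrow the annulus width, but it does not by itself buy any ratio comparison between $v$ and $u$; that was carried in Lemma~\ref{lem:decomp} solely by $v$ minimizing over the entire space, which you have given up. You would be better served by the paper's restriction-to-a-subspace device, which keeps the minimizer and the $\bsize$-witness in the same space and leaves the internal argument of Lemma~\ref{lem:decomp} untouched.
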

\begin{proof}
Let $u,v \in X$ be two points such that $d(u,v) = \ddiam(X)$. Then one of the open balls of radius $\ddiam(X)/2$ around either $u$ or $v$ contains at most $|X|/2$ points. Assume w.l.o.g this holds for $v$ and let $B$ be the associated ball, and let $X'$ be composed of $B$ and the nearest neighbor of $v$ in $X\setminus B$. Then $\Delta(X') \geq \ddiam(X)/2$, so that $\DDelta \leq \Delta(X')/2$. Now, apply Lemma~\ref{lem:decomp} on $X'$. Note that $\Delta(Q) \leq \DDelta < \Delta(X')$, so that $|Q| \leq |X'|-1 \leq |X|/2$.
\end{proof}

\setcounter{theorem}{\theThmMultiEmbedding}
\begin{theorem}
\ThmMultiEmbedding
\end{theorem}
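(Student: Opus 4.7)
The plan is to construct the multi-embedding by a recursion that is essentially identical to that of Theorem~\ref{thm:ramsey-ultrametrics}, but uses Lemma~\ref{lem:decomp-multi} (so that $|Q|\le|Z|/2$ bounds the recursion depth in $n$ as well as in $\Phi$), and duplicates the ``boundary'' points $Q\setminus P$ on both sides of the partition so that every point of $X$ is covered rather than only a Ramsey subspace. Concretely, I set $t=\Theta(1/\e)$, work on $Z\subseteq X$ with a scale $\DDelta\le\ddiam(Z)/4$, apply Lemma~\ref{lem:decomp-multi} to obtain $(Q,\bQ)$ and $P\subseteq Q$, and recurse on $Q$ and on $\bR := (Q\setminus P)\cup\bQ$ to obtain trees $U(Q)$ and $U(\bR)$ which I join under a root labelled $\ddiam(Z)$. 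Each $x\in Q\setminus P$ thereby gets copies in both subtrees.

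For the path distortion bound I walk along an arbitrary path $x_1,\dots,x_k$ in $Z$ and build a corresponding path in $U(Z)$ edge by edge. For an edge $(x_i,x_{i+1})$ with both endpoints on the same side of the split, I use the recursive path. For an edge crossing the split there are two subcases: if the $Q$-side endpoint belongs to $Q\setminus P$, I route through its duplicated copy in $U(\bR)$ and pay no stretch at this level; if it belongs to $P$ then by Lemma~\ref{lem:decomp-multi} we have $d(x_i,x_{i+1})\ge d(P,\bQ)\ge\DDelta/(8t)$, while the tree distance is at most $2\DDelta$, so the local stretch is $O(t)$. Since the recursion depth is $L=O(\min\{\log n,\log\Phi\})$ (from $|Q|\le|Z|/2$ and from halving $\DDelta$), the overall path distortion is $O(tL)=O(\min\{\log n,\log\Phi\}/\e)$.

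For the size bound I imitate the weighted-Ramsey induction of Theorem~\ref{thm:ramsey-ultrametrics}. Let $S(Z)$ denote the number of leaves of $U(Z)$. I will show by induction on $|Z|$ an invariant of the form
\[
S(Z)\ \le\ |Z|\cdot\bigl(\bsize(X)/\bsize(Z)\bigr)^{1/t},
\]
which at the top gives $S(X)\le|X|^{1+1/t}=n^{1+\e}$ since $\bsize(X)\le|X|$. The base $|Z|=1$ is trivial, and the inductive step decomposes $S(Z)=S(Q)+S(\bR)$ and uses $|P|\ge|Q|(\bsize(X)/\bsize(Q))^{-1/t}$, which is precisely the Ramsey inequality of Lemma~\ref{lem:decomp-multi}; algebraically this is exactly the same telescoping used in Theorem~\ref{thm:ramsey-ultrametrics}, only with the roles of ``kept'' and ``duplicated'' swapped so that a total count rather than a subspace count is controlled.

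The main technical obstacle, and the only place where extra care beyond the Ramsey argument is required, is making the spherical-weight bookkeeping consistent for the duplicated side: $\bR$ is not a ball of $X$ but the mixture of a boundary strip and the complement of $Q$, so I need $\bsize(\bR)\le\bsize(Z)$ together with the standard inequality $\bsize(Q)\le\bsize(Z)$ in order for the inductive bound to telescope. Once these two monotonicity properties are checked, the combination of the $O(t)$-per-level stretch and the Ramsey-telescoping size bound yields the claimed $n^{1+\e}$ vs.\ $O(\min\{\log n,\log\Phi\}/\e)$ trade-off.
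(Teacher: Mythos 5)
Your construction is the same as the paper's (recursing on $Q$ and on $Z\setminus P=(Q\setminus P)\cup\bar Q$ so that $Q\setminus P$ is duplicated), and the high-level idea --- swap the roles of ``kept'' and ``discarded'' in the Ramsey telescoping --- is the right one. But the size invariant you propose is wrong, and the path-distortion argument skips the step that actually carries the proof.

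\textbf{The size invariant.} You claim the invariant $S(Z)\le |Z|\cdot(\bsize(X)/\bsize(Z))^{1/t}$. Plugging $Z=X$ gives $S(X)\le |X|$, not $|X|^{1+1/t}$, so the invariant as written cannot be what you intend (and indeed it is false, since duplication makes $S(X)>|X|$). Moreover it does not telescope: Lemma~\ref{lem:decomp-multi} applied to $Z$ gives $|Q|\le |P|\cdot(\bsize(Z)/\bsize(Q))^{1/t}$ with the ratio relative to $Z$, not $X$, and combining this with the inductive bound on $S(Q)$ would require $\bsize(Z)\le\bsize(Q)$, which is the wrong inequality. The same problem appears on the $\bar P$ side: $\bsize(\bar P)\le\bsize(Z)$ makes $(\bsize(X)/\bsize(\bar P))^{1/t}$ \emph{larger} than $(\bsize(X)/\bsize(Z))^{1/t}$, so the upper bound does not descend. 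The correct invariant, which is the sign-flipped analogue of $w^\psi(S(Z))\ge w(Z)\,\bsize(Z)^{-1/t}$ from Theorem~\ref{thm:ramsey-ultrametrics}, is $S(Z)\le |Z|\cdot\bsize(Z)^{1/t}$ (equivalently $w^\psi(U(Z))\le w(Z)\,\bsize(Z)^{\e}$ with $\psi=1+1/t$). This telescopes because $|Q|\,\bsize(Q)^{1/t}\le |P|\,(\bsize(Z)/\bsize(Q))^{1/t}\bsize(Q)^{1/t}=|P|\,\bsize(Z)^{1/t}$ and $\bsize(\bar P)\le\bsize(Z)$, giving $S(Z)\le(|P|+|\bar P|)\bsize(Z)^{1/t}=|Z|\bsize(Z)^{1/t}$, and at the root $\bsize(X)\le|X|$ yields $S(X)\le|X|^{1+1/t}$. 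This is exactly the computation in the paper.

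\textbf{The path distortion.} Your per-edge accounting (``pay $O(t)$ when the $Q$-side endpoint is in $P$, pay nothing by rerouting through a duplicate when it is in $Q\setminus P$'') is the right intuition, but it is not a proof: the choice of copy of each $x_i$ must be consistent between the two edges incident to $x_i$, and when an interior vertex lies in $Q\setminus P$ at one level but needs to be approached from opposite subtrees for its two incident edges, the rerouting does cost a tree edge at that level. Summing these contributions to get $O(tL)$ rather than something worse requires a careful charging argument across levels, and the claim that the depth relevant to a single path edge is $O(\min\{\log n,\log\Phi\})$ uses both $|Q|\le|Z|/2$ and the geometric decay $\Delta(U(Q))\le\Delta(U(Z))/4$ in a nontrivial way. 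The paper does not redo this; it verifies exactly the three properties needed (separation $d(Q,\bar P)\ge\Omega(\e)\cdot\Delta(U(Z))$, size halving on the $Q$ branch, diameter quartering on the $Q$ branch) and invokes the charging argument of \cite{bm}. You should either do the same or reproduce that charging argument, because ``overall path distortion is $O(tL)$'' does not follow from a per-edge bound alone.
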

\setcounter{theorem}{\thetmp}
\begin{proof}
Let $X$ be an arbitrary metric space. Let $Z \subseteq X$. We will
construct a multi-embedding of $Z$ into an ultrametric $U(Z)$
recursively as follows: use the decomposition described in
Lemma~\ref{lem:decomp-multi} with $t = \lceil 1/\e \rceil$ and
$\DDelta=\ddiam(Z)/4$ to obtain a partition of the graph $(Q,\bQ)$
and $P \subset Q$ satisfying Lemma~\ref{lem:decomp}. Run the
algorithm on $Q$ and $\bP = Z\setminus P$ recursively, obtaining
multi-embedding into ultrametrics $U(Q)$ and $U(\bP)$
respectively. The ultrametric $U(Z)$ is constructed by creating a
root $r$ labeled with $\Lambda(r)=\ddiam(Z)$ with two children at
which we root the trees defining $U(Q)$ and $U(\bP)$.

Let $\xx = 1+1/t \leq 1+\e$. Next, we prove by induction on the size
of $Z \subseteq X$ that
\[
w^\xx(U(Z)) \leq w(Z)\cdot \bsize(Z)^{\e}.
\]
If $Z=\{z\}$ includes a single point $z$ then the claim trivially
holds since $w^\xx(U(Z))= w(Z) \bsize(Z)^{\e} = w(z)^\xx$. By
applying the induction hypothesis and Lemma~\ref{lem:decomp} we
obtain
\begin{eqnarray*}
w^\xx(U(Z)) & = & w^\xx(U(Q)) + w^\xx(U(\bP)) \leq w(Q)\cdot
\bsize(Q)^{\e} + w(\bP)\cdot \bsize(\bP)^{\e} \\
& \leq & w(P) \cdot \left(\frac{\bsize(Z)}{\bsize(Q)}\cdot
\bsize(Q)\right)^{\e} + w(\bP)\cdot
\bsize(Z)^{\e} \\
& \leq & (w(P) + w(\bP)) \cdot \bsize(Z)^{\e} = w(Z) \cdot
\bsize(Z)^{\e}.
\end{eqnarray*}
Noting that $\bsize(X) \leq w(X)$ we conclude that $w^\xx(U(X))
\leq w(X)^\xx$.

In particular for $w(X)=w^\xx(X) = |X|$ we get that the size of
the ultrametric $U(X)$ is bounded by $n^{1+\e}$.

In addition the multi-embedding we constructed of $X$ into $U(X)$
has the property that the subtrees of $U(X)$ correspond to
subspaces of $X$. At every level of $U(X)$ we have a tree $T=U(Z)$
for some subspace $Z$ and $T$ is split into two subtrees $T_1 =
U(Q)$ and $T_2 = U(\bP)$ defined by the decomposition of
Lemma~\ref{lem:decomp-multi}. Hence $d(Q,\bP) \geq \DDelta/(8t)
= \Delta(Z)/(32t) \geq \e/64 \cdot \Delta(U(Z))$. We also
have that $|P|\leq |Z|/2$ and $\Delta(U(P)) = \Delta(P) \leq
\DDelta = \Delta(Z)/4 = \Delta(U(Z))/4$. In \cite{bm} it is shown
that these properties imply that the path distortion of our
multi-embedding is $O(\min\{\log n,\log \Phi\}/\e)$.
\end{proof}

\bibliographystyle{plain}
\bibliography{art}

\appendix

\end{document}